\documentclass[11pt]{article}

\usepackage[utf8]{inputenc}

\usepackage{macro_yihan}

\title{Tight Bounds on List-Decodable and List-Recoverable\\Zero-Rate Codes}

\author{
    Nicolas Resch\thanks{University of Amsterdam. Email: \href{mailto:n.a.resch@uva.nl}{\texttt{n.a.resch@uva.nl}}.}
    \and
    Chen Yuan\thanks{Shanghai Jiao Tong University. Email: \href{mailto:chen_yuan@sjtu.cn.edu}{\texttt{chen\_yuan@sjtu.cn.edu}}.}
    \and
    Yihan Zhang\thanks{Institute of Science and Technology Austria. Email: \href{mailto:zephyr.z798@gmail.com}{\texttt{zephyr.z798@gmail.com}}.}
}

\begin{document}

\maketitle

\begin{abstract}
    In this work, we consider the list-decodability and list-recoverability of codes in the \emph{zero-rate regime}. Briefly, a code $\cC \subseteq [q]^n$ is $(p,\ell,L)$-list-recoverable if for all tuples of input lists $(Y_1,\dots,Y_n)$ with each $Y_i \subseteq [q]$ and $|Y_i|=\ell$ the number of codewords $c \in \cC$ such that $c_i \notin Y_i$ for at most $pn$ choices of $i \in [n]$ is less than $L$; list-decoding is the special case of $\ell=1$. In recent work by Resch, Yuan and Zhang~(ICALP~2023) the zero-rate threshold for list-recovery was determined for all parameters: that is, the work explicitly computes $p_*:=p_*(q,\ell,L)$ with the property that for all $\eps>0$ (a) there exist infinite families positive-rate $(p_*-\eps,\ell,L)$-list-recoverable codes, and (b) any $(p_*+\eps,\ell,L)$-list-recoverable code has rate $0$. In fact, in the latter case the code has \emph{constant} size, independent on $n$. However, the constant size in their work is quite large in $1/\epsilon$, at least $|\cC|\geq (\frac{1}{\epsilon})^{O(q^L)}$.

    Our contribution in this work is to show that for all choices of $q,\ell$ and $L$ with $q \geq 3$, any $(p_*+\eps,\ell,L)$-list-recoverable code must have size $O_{q,\ell,L}(1/\eps)$, and furthermore this upper bound is complemented by a matching lower bound $\Omega_{q,\ell,L}(1/\eps)$. This greatly generalizes work by Alon, Bukh and Polyanskiy~(IEEE Trans.\ Inf.\ Theory~2018) which focused only on the case of binary alphabet (and thus necessarily only list-decoding). We remark that we can in fact recover the same result for $q=2$ and even $L$, as obtained by Alon, Bukh and Polyanskiy: we thus strictly generalize their work. 

    Our main technical contribution is to (a) properly define a linear programming relaxation of the list-recovery condition over large alphabets; and (b) to demonstrate that a certain function defined on a $q$-ary probability simplex is maximized by the uniform distribution. This represents the core challenge in generalizing to larger $q$ (as a $2$-ary simplex can be naturally identified with a one-dimensional interval). We can subsequently re-utilize certain Schur convexity and convexity properties established for a related function by Resch, Yuan and Zhang along with ideas of Alon, Bukh and Polyanskiy. 

\end{abstract}

\tableofcontents

\section{Introduction} \label{sec:intro}
Given an error-correcting code $\mathcal C \subseteq [q]^n$, a fundamental requirement is that the codewords are sufficiently well-spread in order to guarantee some non-trivial correctability properties. This is typically enforced by requiring that the minimum distance of the code $d = \min\{\disth(\vec c, \vec c') : \vec c \neq \vec c' \in \mathcal C\}$, where $\disth(\cdot,\cdot)$ denotes the Hamming distance (i.e. the number of coordinates on which two strings differ). Note that minimum distance $d$ is equivalent to the following ``packing'' property: if we put a ball of radius $r:=\lfloor d/2\rfloor$ around any point $\vec z \in [q]^n$ -- i.e. we consider the Hamming ball $\ballh(\vec y,r):=\{\vec x \in [q]^n:\disth(\vec x,\vec y)\leq r\}$ -- then all these balls contain at most 1 codeword from $\mathcal C$.

This latter viewpoint can easily be generalized to obtain \emph{list-decodability}, where we now require that such Hamming balls do not capture ``too many'' codewords. That is, for $p \in [0,1]$ and $L \in \bbN$ a code is called $(p,L)$\emph{-list-decodable} if every Hamming ball of radius $pn$ contains less than $L$ codewords from $\mathcal C$. In notation: for all $\vy \in [q]^n$, $|\ballh(\vy,pn)|\leq L-1$.\footnote{Typically the upper bound is $L$, rather than $L-1$. However, for ``impossibility'' arguments this parametrization is more common, as it leads to less cumbersome computations.} This notion was already introduced in the 50's by Elias and Wozencraft~\cite{Elias57,Wozencraft58,elias1991error} but has in the past 20 years seen quite a bit of attention due to its connections to other parts of theoretical computer science~\cite{goldreich1989hard,babai1990bpp,lipton1990efficient,kushilevitz1993learning,jackson1997efficient,sudan2001pseudorandom}.

One can push this generalization further to obtain \emph{list-recoverability}. Here, we consider a tuple of input lists $\bY = (Y_1,\dots,Y_n)$, where each $Y_i \subseteq [q]$ has size at most $\ell$ (for some $\ell \in \bbN$). The requirement is that the number of codewords that ``disagree'' with $\bY$ in at most $pn$ coordinates is at most $L-1$. More formally, if for all $\bY = (Y_1,\dots,Y_n)$ the number of codewords $\vc \in \cC$ such that $|\{i \in [n]:c_i \notin Y_i\}| \leq pn$ is at most $L-1$, the code is called $(p,\ell,L)$\emph{-list-recoverable}. Note that $(p,L)$-list-decodability is nothing other than $(p,1,L)$-list-recoverability. Initially, list-recoverability was abstracted as a useful stepping stone towards list-decoding concatenated codes. However, in recent years this notion has found many connections to other parts of computer, e.g. in cryptography~\cite{HIOS15,holmgren2021fiat}, randomness extraction~\cite{guruswami2009unbalanced}, hardness amplification~\cite{doron2020nearly}, group testing~\cite{INR10,NPR12}, streaming algorithms~\cite{doron2022high}, and beyond.

\paragraph{Rate versus noise-resilience.} Having fixed a desired ``error-tolerance'' as determined by the parameters $p,\ell$ and $L$ we would also like the code $\mathcal C$ to be as large as possible: intuitively, this implies that the code contains the minimal amount of redundancy possible. A fundamental question in coding theory is to understand the achievable tradeoffs between the rate $R := \frac{\log_q |\mathcal{C}|}{n}$ and some ``error-resilience'' property of the code, e.g., minimum distance, list-decodability, or list-recoverability. 

This question in full generality is wide open. Even for the special case of $q=2$ and $L=2$ (i.e. determining the optimal tradeoff between rate and distance for binary codes) is unclear: on the possibility side we have the Gilbert-Varshamov bound~\cite{gilbert1952comparison,varshamov1957estimate} showing $R \geq 1-H_2(p/2)$ is achievable (here, $H_2(x)=-x\log_2x-(1-x)\log_2(1-x)$ is the binary entropy function), while bounds of Elias and Bassalygo~\cite{bassalygo-pit1965} and the linear programming bound~\cite{mrrw1,mrrw2,delsarte-1973} give incomparable and non-tight upper bounds. None of these bounds have been substantially improved in at least 40 years. The situation is even more complicated for larger $q$: for $q=49$ (and larger prime powers) the celebrated algebraic geometry codes of Tsafsman, Vladut and Zink~\cite{tsfasman1982modular} provide explicit codes of higher rate in certain regimes than those promised by the Gilbert-Varshamov bound. 

When one relaxes the question to allow an asymptotically growing list size $L$ then we do have a satisfactory answer: the answer is provided by the list-decoding/-recovery theorem, which states that for all $\eps>0$ there exist $(p,\ell,O(1/\eps))$-list-recoverable codes of rate $1-H_{q,\ell}(p)$ where
\[
    H_{q,\ell}(x):= p\log_q\left(\frac{q-\ell}{p}\right) + (1-p)\log_q\left(\frac{\ell}{1-p}\right)
\]
is $(q,\ell)$-ary entropy function~\cite{resch2020thesis}.\footnote{Note that setting $\ell=1$ recovers the standard $q$-ary entropy function, which itself reduces to the binary entropy function upon setting $q=2$.} On the other hand, any code of rate $R \geq 1-h_{q,\ell}(p)$ fails to be $(p,\ell,L)$-list-recoverable unless $L \geq q^{\Omega(\eps n)}$. However, this does not provide very meaningful bounds if one is interested in, say, $(p,2,5)$-list-recoverable codes.


\paragraph{Positive versus zero-rate regimes.} Thus far, we have implicitly been discussing the \emph{positive-rate regime}. However, one can also ask questions about the behaviour of codes in the \emph{zero-rate regime}. For context, recent work by Resch, Yuan and Zhang~\cite{RYZ22} computed the zero-rate threshold for list-recovery: that is, for all alphabet sizes $q \geq 2$, input list sizes $\ell$ and output list size $L$, they determine the value $p_*(q,\ell,L)$ such that (a) for all $p<p_*(q,\ell,L)$ there exist infinite families of positive rate $(p,\ell,L)$-list-recoverable codes over the alphabet $[q]$, and (b) for all $p>p_*(q,\ell,L)$ there does not exist such an infinite family. 

Having now delineated the ``positive rate'' and the ``zero-rate'' regimes depending on how $p$ compares to $p_*(q,\ell,L)$, in this work we study the zero-rate regime for list-recoverable codes for all alphabet sizes $q$. In \cite{RYZ22}, it is shown that $(p,\ell,L)$-list-recoverable codes $\cC \subseteq [q]^n$ with $p=p_*(q,\ell,L)+\eps$ have constant size (that is, independent of the block length $n$); however, this constant is massive in the parameters due to the use of a Ramsey-theoretic bound. In particular, the dependence on $\eps$ is at least $(1/\eps)^{2q^L}$, and this is additionally multiplied by a tower of 2's of height roughly $L$. 

To the best of our knowledge, prior work on this question focuses exclusively on the $q=2$ case. For example, in the case of $L=2$ (i.e., unique-decoding) we have $p_*(2,1,2)=1/4$, and work by Levenshtein shows A particularly relevant prior work is due to Alon, Bukh and Polyanskiy~\cite{abp-2018}. Herein the authors consider this question for the special case of $q=2$ (and thus, necessarily, only for list-decoding). In particular, they show that when $L$ is even if $p = p_*(2,1,L)+\eps$ then such a $(p,L)$-list-decodable code $\cC \subseteq [2]^n$ has size at most $O_L(1/\eps)$, and moreover provide a construction of such a code with size $\Omega_L(1/\eps)$.\footnote{Note that for the special case of $q=2$, the zero-rate threshold for list-decoding had already been established by Blinovsky~\cite{blinovsky-1986-ls-lb-binary}.} They observe some interesting behaviour in the case of odd $L$; in particular, the maximum size of a $(p_*(2,1,3)+\eps,3)$-list-decodable code is $\Theta(1/\eps^{3/2})$.\footnote{This argument in fact shows a flaw in an earlier claimed proof of Blinovsky that claimed such codes have size $O_L(1/\eps)$ for all $L \in \bbN$.}

Our motivations for this investigation are three-fold. Firstly, the zero-rate regime offers combinatorial challenges and interesting behaviours that we uncover in this work. Secondly, many codes that find applications in other areas of theoretical computer in fact have subconstant rate. Lastly, the zero-rate regime appears much more tractable than the positive rate regime -- indeed, we can obtain tight upper and lower bounds on the size of a code, as we will soon see. It would be interesting to determine to what extent such techniques could be useful for understanding the positive rate regime as well. 

\subsection{Our results.}

Our main result in this work is a tight bound on the size of a $(p,\ell,L)$-list-recoverable code over an alphabet of size $q \geq 3$ when $p > p_*(q,\ell,L)$. The main technical challenge is to compute the following upper bound on the size of such a code. 


\begin{theorem} [Informal Version of \Cref{thm:list-rec-main}] \label{thm:informal-ub}
Let $q,\ell,L \in \bbN$ with $q \geq 3$. $\ell <q$ and $L>\ell$ be fixed constants. Let $\eps>0$ and put $p = p_*(q,\ell,L)+\eps$. Suppose $\cC \subseteq [q]^n$ is $(p,\ell,L)$-list-recoverable. Then $|\cC| \leq O_{q,\ell,L}(1/\eps)$.
\end{theorem}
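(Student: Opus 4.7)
The plan is to lift the Alon--Bukh--Polyanskiy (ABP) strategy for binary list-decoding to the $q$-ary list-recovery setting. Assuming for contradiction that $|\cC| = M \gg C_{q,\ell,L}/\eps$, the goal is to derive a bound of the form $\eps \leq O_{q,\ell,L}(1/M)$, from which $M = O_{q,\ell,L}(1/\eps)$ immediately follows.

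First, I would set up a linear-programming relaxation that captures the list-recovery condition combinatorially. Viewing $\cC$ probabilistically, for each coordinate $i \in [n]$ let $\pi_i$ denote the empirical distribution on $[q]$ of the $i$-th symbol of a uniformly random codeword. The goal is to define a functional $F$ on the $q$-ary probability simplex such that, up to an $O(1/M)$ error, the list-recovery condition forces $\frac{1}{n}\sum_i F(\pi_i) \leq 1 - p$. Heuristically, $F(\pi)$ should capture the maximum ``fractional agreement'' achievable at a single coordinate whose symbol distribution is $\pi$, subject to the global constraint that no input-list tuple $\bY$ captures $L$ codewords within $pn$ misses. Unlike the binary case, a naive choice (say, the sum of the top-$\ell$ entries of $\pi$) does not tightly capture the constraint; identifying the right relaxation is one of the paper's announced contributions, and I would follow the ABP template while introducing auxiliary variables indexed by subsets of $[q]$ of size $\ell$ to accommodate the combinatorics of list-recovery.

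The key technical lemma is to show that this $F$ is maximized over the simplex at the uniform distribution $\pi^\star=(1/q,\ldots,1/q)$, with value exactly $1 - p_*(q,\ell,L)$. I would start from the Schur-concavity and convexity properties of an allied function established in the RYZ paper, which should reduce the optimization to $\pi$ lying on the symmetric ``ray'' joining $\pi^\star$ to the boundary of the simplex. This brings the problem into a setting formally analogous to the binary case, but the extra parameters still complicate the single-variable analysis. I expect this step to be the main obstacle: even after reducing to a symmetric sub-simplex, one must show strict concavity (with a quantitative rate of decay) of $F$ along the relevant direction, which likely requires case analysis depending on the partition of $[q]$ induced by the optimal input list (the cases $|Y\cap \text{support}|=0,1,\ldots,\ell$ behave differently). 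This is precisely the point at which $q\geq 3$ is genuinely harder than the $q=2$ case of ABP, since for $q=2$ the simplex is already one-dimensional and no such reduction is needed.

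Finally, combining the two ingredients, Jensen's inequality along the appropriate directions yields $\frac{1}{n}\sum_i F(\pi_i) \leq F(\pi^\star) = 1 - p_*$, and substituting into the LP bound gives $1 - p \geq 1 - p_* - O(1/M)$, i.e., $\eps \leq O(1/M)$ and hence $|\cC| = O(1/\eps)$. The $O(1/M)$ slack comes from rounding effects (each $\pi_i$ is supported on $\frac{1}{M}\bbZ_{\geq 0}$) and from the ``$L-1$ vs.\ $L$'' gap in the definition of list-recoverability, and should be absorbed by mirroring the quantitative bookkeeping of ABP. The $q=2,\,L$-even case mentioned in the abstract should emerge as a degenerate instance of the same argument, since the asymmetry between $Y_i$ and its complement vanishes exactly in that regime.
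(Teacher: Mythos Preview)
Your outline conflates two different simplices and misses the central mechanism of the proof. You work with a single functional $F$ on the $q$-ary simplex $\Delta([q])$ via the coordinate marginals $\pi_i$, but the quantity that list-recoverability actually controls is the Chebyshev $\ell$-radius $\rad_\ell(\vc_1,\ldots,\vc_L)$ of each $L$-tuple, which is a function of the joint \emph{type} of the tuple, not of the marginals. In particular, there is no functional $F$ on $\Delta([q])$ for which ``$(p,\ell,L)$-list-recoverable'' implies $\frac{1}{n}\sum_i F(\pi_i)\le 1-p$ up to $O(1/M)$: one can have uniform marginals and yet find $L$ codewords inside a small $\ell$-ball, or highly biased marginals with every $L$-tuple far apart. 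The only quantity the marginals do control is the average over $\cC^L$ of the \emph{average} radius $\ol{\rad}_{U_L,\ell}$, namely $\frac{1}{n}\sum_i f_\ell(\pi_i,U_L)$; but since $\ol{\rad}_{U_L,\ell}\le\rad_\ell$ always, the list-recovery lower bound on $\rad_\ell$ does not by itself give the matching lower bound on that averaged quantity.

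The missing ingredient is the second simplex $\Delta([L])$ and the weighted average radius $\ol{\rad}_{\omega,\ell}$: one shows $\rad=\max_{\omega\in\Delta([L])}\ol{\rad}_{\omega,\ell}$ (with the max attained on a finite set $\Omega_{\ell,L}$), and then that $\omega\mapsto f_\ell(U_q,\omega)$ is \emph{uniquely} maximized at $\omega=U_L$ whenever $q\ge 3$. This is what lets one argue that for ``random-like'' $L$-tuples (those whose type is nearly uniform on $[q]^L$) the max over $\omega$ is achieved at $U_L$, so $\rad_\ell\approx\ol{\rad}_{U_L,\ell}$ for those tuples and the list-recovery bound transfers. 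One further needs a structural dichotomy---either $\cC$ has a large subcode of small $\ell$-radius (handled by a Ramsey argument), or all but an $O(1/M)$ fraction of the $L$-tuples are random-like---to finish. None of this $\omega$-machinery or the dichotomy appears in your outline; the RYZ Schur-convexity you invoke handles only the optimization over $P\in\Delta([q])$, which is the easy half. Without the $\Delta([L])$-analysis the $O(1/M)$ slack you posit in the first step is simply not there, and what you have sketched recovers at best the RYZ conclusion that $|\cC|=O_{q,\ell,L,\eps}(1)$ with no control on the $\eps$-dependence.
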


We complement the above negative result with the following code construction, showing the upper bound is tight. 

\begin{theorem} [Informal Version of \Cref{thm:list-dec-qary-postplotkin-constr}] \label{thm:informal-lb}
Let $q,\ell,L \in \bbN$ with $q \geq 3$ and $\ell < q$ be fixed constants. Let $\eps>0$ and put $p = p_*(q,\ell,L)+\eps$. There exists a $(p,\ell,L)$-list-recoverable code $\cC \subseteq [q]^n$ such that $|\cC| \geq \Omega_{q,\ell,L}(1/\eps)$.
\end{theorem}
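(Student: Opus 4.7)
The plan is to adapt the construction of Alon, Bukh and Polyanskiy~\cite{abp-2018} from binary list-decoding to the general $q$-ary list-recovery setting, riding on the LP characterization of $p_*(q,\ell,L)$ that drives the matching upper bound. The strategy is to start from an extremal ``threshold-tight'' base code of constant size and amplify it to a code of size $N=\Theta(1/\eps)$ via a product-with-shifts construction, where each shift pattern consumes $\Theta(\eps)$ of the slack opened up by the extra $+\eps$ in the noise parameter.

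Concretely, I would first extract a tight base configuration from the LP. The LP characterization used in \Cref{thm:informal-ub} yields an extremal column-type distribution $\mu^*$ on $[q]^L$; by the convexity and Schur-convexity results on the $q$-ary probability simplex highlighted in the abstract, $\mu^*$ can be taken to be symmetric and uniform on its support. After rational approximation, realize $\mu^*$ combinatorially as an $L \times n_0$ matrix over $[q]^{n_0}$ whose columns have empirical type exactly $\mu^*$; its $L$ rows form a base code $\cC_0=\{w_1,\dots,w_L\}\subseteq [q]^{n_0}$ that saturates the $(p_*,\ell,L)$-list-recovery LP. I would then lift $\cC_0$ to a code of size $N=\lfloor c/\eps\rfloor$ for a suitable constant $c=c(q,\ell,L)$ and length $n=Nn_0$ as follows: partition the $n$ coordinates into $N$ blocks of length $n_0$, and define the $i$-th codeword $\vd_i$ by setting its $j$-th block equal to the $\sigma(i,j)$-th row of $\cC_0$, for an index pattern $\sigma:[N]\times[N]\to[L]$ chosen from a combinatorial design (or sampled uniformly and then pruned) so that for every $L$-subset $S\subseteq[N]$ of codewords, the aggregate column-type across all $n$ coordinates is a perturbation of $\mu^*$ sitting strictly inside the LP-feasible region with gap $\Omega(\eps)$.

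The main obstacle is verifying list-recoverability: for every input list pattern $\bY$ and every $L$-subset $S$ of codewords, the members of $S$ jointly disagree with $\bY$ in more than $(p_*+\eps)n$ coordinates. The analysis proceeds block-by-block. Within a single block, the restricted $L$ codewords form a near-$\mu^*$ matrix; the LP inequality then bounds their agreement with the block restriction of $\bY$ by $(1-p_*)n_0$, while the designed shift pattern contributes an additional $\Omega(\eps)n_0$ deficit for $S$-subsets that are not ``aligned.'' Summing over the $N$ blocks yields total agreement at most $(1-p_*-\eps)n$. Worst-casing over $\bY$ reuses the Schur-convexity argument on the $q$-ary simplex from the upper bound, reducing to a uniform input list, which is the case where the extremal $\mu^*$ was built to be tight. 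The resulting code has $|\cC|=N=\Omega_{q,\ell,L}(1/\eps)$, matching \Cref{thm:informal-ub}.
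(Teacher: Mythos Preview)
Your proposal has a genuine gap at the core step. The base configuration $\cC_0$ you build is chosen to \emph{saturate} the LP at $p_*$; that means any block in which the $L$ selected codewords hit all $L$ rows of $\cC_0$ contributes exactly $p_* n_0$ to the (average) radius, and blocks with repeated rows can only contribute less. You assert that ``the designed shift pattern contributes an additional $\Omega(\eps)n_0$ deficit,'' but you never identify a mechanism that produces this extra slack, nor do you specify a design $\sigma$ for which it would hold simultaneously for all $\binom{N}{L}$ subsets. Without that, the construction yields list-recoverability only at radius $p_*$, not $p_*+\eps$. The claim that the per-block agreement is bounded by $(1-p_*)n_0$ is an equality for the extremal $\mu^*$, so there is nothing to sum.

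The paper takes a different and much more direct route. It uses a single simplex-like code: for $M=qm$, the $M\times n$ codebook has as columns \emph{all} length-$M$ vectors containing each symbol of $[q]$ exactly $m$ times, so $n=\binom{qm}{m,\dots,m}$. By this perfect column symmetry, every $L$-subset of rows has the same average $\ell$-radius, equal to $1-\tfrac{1}{L}\expt{\plur_\ell(X_1,\dots,X_L)}$ where $(X_1,\dots,X_L)$ is drawn \emph{without replacement} from the balanced urn $(1^m,\dots,q^m)$. A Taylor expansion of the resulting hypergeometric-type probabilities around $m\to\infty$ gives
\[
\ol{\rad}_\ell = p_*(q,\ell,L) + c_{q,\ell,L}\,m^{-1} + O(m^{-2}),
\]
and the substantive work is a combinatorial identity showing $c_{q,\ell,L}>0$. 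The $+\eps$ therefore arises as the second-order correction between sampling without replacement and i.i.d.\ sampling from $U_q$, not from any block/shift mechanism. Setting $m\asymp 1/\eps$ then gives $|\cC|=qm=\Omega_{q,\ell,L}(1/\eps)$.
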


We emphasize that in the above theorems the implied constants may depend on $q,\ell$ and $L$. 

Note that our results explicitly \emph{exclude} the case of $q=2$. As \cite{abp-2018} prove, the binary alphabet behaves in subtle ways: the bound on the code size depends on the parity of $L$. Intriguingly, our work demonstrates that such behaviour does not arise over larger alphabets. 

\subsection{Technical Overview} \label{subsec:technical-overview}

\paragraph{The double-counting argument.}
Since our focus is on zero-rate list-decodable/-recoverable codes, it helps to first review the proof of the zero-rate threshold $ p_*(q, \ell, L) $. 
A lower bound can be easily obtained by a random construction that attains a positive rate for any $ p \le p_*(q, \ell, L) - \eps $. 
For the upper bound, let us first consider the list-decoding case, i.e., $\ell=1$. 
The proof in \cite{blinovsky-2005-ls-lb-qary,blinovsky-2008-ls-lb-qary-supplementary,RYZ22}, at a high-level, proceeds via a double-counting argument.\footnote{A characterization of $ p_*(q, 1, L) $ was announced in \cite{blinovsky-2005-ls-lb-qary,blinovsky-2008-ls-lb-qary-supplementary} whose proof was flawed. The work \cite{RYZ22} filled in the gaps therein and characterized $ p_*(q, \ell, L) $ for general $\ell$.} 
For any $(p, \ell, L)$-list-decodable code $ \cC\subset[q]^n $, the proof aims to upper and lower bound the \emph{radius} of a list averaged over the choice of the list from $\cC$:
\begin{align}
    \frac{1}{M^L} \sum_{(\vc_1, \cdots, \vc_L)\in\cC^L} \radh(\vc_1, \cdots, \vc_L) . \label{eqn:double-count} 
\end{align}
Comparing the bounds produces an upper bound on $|\cC|$. 
Here $ \radh(\cdot) $, known as the Chebyshev radius of a list, is the relative radius of the smallest Hamming ball containing all codewords in the list. 
A lower bound on \Cref{eqn:double-count} essentially follows from list-decodability of $\cC$. Indeed, each term (corresponding to lists consisting of distinct codewords) is lower bounded by $p$, otherwise a list that fits into a ball of radius at most $np$ is found, violating list-decodability of $\cC$. 
Therefore \Cref{eqn:double-count} is at least $p-o(1)$, where $o(1)$ is to account for lists with not-all-distinct codewords. 

On the other hand, it is much more tricky to upper bound \Cref{eqn:double-count} as, in general, $\radh$ admits no analytically closed form and can only be computed by solving a min-max problem. 
Previous proofs \cite{RYZ22} then first extracts a subcode $\cC'$ with highly-regular list structures via the hypergraph Ramsey's theorem. 
This allows one to assert that all lists have essentially the same radius and all codewords in each list have essentially the same distance to the center of the list. 
As a result, the min-max expression is ``linearized'' and \Cref{eqn:double-count} can be upper bounded when restricted to $\cC'$. 
The downside is that the Ramsey reduction step is rather lossy for code size. 

\paragraph{Weighted average radius.}
The effect of the Ramsey reduction, put formally, is to enforce the \emph{average radius}:
\begin{align}
    \ol{\rad}_\ham(\vc_1, \cdots, \vc_L) &\coloneqq \frac{1}{n} \min_{\vr\in\{0,1\}^n} \frac{1}{L} \sum_{i = 1}^L \disth(\vc_i, \vr) \label{eqn:avg_rad_intro} 
\end{align}
of every list in the subcode to be approximately equal. 
To extract the regularity structures in lists without resorting to extremal bounds from Ramsey theory, \cite{abp-2018} introduced the notion of \emph{weighted average radius} which ``linearizes'' the Chebyshev radius in a \emph{weighted} manner:
\begin{align}
    \ol{\rad}_{\omega}(\vc_1, \cdots, \vc_L) &\coloneqq \frac{1}{n} \min_{\vr\in\{0,1\}^n} \sum_{i = 1}^L \omega(i) \disth(\vc_i, \vr) \notag 
\end{align}
where $ \omega $ is a distribution on $L$ elements. 
For any weighting $\omega$, $ \ol{\rad}_\omega $ of lists from the code forms a suite of succinct statistics of the list distribution. 
It turns out $ \ol{\rad}_{U_L} = \ol{\rad} $ (where $ U_L $ denotes the uniform distribution on $ [L] $) is maximal under all $\omega$. 
Recall that the double-counting argument suggests that in an optimal zero-rate code, the behaviour of the ensemble average of $\rad$ is essentially captured by that of $ \ol{\rad} $. 
In particular, list-decodability ensures that $ \ol{\rad} $ of \emph{most} lists should be large. 
However, not too many lists in an optimal code are expected to have large $ \ol{\rad}_\omega $ for any $ \omega\ne U_L $. 
\cite{abp-2018} then managed to quantify the gap between $ \ol{\rad} = \ol{\rad}_{U_L} $ and $ \ol{\rad}_\omega $ (with $ \omega\ne U_L $), which yields improved (and sometimes optimal) size-radius trade-off of zero-rate codes. 

\paragraph{Generalization to $q$-ary list-decoding.}
Our major technical contribution is in extrapolating the above ideas to list-recovery. 
The challenge lies particularly in defining a proper notion of weighted average radius and proving its properties. 
Our definition relies crucially on an embedding $\phi$ from $[q]$ to the simplex in $ \bbR^q $ and relaxes the center $\vr$ of the list to be a fractional vector. 
Specifically, denoting by $ \Delta $ the simplex in $ 
\bbR^q $ and $ \partial\Delta = \{\ve_1, \cdots, \ve_q\} $ its vertices (i.e., the standard basis of $ \bbR^q $), we let the embedding $\phi$ map each symbol $x\in[q]$ to the one-hot vector $\ve_x\in\partial\Delta$. 
Denoting by $ \vx_1, \cdots, \vx_L\in(\partial\Delta)^n $ the (element-wise) images of a list $\vc_1, \cdots, \vc_L\in[q]^n$, we define the \emph{weighted average radius} of $ \vx_1, \cdots, \vx_L $ as:
\begin{align}
    \ol{\rad}_\omega(\vx_1, \cdots, \vx_L) &= \frac{1}{n} \min_{\vy\in\Delta^n} \frac{1}{2} \exptover{i\sim\omega}{\normone{ \vx_i - \vy }} , \label{eqn:weighted-avg-rad-intro} 
\end{align}
where $ \omega $ is any distribution on $ [L] $. 

The notriviality and significance of the above notion, especially the embedding used therein, is three-fold. 
\begin{itemize}
    \item First, as the weighting $\omega$ varies, $ \ol{\rad}_\omega $ serves as a bridge between the standard average radius in \Cref{eqn:avg_rad_intro} and the Chebyshev radius. 
    Indeed, $\omega = U_L$ recovers the former, and the maximum $ \ol{\rad}_\omega $ over $ \omega $ recovers the latter. 
    However, we caution that the second statement does not hold without the embedding since the Hamming distance between $q$-ary symbols per se is not and cannot be interpolated by a convex function, which makes the minimax theorem inapplicable. 
    Fortunately, our embedding affinely extend the $q$-ary Hamming distance to the simplex therefore brings back the applicability of the minimax theorem and connects $ \max_\omega \ol{\rad}_\omega $ to $ \rad $. 

    \item Second, our definition in \Cref{eqn:weighted-avg-rad-intro} allows $\vy$ to take any value on the simplex, instead of only its vertices, i.e., the image of $[q]$ under $\phi$. 
    Though embedding naively to the hypercube $ [0,1]^q $ seems convenient, upon solving the expression with fractional $\vy$ one does not necessarily obtain a notion that is guaranteed to closely approximate the original version with integral $\vy$. 
    In contrast, using linear programming duality, we show that our embedding yields relaxed notion of radius which closely approximates the actual Chebyshev radius. 
    Indeed, upon rounding the fractional center $\vy$ and taking its pre-image under $\phi$, our results guarantee that the resulting radius must have negligible difference from the Chebyshev radius. Precisely speaking, we want to find a vector $\vy=(y(i,j))_{[n]\times [q]}\in \Delta$ close to the $L$ images of the codewords $\vx_1,\ldots,\vx_L$ by linear programming. Meanwhile, we want $\vy(i):=(y(i,1),\ldots,y(i,q))$ to belong to $\partial \Delta$ so that we can find a preimage of $\vy(i)$ in $[q]$.  
    Since $\vy(i)\in \Delta$, the components in $\vy(i)$ are subject to $\sum_{j=1}^q y(i,j)=1$. This implies that at least one component of $\vy(i)$ is nonzero.
    The basic feasible solution in \Cref{prop:basic_feasible_sol} guarantees that there exists a feasible solution such that most of $y(i,j)$ are $0$. Combining with the fact $\sum_{j=1}^q y(i,j)=1$ forces $(y(i,1),\ldots,y(i,q))\in \partial \Delta$ for almost all $i\in [n]$. Thus, we obtain a negligible loss in the conversion between Hamming distance and Euclidean distance. 
    \item Finally, under the embedding $\phi$, the weighted average radius $ \ol{\rad}_\omega $ still retains the appealing feature that the minimization can be analytically solved, therefore giving rise to an explicit expression (see \Cref{eqn:wt_avg_rad_2}) which greatly facilitates our analysis. 
    
\end{itemize}

We then show, via techniques deviating from those in \cite{abp-2018}, three key properties that are required by the subsequent arguments.
\begin{enumerate}
    \item \label{itm:property1} For any fixed distribution $P$, if entries of codewords in the list are generated i.i.d.\ using $P$, then 
    $$f(P, \omega) \coloneqq \exptover{(X_1, \cdots, X_L)\sim P^{\ot L}}{1 - \max_{x\in[q]} \sum_{\substack{i\in[L] \\ X_i = x}} \omega(i)}$$ 
    is maximized when $ \omega = U_L $. Moreover, the equality holds if and only if $\omega=U_L$ for $q\geq 3$ and any $L$. Our approach is different from \cite{abp-2018} as we can not explicitly represent function $f(P,\omega)$. 

    \item \label{itm:property2} Furthermore, if entries of codewords in the list are generated i.i.d.\ using a certain $P$, then $ f(P, U_L)$ is upper bounded by $f(P_{q,p}, U_L)$  with $P_{q,p}=(\frac{1-p}{q},\ldots,\frac{1-p}{q},p)$ and $p=\max_{i\in [q]} P(i)$. This follows from the Schur convexity property proved in \cite{RYZ22}.
    \item \label{itm:property3} Finally, denoting by $P_i$ the distribution of the $i$-th components of codewords in code $\cC$, Schur convexity promises $f(P_i,U_L)\leq f(P_{q,p_i}, U_L)$. In \cite{RYZ22}, it is proved that $f(P_{q,p}, U_L)$ is convex for $p\in [1/q, 1]$. Thus, we can conclude that 
    $$
    \frac{1}{n}\sum_{i\in [n]}f(P_i,U_L)\leq f(P_{q,p},U_L)
    $$
    with $p=\frac{1}{n}\sum_{i\in [n]}p_i$. 
\end{enumerate}

The remaining part of our proof is similar to \cite{abp-2018}. We show that a code $\cC$ either has radius 
$$\rad(\cC)=\frac{1}{n}\min_{\vx\in [q]^n}\max_{\vc\in \cC} \disth(\vc,\vx)\leq 1-\frac{1}{q}-\delta$$
or most of $L$-tuples with distinct codewords in $\cC$ are distributed close to uniform,. For the former case, we use the convexity property to show that the list-decodability of $\cC$ can not exceed $f(U_q, U_L)=p_*(q,L)$ by much. For the latter case, since most of $L$-tuples of distinct codewords in $\cC^L$ looks uniformly at random, we can show that the list-decodablilty of $\cC$ is very close to that of random codes which is $f(U_q,U_L)$.
\paragraph{Generalization to list-recovery.} For list-recovery, i.e., $ \ell>1 $, we find an embedding $\phi_\ell$ that maps each element in $[q]$ to a superposition of $\ell$ vertices of the simplex in $ \bbR^q $, i.e., we map the element in $[q]$ to a vector space $[0,1]^{\cX}$ where $\cX=\binom{[q]}{\ell}$ is the collection of all $\ell$-subsets in $[q]$. Concretely, we define $\phi_\ell(i):=\sum_{A\in \cX, i\in A}\ve_A$ where $(\ve_A)_{A\in \cX}$ is a standard basis of $\bbR^{\cX}$. The intuition behind this map is that if $i\in X$, we have $\normone{ \phi_\ell(i) - \ve_X }=\binom{q}{\ell}-1$  and otherwise $\normone{ \phi_\ell(i) - \ve_X }=\binom{q}{\ell}+1$. 
Similar to the list decoding, given $L$ codewords in $[q]^n$, we obtain $L$ vectors $\vx_1,\ldots,\vx_L$ under the map $\phi_\ell$. Our goal is to find a vector $\vy=(y(i,A))_{[n]\times \cX}$ close to these $L$ vectors subject to the constraint that $\sum_{A\in \cX}y(i,A)=1$ for any $i\in [n]$. This constraint combined with the basic feasible solution argument in \Cref{prop:basic_feasible_sol} forces that for almost all $i\in [n]$, $(y(i,A))_{A\in \cX}$ is of the form $\ve_X$. For such $i$, we can find an $\ell$-subset $X\in \cX$ preserving the distance, i.e., 
$$\distlr(i,X)=\indicator{i\notin X}=\frac{1}{2}\left(\normone{ \phi_\ell(i) - \ve_X }-\binom{q}{\ell}+1\right).$$   
Besides the linear programming relaxation, further adjustments for the proof of properties analogous to \Cref{itm:property1,itm:property2,itm:property3} above are required.

\paragraph{Code construction.}  As alluded to before, a code that saturates the optimal size-radius trade-off should essentially saturate both the upper and lower bounds on the quantity
\begin{align}
    \frac{1}{M^L} \sum_{(\vc_1, \cdots, \vc_L)\in\cC^L} \ol{\rad}_\ham(\vc_1, \cdots, \vc_L) \notag 
\end{align}
considered in the double-counting argument. 
Indeed, our impossibility result implies that any optimal zero-rate code must contain a large fraction of random-like $L$-tuples $(\vc_1,\ldots,\vc_L)$, i.e., for every $\vu\in [q]^L$ \begin{equation}\label{eqn:type}
\sum_{i=1}^{n}\indicator{(\vc_1(i),\ldots,\vc_L(i))=\vu}\approx \frac{n}{q^L}
\end{equation}
where $\vc_j=(\vc_j(1),\ldots,\vc_j(n))\in [q]^n$.
To match such an impossibility result, an optimal construction should contain as many such $L$-tuples as possible. 
A simplex-like code then becomes a natural candidate. 
This is a natural extension of the construction in \cite{abp-2018} to larger alphabet. 
An $ M\times n $ codebook $\cC$ consisting of $M$ codewords each of length $n$ is constructed by putting as columns all possible distinct length-$M$ vectors that contains identical numbers of $ 1, 2, \cdots, q $. 
It is not hard to see by symmetry that \eqref{eqn:type} becomes equality for every $L$-tuple with distinct codewords in $\cC$. Thus, $\cC$ is the most regular code.

We also remark that, unlike for positive-rate codes, the prototypical random construction (with expurgation) does not lead to favorable size-radius trade-off since the deviation of random sampling is comparatively too large in the zero-rate regime. 
In contrast, the simplex code is deterministically regular and has no deviation. 

\subsection{Organization} \label{subsec:organization}
The remainder is organized as follows. First, \Cref{sec:prelims} provides the necessary notations and definitions, together with some preliminary results which will be useful in the subsequent arguments. \Cref{sec:ham-to-euc,sec:fun-props,sec:abundance,sec:end} contain our argument establishing \Cref{thm:informal-ub} for list-decoding (i.e. the case $\ell=1$); in \Cref{sec:list-rec} we elucidate the changes that need to be made to establish the theorem for general $\ell$. Next, \Cref{sec:code-construct} provides the code construction establishing \Cref{thm:informal-lb}. We lastly summarize our contribution in \Cref{sec:conclusion} and state open problems. 

\section{Preliminaries} \label{sec:prelims}
Firstly, for convenience of the reader we begin by summarizing the notation that we use. This is particularly relevent as we will often be in situations where we need multiple indexes for, e.g., lists of vectors where each coordinate lies in a probability simplex, so the reader is encouraged to refer to this table whenever it is unclear what is intended. 
\begin{table}[htbp]
    \centering
    \begin{tabular}{c|c}
        English letter in boldface & $[q]^n$-valued vector \\
        Greek letter in boldface & $\Delta([q])$-valued vector \\
        $\Delta:= \Delta([q])$ &Simplex in $[0,1]^q$, i,e., $\Delta=\{(x_1,\ldots,x_q)\in [0,1]^q: \sum_{i=1}^q x_i=1\}$ \\
        $\partial\Delta$ & Set of vertices of $\Delta$ \\
        $\ve_x\in\partial\Delta$ & The image of $x\in[q]$ under $\phi$, i.e., the $x$-th vertex of $\Delta$ \\
        $\vc_i\in[q]^n$ & The $i$-th codeword in a list \\
        $\vx_i\in\Delta^n$ & Image of $\vc_i$ under $\phi$ (applied component-wise) \\
        $\vy\in\Delta^n$ & Relaxed center of a list \\
        $\vx(j)\in\partial\Delta, \vy(j)\in\Delta$ & The $j$-th block (of length $q$) in $\vx\in(\partial\Delta)^n, \vy\in\Delta^n$, respectively \\
        $x(j,k)\in\{0,1\}, y(j,k)\in[0,1]$  & The $(j,k)$-th element of $\vx\in(\partial\Delta)^n, \vy\in\Delta^n$, respectively \\
        $\radh$ & (Standard) Chebyshev radius \\
        $\rad$ & Relaxed Chebyshev radius \\
        $\ol{\rad}$ & Average radius \\
        $\ol{\rad}_\omega$ & Average radius weighted by $\omega\in\Delta([L])$ \\
        $f(P, \omega)$ & Expected average radius (weighted by $\omega$) of $P$-distributed symbols \\
        $(X_1, \cdots, X_L)\sim P^{\ot L}$ & A list of i.i.d.\ $P$-distributed symbols \\
        $U_k$ & Uniform distribution on $[k]$
    \end{tabular}
    \caption{Notation for list-decoding. }
    \label{tab:listrec-notation}
\end{table}

For a finite set $S$ and an integer $ 0\le k\le|S| $, we denote $ \binom{S}{k} \coloneqq \brace{ T \subset S : |T| = k } $. Let $[q]=\{1,\ldots,q\}$.

\subsection{List-Decoding}

Fix $ q\in\bbZ_{\ge3} $ and $ L\in\bbZ_{\ge2} $. 
Let $ \disth(\vc,\vr)$ denote the \emph{Hamming distance} between $ \vc,\vr\in[q]^n $, i.e., the number of coordinates on which the strings differ. 
For $ t\in[0,n] $, let $ \ballh(\vy, t) := \{\vc\in [q]^n:\disth(\vc,\vy) \leq t\}$ denote the \emph{Hamming ball} centered around $\vy$ of radius $ \floor{t} $.

\begin{definition}[List-decodable code]
\label{def:listdec}
Let $ p\in[0,1] $. 
A code $ \cC\subseteq [q]^n $ is \emph{$ (p, L)_q $-list-decodable} if for any $ \vy\in[q]^n $, 
\begin{align}
\abs{ \cC \cap \ballh(\vy, np) } &\le L-1 . \notag 
\end{align}
\end{definition}

In~\cite{RYZ22} the zero-rate regime for list-decoding was derived, which is the supremum over $p \in [0,1]$ for which $(p-\eps,L)_q$-list-decodable codes of positive rate exist for all $\eps>0$. This value was shown to be 
\begin{align} \label{eq:list-dec-zero-rate}
    p_*(q,L) = 1 - \frac{1}{L}\exptover{(X_1,\cdots,X_L)\sim U_q^{\ot L}}{\plur(X_1,\cdots,X_L)} ,
\end{align}
where the function $\plur$ outputs the number of times the most popular symbol appears. In~\cite{RYZ22} it is shown that $(p_*(q,L)+\eps,L)$-list-decodable codes have size $O_{\eps,q,L}(1)$, i.e., some constant independent of $n$. Our target in this work is to show that the correct dependence on $\eps$ is $O_{q,L}(1/\eps)$, except for the case of $q=2$ with odd $L$.

A ``dual'' definition of list-decodability is proffered by the Chebyshev radius.

\begin{definition}[Chebyshev radius]
\label{def:cheb_rad}
The \emph{Chebyshev radius} of a list of distinct vectors $ \vc_1, \cdots, \vc_L\in[q]^n $ is defined as
\begin{align}
\radh(\vc_1, \cdots, \vc_L) &\coloneqq 
\frac{1}{n}\min_{\vr\in[q]^n} \max_{i\in[L]} \disth(\vc_i, \vr) . \notag 
\end{align}
\end{definition}

Observe that a code $\cC \subseteq [q]^n$ is $(p,L)$-list-decodable if and only if 
\begin{align}
    \min\{ \radh(\vec c_1,\dots,\vec c_L) : \vec c_1,\dots,\vec c_L \in \cC\text{ distinct}\} > p \ . \label{eq:cheby-condn}
\end{align}
In particular, to show a code fails to be list-decodable, it suffices to upper bound the Chebyshev radius of $L$ distinct codewords. 

Recall that our main target is an upper bound on the size of list-decodable/-recoverable codes (in the zero-rate regime). A natural approach is to derive from \Cref{eq:cheby-condn} the desired bound on the code; however, this quantity is quite difficult to work with directly. We therefore work with a \emph{relaxed} version, which we now introduce.

We require the following definitions. 
Let us embed $ [q]^n $ into the simplex $ \Delta([q]) $ via the following map $\phi$: 
\begin{align}
\begin{array}{rclc}
\phi\colon & [q] &\to & \Delta([q]) \\
 & x &\mapsto & \ve_x
\end{array}
\end{align}
where $ \ve_x $ is the $q$-dimensional vector with a $1$ in the $x$-th location and $0$ everywhere else. 
Denote by $ \Delta = \Delta([q]) $ the simplex and $ \partial\Delta = \{\ve_1, \cdots, \ve_q\} $ its vertices. 
For $ \vchi = \ve_x \in\partial\Delta $ and $ \veta\in\Delta $, let
\begin{align}
d(\vchi, \veta) &:= \frac{1}{2} \normone{\vchi - \veta} = \frac{1}{2} \paren{1 - \veta(x) + \sum_{x'\in[q]\setminus\{x\}} \veta(x')} . \label{eqn:ell1_dist} 
\end{align}
Note that if $ \veta = \ve_y \in\partial\Delta $, then 
\begin{align}
d(\vchi, \veta) &= \disth(x,y) . \notag 
\end{align}

From now on we will only work with $ \Delta^n $-valued vectors and will still denote such length-$qn$ vectors by boldface letters, abusing the notation. 
For $ \vy\in\Delta^n $, we use $ y(j,k)\in[0,1] $ to denote its $(j,k)$-th element and use $ \vy(j)=(y(j,1),\ldots,y(j,q))\in\Delta $ to denote its $j$-th block of size $q$.
For $ \vc\in[q]^n $, we use $ \vc(j)\in[q] $ to denote its $j$-th element. 

For $ \vx\in(\partial\Delta)^n $ and $ \vy\in\Delta^n $, the definition of $ d(\cdot,\cdot) $ can be extended to length-$qn$ vectors in the natural way. 
Specifically, 
\begin{align}
d(\vx, \vy) &= \sum_{j = 1}^n d(\vx(j), \vy(j)) . \label{eqn:ell1_dist_vec} 
\end{align}

We may now define the \emph{relaxed Chebyshev radius}.

\begin{definition} \label{def:def-relaxed-cheb-rad}
    The \emph{relaxed Chebyshev radius} of a list of distinct vectors $ \vx_1, \cdots, \vx_L\in(\partial\Delta)^n $ is
    \begin{align}
        \rad(\vx_1, \cdots, \vx_L) &\coloneqq \frac{1}{n}\min_{\vy\in\Delta^n} \max_{i\in[L]} d(\vx_i, \vy) . \label{eqn:def-relaxed-cheb-rad} 
    \end{align}
\end{definition}

Observe that 
\begin{align}
    \rad(\vx_1, \cdots, \vx_L) &\le \radh(\vc_1, \cdots, \vc_L) . \label{eqn:less} 
\end{align}
where $\phi(\vc_i)=\vx_i$ (here we extend the definition of $\phi$ to length-$n$ inputs in a similar way as in \Cref{eqn:ell1_dist_vec}). This justifies the ``relaxation'' terminology. 

As a last piece of terminology, we define the radius of a code. 

\begin{definition} \label{def:radius-of-code}
    For any code $\cC\in [q]^n$, we define the \emph{Chebyshev radius} of $\cC$ as 
    $$
        \rad(\cC)=\frac{1}{n}\min_{\vx\in [q]^n}\max_{\vc\in \cC} \disth(\vc,\vx).
    $$
\end{definition}

\subsection{List-Recovery}

\begin{table}[htbp]
    \centering
    \begin{tabular}{c|c}
        $\cX=\binom{[q]}{\ell}$ & collection of $\ell$-subsets in $[q]$.\\
        English capital letter $A$ &  a $\ell$-subset in $\cX$ \\
        English capital letter in bold $\vY$ &  $\cX$-valued vector \\
        $\cX_i=\{A\in \cX: i\in A\}$ & collection of $\ell$-subsets containing in $[q]$ that contains element $i$.\\ 
        $\Delta_\ell = \Delta_\ell(\cX)$ & Simplex in $[0,1]^{\cX}$, i,e., $\Delta_\ell=\{(x_A)_{A\in \cX}\in [0,1]^\cX: \sum_{A\in \cX} x_A=1\}$ \\
        $\ve_A$ & the $A$-th vertex of $\Delta_\ell$ \\ $\ve_i=\sum_{A\in \cX_i} \ve_A$ & the image of $i\in [q]$ under $\phi_\ell$\\
        $\partial \Delta_{\ell}=\{\ve_1,\ldots,\ve_q\}$ & The image of elements in $[q]$ under $\phi_\ell$  \\
        $\vc_i\in[q]^n$ & The $i$-th codeword in a list \\
        $\vx_i\in\Delta_\ell^n$ & Image of $\vc_i$ under $\phi$ (applied component-wise) \\
        $\vy\in\Delta_\ell^n$ & Relaxed center of a list \\
        $\vx(j)\in\partial\Delta_\ell, \vy(j)\in\Delta_\ell$ & The $j$-th block (of length $\binom{q}{\ell}$) in $\vx\in(\partial\Delta_\ell)^n, \vy\in\Delta_\ell^n$, respectively \\
        $\vx(j,A)\in\{0,1\}, \vy(j,A)\in[0,1]$  & The $(j,A)$-th element of $\vx\in(\partial\Delta_\ell)^n, \vy\in\Delta_\ell^n$, respectively \\
        $\rad_\ell$ & (Standard) $\ell$ radius \\
        $\rad$ & Relaxed $\ell$-radius \\
        $\ol{\rad}$ & Average $\ell$-radius \\
    $\ol{\rad}_{\omega,\ell}$ & Average $\ell$-radius weighted by $\omega\in\Delta([L])$ \\
        $f_\ell(P, \omega)$ & Expected average $\ell$-radius (weighted by $\omega$) of $P$-distributed symbols \\
        $(X_1, \cdots, X_L)\sim P^{\ot L}$ & A list of i.i.d.\ $P$-distributed symbols \\
        $U_k$ & Uniform distribution on $[k]$
    \end{tabular}
    \caption{Notation for list-recovery. }
    \label{tab:listdec-notation}
\end{table}
We now provide the necessary modifications to the above definitions to the setting of list-recovery. Let $\cX=\binom{[q]}{\ell}$ be the collection of all $\ell$-subsets in $[q]$ and $\cX_i=\{A\in \cX: i\in A\}$. Define $\Delta_\ell=\{(x_A)_{A\in \cX}\in [0,1]^\cX: \sum_{A\in \cX} x_A=1\}$. 
Let $(\ve_A)_{A\in \cX}$ is a standard basis of $\bbR^{\cX}$. 
Let $\ve_i=\sum_{A\in \cX_i}\ve_A$ and $\partial \Delta=\{\ve_1,\ldots,\ve_q\}$. Let $\phi_\ell: [q]\rightarrow \partial \Delta_\ell$ be defined as $\phi_{\ell}(i)=\ve_i$. 
Below, we define the list-recovery distance between a vector in $[q]^n$ and an $n$-tuple of $\ell$-subsets $\vY\in \cX^n$. If $\ell=1$, this list-recovery distance recovers the classic Hamming distance, viewing $\vY$ naturally as a vector in $[q]^n$.
\begin{definition}[List-recovery distance]
\label{def:metric-list-rec}
Given a vector $\vx \in [q]^n$ and a tuple of sets $\vY = (Y_1,\dots,Y_n) \in \cX^n$ for $1 \leq \ell \leq q-1$, we define
\begin{align}
    \distlr(\vx,\vY) &\coloneqq \sum_{i=1}^n \indicator{x_i \notin Y_i} \ .
    \notag
\end{align} 
\end{definition}

\begin{definition}[List-recoverability]
\label{def:list-rec}
A code $ \cC\subset[q]^n $ is said to be \emph{$ (p,\ell,L)$-list-recoverable} if for every $\vY \in \binom{[q]}{\ell}^n$, $\card{\cC \cap \blr(\vY,np)} < L$ where 
$$
\blr(\vY,np)=\{\vc\in [q]^n: \distlr(\vc,\vY)\leq np\}.
$$
\end{definition}

The zero-rate regime for list-recoverability was also derived in~\cite{RYZ22}:
\begin{align} \label{eq:list-rec-zero-rate}
    p_*(q,\ell,L) = 1 - \frac{1}{L}\exptover{(X_1,\cdots,X_L)\sim U_q^{\ot L}}{\plur_\ell(X_1,\cdots,X_L)} \ ,
\end{align}
where $\plur_\ell(x_1,\dots,x_L) = \max_{\Sigma \subseteq [q]:|\Sigma|=\ell} |\{i \in [L]:x_i \in \Sigma\}|$ is the top-$\ell$ plurality value, i.e., the number of times the $\ell$ most popular symbols appear. 

\begin{definition}[$\ell$-radius]
\label{def:list-rec-rad}
The \emph{$\ell$-radius} of an $L$-set of vectors $ \vc_1,\dots,\vc_L\in[q]^n $ is defined as the radius of the smallest list-recovery ball containing the set $\{\vc_1,\dots,\vc_L\}$:
\begin{align}
    \rad_\ell(\vc_1,\dots,\vc_L) &\coloneqq \frac{1}{n}\min_{\vY\in\cX^n} \max_{i\in[L]} \distlr(\vc_i,\vY) . \label{eqn:def-rad-lr} 
\end{align}
\end{definition}
In analogy to the list-decoding case, we define the $\ell$-radius of $\cC$. 
\begin{definition}\label{def:lradius-of-code}
 For any code $\cC\in [q]^n$, we define the \emph{$\ell$-radius} of $\cC$ as
$$
\rad_\ell(\cC)=\frac{1}{n}\min_{\vY\in \cX^n}\max_{\vc\in \cC} \distlr(\vc,\vY).
$$
\end{definition}

We embed $ [q]$ into Euclidean space $[0,1]^{\cX}$,

\begin{align}
\begin{array}{rclc}
\phi_\ell\colon & [q] &\to & [0,1]^{\cX} \\
 & i &\mapsto & \ve_{i}:=\sum_{A\in \cX_i} \ve_A.
\end{array}
\end{align}
Define $\partial \Delta_\ell=\{\ve_1,\ldots,\ve_q\}$.
For $ \vchi = \ve_{i} \in\partial\Delta_\ell $ and $ \veta\in\Delta_\ell $, let
\begin{align}
d(\vchi, \veta) &= \frac{1}{2} \paren{ \normone{\vchi - \veta}-\binom{q-1}{\ell-1}+1 } \notag \\
&= \frac{1}{2} \paren{\sum_{A\in \cX_i}(1 - \veta(A)) + \sum_{A'\in \cX\setminus \cX_i} \veta(A')-\binom{q-1}{\ell-1}+1}. \label{eqn:ell2_dist} 
\end{align}
We abuse the notation $d(\vchi, \veta)$ as $\vchi, \veta$ are vectors of length $|\cX|$. 
Note that if $ \veta = \ve_A$ for some $A\in \cX$, then 
\begin{align}
d(\vchi, \veta) &= \distlr(i,A)\in \{0,1\}. \notag 
\end{align}
Define the \emph{relaxed $\ell$-radius} of a list of distinct vectors $ \vx_1, \cdots, \vx_L\in(\partial\Delta_\ell)^n $ as:
\begin{align}
\rad(\vx_1, \cdots, \vx_L) &\coloneqq \frac{1}{n}\min_{\vy\in\Delta_\ell^n} \max_{i\in[L]} d(\vx_i, \vy) . \notag 
\end{align}
Obviously, 
\begin{align}
\rad(\vx_1, \cdots, \vx_L) &\le \rad_{\ell}(\vc_1, \cdots, \vc_L) . \notag 
\end{align}
where $\vx_i=\phi_{\ell}(\vc_i)$.

\subsection{Types of Vector Tuples}

The last concept that we need is the type of a tuple of vectors.  Informally, one takes a tuple of vectors $(\vc_1,\dots,\vc_L) \in ([q]^n)^L$, views it as a $L \times n$ matrix, and then computes the fraction of columns that take on a certain value $\vu \in [q]^L$ for each $\vu \in [q]^L$. In other words, the type of is the distribution on $[q]^L$ induced by randomly sampling a column from this matrix.

\begin{definition}[Type]
\label{def:type}
    Let $q, L \in \bbN$, let $(\vc_1,\dots,\vc_L) \in ([q]^n)^L$ be a tuple of vectors, and let $\vu \in [q]^L$ be a vector. The \emph{type} of $(\vc_1,\dots,\vc_L)$ is 
    \[
        \typ(\vc_1,\dots,\vc_L) \coloneqq (\typ_\vu(\vc_1,\dots,\vc_L))_{\vu \in [q]^L}
    \]
    where 
    \[
        \typ_\vu(\vc_1,\dots,\vc_L) \coloneqq \frac{1}{n}\sum_{i=1}^n\indicator{(\vc_{1}(i),\dots,\vc_{L}(i)) = \vu} \ .
    \]
\end{definition}



\section{Zero-Rate List-Decoding}
\subsection{Linear Programming Relaxation} \label{sec:ham-to-euc}
We have shown in \Cref{eqn:less} that $\rad$ is smaller than $\radh$. 
Conversely, \Cref{lm:convert} below establishes that $ \rad $ and $ \radh $ do not differ much. 
That is, for any list, if a center $ \vy\in\Delta^n $ achieves a relaxed radius $t$, then there must exist $ \vr\in[q]^n $ attaining approximately the same $t$ for sufficiently large $n$. 

\begin{lemma}[$\rad$ is close to $ \radh $]\label{lm:convert}
Let $ \vc_1, \cdots, \vc_L \in [q]^n $. 
Denote by $ \vx_1, \cdots, \vx_L\in(\partial\Delta)^n $ the images of $ \vc_1, \cdots, \vc_L $ under the embedding $\phi$. 
Then 
\begin{align}
\radh(\vc_1, \cdots, \vc_L)
&\le \rad(\vx_1, \cdots, \vx_L) + \frac{L}{n} . \notag 
\end{align}
\end{lemma}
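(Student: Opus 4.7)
The plan is to express the computation of $\rad(\vx_1,\ldots,\vx_L)$ as a linear program, take an \emph{optimal basic feasible solution}, and argue that this solution is \emph{almost integral}, with at most $L$ coordinates where $\vy(j)$ fails to be a vertex of $\Delta$. Rounding at these few bad coordinates will cost at most $L$ in Hamming distance, which after dividing by $n$ gives the claimed additive $L/n$ slack.

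Concretely, using $d(\vx_i(j),\vy(j)) = 1 - y(j,c_i(j))$ (which follows from $\sum_k y(j,k)=1$ and \Cref{eqn:ell1_dist}), the problem $n\cdot\rad(\vx_1,\ldots,\vx_L)$ becomes the LP
\begin{align*}
\min\; t \qquad \text{s.t.}\qquad &\sum_{k\in[q]} y(j,k) = 1 &&\forall\, j\in[n],\\
& \sum_{j=1}^n (1 - y(j,c_i(j))) \le t &&\forall\, i\in[L],\\
& y(j,k)\ge 0 &&\forall\, j,k.
\end{align*}
I would first extract an optimal basic feasible solution $(\vy^\star,t^\star)$ via \Cref{prop:basic_feasible_sol}. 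Putting the LP in standard form by introducing slacks for the $L$ inequalities yields $n+L$ linearly independent equality constraints, so a basic feasible solution has at most $n+L$ strictly positive coordinates among the $y(j,k)$'s, slacks, and $t$. Hence the total count $\sum_{j\in[n]} |\{k:y^\star(j,k)>0\}|$ is at most $n+L$. Since $\sum_k y^\star(j,k)=1$ forces at least one positive entry per $j$, the number of blocks $j$ with $|\{k:y^\star(j,k)>0\}|\ge 2$ is at most $L$.

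Let $J\subseteq[n]$ be the (at least $n-L$) coordinates where $\vy^\star(j)$ is a single vertex $\ve_{r^\star_j}\in\partial\Delta$; on the remaining (at most $L$) coordinates pick $r^\star_j\in[q]$ arbitrarily. This defines $\vr^\star\in[q]^n$. On $J$, we have exactly $1 - y^\star(j,c_i(j)) = \indicator{c_i(j)\ne r^\star_j}$, so each $j\in J$ contributes equally to $d(\vx_i,\vy^\star)$ and $\disth(\vc_i,\vr^\star)$. On the $j\notin J$, each coordinate contributes at most $1$ to $\disth(\vc_i,\vr^\star)$ and a non-negative amount to $d(\vx_i,\vy^\star)$. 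Therefore, for every $i\in[L]$,
\[
\disth(\vc_i,\vr^\star)\;\le\;d(\vx_i,\vy^\star)+|[n]\setminus J|\;\le\;n\cdot\rad(\vx_1,\ldots,\vx_L)+L.
\]
Taking the maximum over $i$ and dividing by $n$ yields the desired inequality.

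\textbf{Main obstacle.} The only genuinely delicate step is arguing that the $y^\star$-block is a vertex of $\Delta$ on all but $L$ coordinates; this boils down to a careful accounting of the tight constraints at a basic feasible solution, exactly what \Cref{prop:basic_feasible_sol} is designed to provide. Everything else (the LP formulation via \Cref{eqn:ell1_dist}, the rounding, and the per-coordinate comparison between $d$ and $\disth$) is routine once the almost-integrality of $\vy^\star$ is in hand.
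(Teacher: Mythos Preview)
Your proposal is correct and follows essentially the same approach as the paper. The only cosmetic difference is that the paper fixes $t = n\cdot\rad(\vx_1,\ldots,\vx_L)$ upfront and writes a \emph{feasibility} LP (with trivial objective $\max\,0$), whereas you keep $t$ as a variable to minimize; after introducing $L$ slack variables both formulations have $n+L$ equality constraints, so the basic-feasible-solution count (\Cref{prop:basic_feasible_sol}), the pigeonhole argument forcing $\vy^\star(j)\in\partial\Delta$ on all but $L$ blocks, and the rounding step are identical.
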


\begin{proof}
Suppose $ \rad(\vx_1, \cdots, \vx_L) = t $. 
Then there exists $ \vy\in\Delta^n $ such that for every $ i\in[L] $, 
\begin{align}
d(\vx_i, \vy) &= \frac{1}{2} \sum_{j = 1}^n \paren{ 1 - y(j, c_i(j)) + \sum_{x\in[q]\setminus\{c_i(j)\}} y(j, x) }
\le t , \notag 
\end{align}
where the first equality is by \Cref{eqn:ell1_dist,eqn:ell1_dist_vec}. 
That is, the following polytope is nonempty:
\begin{multline}
\brace{
    \vy \in \Delta^n : 
    \forall i\in[L], \; d(\vx_i, \vy) \le t
} \\
= \brace{
    (y(j,k))_{(j, k)\in[n]\times[q]} : \begin{array}{l}
    \forall (j, k)\in[n]\times[q],\, y(j,k) \ge 0 , \\
    \forall j\in[n], \, \sum_{k = 1}^q y(j,k) = 1 , \\ 
    \forall i\in[L], \, \frac{1}{2} \sum_{j = 1}^n \paren{ 1 - y(j, \vc_i(j)) + \sum_{x\in[q]\setminus\{\vc_i(j)\}} y(j,x) } \le t
    \end{array}
} . \notag 
\end{multline}
Equivalently, the following linear program (LP) is feasible: 
\begin{align}
\begin{array}{cl}
\max\limits_{(y(j,k))_{(j, k)\in[n]\times[q]}} & 0 \\
\mathrm{s.t.} & \forall (j, k)\in[n]\times[q],\, y(j,k) \ge 0 , \\
    & \forall j\in[n], \, \sum_{k = 1}^q y(j,k) = 1 , \\ 
    & \forall i\in[L], \, \frac{1}{2} \sum_{j = 1}^n \paren{ 1 - y(j, \vc_i(j)) + \sum_{x\in[q]\setminus\{\vc_i(j)\}} y(j,x) } \le t . 
\end{array}
\notag 
\end{align}
Since the equality $ \inprod{\va}{\vy} \leq b $ is equivalent to the equality $ \inprod{\va}{\vy} + z = b , z\geq 0$, the above LP can be written in equational form: 
\begin{align}
\begin{array}{cl}
\max\limits_{(y(j,k))_{(j, k)\in[n]\times[q]}, (z(i))_{i\in[L]}} & 0 \\
\mathrm{s.t.} & \forall (j, k)\in[n]\times[q],\, y(j,k) \ge 0 , \\
    & \forall i\in[L], \, z(i) \ge 0 , \\
    & \forall j\in[n], \, \sum\limits_{k = 1}^q y(j,k) = 1 , \\ 
    & \forall i\in[L], \, \frac{1}{2} \sum\limits_{j = 1}^n \paren{ 1 - y(j, \vc_i(j)) + \sum\limits_{x\in[q]\setminus\{\vc_i(j)\}} y(j,x) } + z(i) \le t , 
\end{array}
\label{eqn:LP_eqn_form} 
\end{align}
or more compactly in matrix form: 
\begin{align}
\begin{array}{cl}
\max\limits_{\vy\in\bbR^{nq}, \vz\in\bbR^L} & 0 \\
\mathrm{s.t.} &  \matrix{A & I_L \\ B & 0} \matrix{\vy \\ \vz} = \matrix{t \one_L \\ \one_n} , \\
    & \vy,\vz \ge \zero . 
\end{array}
\notag 
\end{align}
Here $ A\in\bbR^{L \times (nq)} $ and $ B\in\bbR^{n \times (nq)} $ encode respectively the fourth and third constraints in \Cref{eqn:LP_eqn_form},  and $ I_L\in\bbR^{L\times L}, \one_L \in \bbR^L $ denote respectively the $L\times L$ identity matrix and the all-one vector of length $L$. 
It is clear that 
\begin{align}
    \rk\paren{ \matrix{A & I_L \\ B & 0} } &\le n+L . \notag 
\end{align}
This implies that there exists a feasible solution $\vy,\vz$ that has at most $n+L$ nonzeros and thus  $\vy=(y(j,k))_{(j, k)\in[n]\times[q]}$ has at most $n+L$ nonzeros. 
Indeed, such solutions are known as the \emph{basic feasible solutions}; see \Cref{prop:basic_feasible_sol}. 
Note that for every block $j\in [n]$, $\sum_{k=1}^q y(j,k)=1$. This implies that $y(j,1),\ldots,y(j,q)$ cannot be simultaneously $0$. Moreover, if $q-1$ out of them are $0$, the remaining one is forced to be $1$. Since there are $n$ blocks in total, by the pigeonhole principle, there are at least $n-L$ choices of $j\in [n]$ such that $\vy(j) =(y(j,1),\ldots,y(j,q))\in \partial\Delta$. Without loss of generality, we assume that these $n-L$ indices are $1,\ldots,n-L$. Let $\vr\in [q]^n$ be such that $\phi(\vr(j))=\vy(j)$ for $j=1,\ldots,n-L$ and $r(j)$ is any value in $[q]$ for $j=n-L+1,\ldots,n$. Since $d(\vx_i(j),\vy(j))\in [0,1]$ and $\disth(\vc_i(j),\vr(j))\in \{0,1\}$,
 the difference between $d(\vx_i,\vy)$ and $\disth(\vc_i,\vr)$ is at most $L$. The proof is completed.
\end{proof}


We further relax $ \rad $ by defining the \emph{weighted average radius}. 
For $ \vx_1, \cdots, \vx_L\in(\partial\Delta)^n $ and $ \omega\in\Delta([L]) $, let 
\begin{align}
\ol{\rad}_\omega(\vx_1, \cdots, \vx_L) &\coloneqq \frac{1}{n}\min_{\vy\in\Delta^n} \exptover{i\sim \omega}{d(\vx_i, \vy)}
= \frac{1}{n}\min_{\vy\in\Delta^n} \sum_{i\in[L]} \omega(i) d(\vx_i, \vy) . \notag 
\end{align}
In words, weighted average radius is obtained by replacing the maximization over $ i\in[L] $ in the definition of relaxed radius (see \Cref{eqn:def-relaxed-cheb-rad}) with an average with respect to a distribution $\omega$. 

Since the objective of the minimization is separable, one can minimize over each $ y(j) $ individually and obtain an alternative expression. 
Suppose $ \vx_1, \cdots, \vx_L\in(\partial\Delta)^n $ are the images of $ \vc_1, \cdots, \vc_L\in[q]^n $ under the embedding $\phi$. 
Then
\begin{align}
\ol{\rad}_\omega(\vx_1, \cdots, \vx_L)
&= \frac{1}{n}\min_{\vy\in\Delta^n} \sum_{i\in[L]} \omega(i) d(\vx_i, \vy) \notag \\
&= \frac{1}{2n} \min_{(\vy_1, \cdots, \vy_n)\in\Delta^n} \sum_{i\in[L]} \omega(i) \sum_{j = 1}^n \paren{ 1 - \vy_j(\vc_i(j)) + \sum_{x\in[q]\setminus\{\vc_i(j)\}} \vy_j(x) } \notag \\
&= \frac{1}{2n} \min_{(\vy_1, \cdots, \vy_n)\in\Delta^n} \sum_{j = 1}^n \brack{ \sum_{i\in[L]} \omega(i) \paren{ 1 - \vy_j(\vc_i(j)) + \sum_{x\in[q]\setminus\{\vc_i(j)\}} \vy_j(x) } } \notag \\
&= \frac{1}{2n} \sum_{j = 1}^n \min_{\vy_j\in\Delta} \brack{ \sum_{i\in[L]} \omega(i) \paren{ 1 - 2\vy_j(\vc_i(j)) + \sum_{x\in[q]} \vy_j(x) } } \label{eqn:wt_avg_rad_1} \\
&= \frac{1}{2n} \sum_{j = 1}^n \min_{\vy_j\in\Delta} \brack{ \sum_{i\in[L]} \omega(i) \paren{ 2 - 2\vy_j(\vc_i(j)) } } \notag \\
&= \frac{1}{n}\sum_{j = 1}^n \min_{\vy_j\in\Delta} \brack{ 1 - \sum_{i\in[L]} \omega(i) \vy_j(\vc_i(j)) } \notag \\
&= 1 - \frac{1}{n}\sum_{j = 1}^n \max_{\vy_j\in\Delta} \brack{ \sum_{i\in[L]} \omega(i) \vy_j(\vc_i(j)) } \notag \\
&= 1 - \frac{1}{n}\sum_{j = 1}^n \max_{x\in[q]} \sum_{\substack{i\in[L] \\ \vc_i(j) = x}} \omega(i) . \label{eqn:wt_avg_rad_2} 
\end{align}
\Cref{eqn:wt_avg_rad_1} holds since the objective in brackets only depends on $ \vy_j $, not on other $ (\vy_{j'})_{j'\in[n]\setminus\{j\}} $. 
To see \Cref{eqn:wt_avg_rad_2}, we note that a maximizer $ \vy^*\in\Delta $ to the following problem 
\begin{align}
\max_{\vy\in\Delta} \sum_{i\in[L]} \omega(i) y(x_i) , \notag 
\end{align}
where $ \omega\in\Delta([L]) $ and $ (x_1, \cdots, x_L)\in[q]^L $ are fixed, is given by $ \vy^* = \ve_{x^*} $ where $ x^*\in[q] $ satisfies 
\begin{align}
    x^* &\in \argmax_{x\in[q]} \sum_{i\in[L]} \omega(i) \indicator{x_i = x} . \notag 
\end{align}

Obviously, by definition, for any $ \vx_1, \cdots, \vx_L \in (\partial\Delta)^n $ and $ \omega\in\Delta([L]) $, 
\begin{align}
\ol{\rad}_\omega(\vx_1, \cdots, \vx_L) &\le \rad(\vx_1, \cdots, \vx_L) . \notag 
\end{align}
In fact, the following lemma shows that $\rad$ is equal to the maximum $ \ol{\rad}_\omega $ over $\omega$. 

\begin{lemma}[$ \rad $ equals maximum $ \ol{\rad}_\omega $]
For any $ \vx_1, \cdots, \vx_L\in(\partial\Delta)^n $, 
\begin{align}
\rad(\vx_1, \cdots, \vx_L) &= \max_{\omega\in\Delta([L])} \ol{\rad}_\omega(\vx_1, \cdots, \vx_L) . \notag 
\end{align}
\end{lemma}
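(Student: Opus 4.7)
The plan is to realize the identity as an instance of the minimax theorem applied to a bilinear saddle-point problem on a product of compact convex sets.

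First, I rewrite the ``max over index'' in the definition of $\rad$ as a ``max over distribution'' on $[L]$. Since for any vector $(a_1,\dots,a_L)\in\bbR^L$ we have $\max_{i\in[L]} a_i = \max_{\omega\in\Delta([L])} \sum_{i\in[L]} \omega(i)\, a_i$ (linear functions on a simplex attain their maxima at vertices), applying this to $a_i = d(\vx_i,\vy)$ gives
\begin{align}
    \rad(\vx_1,\dots,\vx_L) &= \frac{1}{n}\min_{\vy\in\Delta^n}\max_{\omega\in\Delta([L])} \sum_{i\in[L]} \omega(i)\, d(\vx_i,\vy). \notag
\end{align}

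Next, I argue that the inner and outer optimizations can be swapped. The key observation is that the objective $F(\vy,\omega) := \sum_{i\in[L]} \omega(i)\, d(\vx_i,\vy)$ is bilinear. Concretely, $\omega \mapsto F(\vy,\omega)$ is linear (hence concave) for every fixed $\vy\in\Delta^n$, while $\vy\mapsto F(\vy,\omega)$ is affine (hence convex) for every fixed $\omega\in\Delta([L])$ because, by \Cref{eqn:ell1_dist} and \Cref{eqn:ell1_dist_vec}, each $d(\vx_i,\vy)$ is an affine function of $\vy$ (the image $\vx_i\in(\partial\Delta)^n$ is a fixed vertex tuple, so the $\ell_1$-distance unfolds into an affine expression in the coordinates of $\vy$). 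Moreover, both $\Delta^n$ and $\Delta([L])$ are nonempty, compact, and convex, and $F$ is continuous. Thus all hypotheses of Sion's (or von Neumann's) minimax theorem are satisfied, yielding
\begin{align}
    \min_{\vy\in\Delta^n}\max_{\omega\in\Delta([L])} F(\vy,\omega) &= \max_{\omega\in\Delta([L])} \min_{\vy\in\Delta^n} F(\vy,\omega). \notag
\end{align}

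Finally, I identify the right-hand side with $\max_\omega \ol{\rad}_\omega$: by definition,
\begin{align}
    \ol{\rad}_\omega(\vx_1,\dots,\vx_L) &= \frac{1}{n}\min_{\vy\in\Delta^n} \sum_{i\in[L]}\omega(i)\, d(\vx_i,\vy) = \frac{1}{n}\min_{\vy\in\Delta^n} F(\vy,\omega), \notag
\end{align}
so chaining the two displays gives the claimed equality. I expect no serious obstacle here, since the problem is genuinely bilinear and the domains are well-behaved simplices; the only mild care needed is to justify that $d(\vx_i,\vy)$ really is affine in $\vy$ (as opposed to merely convex), which in turn is what makes the minimax application trivial rather than requiring the full strength of Sion's theorem — indeed the finite-dimensional bilinear von Neumann minimax theorem suffices.
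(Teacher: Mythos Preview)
Your proposal is correct and follows essentially the same approach as the paper: rewrite the max over indices as a max over $\Delta([L])$, observe the objective is affine in both $\omega$ and $\vy$, and invoke von Neumann's minimax theorem to swap the min and max. The paper's own proof is virtually identical, only slightly terser in verifying the hypotheses.
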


\begin{proof}
Note that 
\begin{align}
\rad(\vx_1, \cdots, \vx_L) &\coloneqq \frac{1}{n}\min_{\vy\in\Delta^n} \max_{i\in[L]} d(\vx_i, \vy) 
= \frac{1}{n}\min_{\vy\in\Delta^n} \max_{\omega\in\Delta([L])} \sum_{i\in[L]} \omega(i) d(\vx_i, \vy) , \notag 
\end{align}
since the inner maximum is anyway achieved by a singleton distribution. 
Note also that the objective function 
\begin{align}
\sum_{i\in[L]} \omega(i) d(\vx_i, \vy)
&= \frac{1}{2} \sum_{i\in[L]} \omega(i) \sum_{j = 1}^n \paren{ 1 - y(j, \vc_i(j)) + \sum_{x\in[q]\setminus\{\vc_i(j)\}} y(j, x) } \notag 
\end{align}
is affine in $\omega$ and linear in $ \vy $. 
Therefore, von Neumann's minimax theorem allows us to interchange $\min$ and $\max$ and obtain
\begin{align}
\rad(\vx_1, \cdots, \vx_L) &= \frac{1}{n}\max_{\omega\in\Delta([L])} \min_{\vy\in\Delta^n} \sum_{i\in[L]} \omega(i) d(\vx_i, \vy) 
= \max_{\omega\in\Delta([L])} \ol{\rad}_\omega(\vx_1, \cdots, \vx_L) , \notag 
\end{align}
as claimed by the lemma. 
\end{proof}


In fact, we can say something stronger: it is not necessary to maximize over the entire (uncountable) probability simplex $\Delta([L])$. Instead, we can extract a finite subset $\Omega_L \subset \Delta([L])$ and maximize over this set to recover $\rad$. The following lemma is analogous to \cite[Lemma~6]{abp-2018}.


\begin{lemma}[$\rad$ is achieved by finitely many $ \omega $]\label{lm:finite}
For every $L$, there exists a finite set of probability measures $\Omega_L\subseteq \Delta([L])$ such that 
$$
\rad(\vx_1,\ldots,\vx_L)=\max_{\omega\in \Omega_L}\ol{\rad}_{\omega}(\vx_1,\ldots,\vx_L).
$$
for all $\vx_1,\ldots,\vx_L\in \partial \Delta^n$. 
\end{lemma}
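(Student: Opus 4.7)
The plan is to exploit the explicit formula in \Cref{eqn:wt_avg_rad_2}, which expresses $\ol{\rad}_\omega(\vx_1,\ldots,\vx_L) = 1 - \frac{1}{n}\sum_{j=1}^n \max_{x\in[q]}\sum_{i:\vc_i(j)=x}\omega(i)$, where $\vc_i = \phi^{-1}(\vx_i) \in [q]^n$. For each column $j\in[n]$, the $L$-tuple $(\vc_1(j),\ldots,\vc_L(j))\in[q]^L$ induces a partition $P_j$ of $[L]$ into at most $q$ parts (by grouping indices of equal symbols), and the inner maximum becomes $\max_{S\in P_j}\sum_{i\in S}\omega(i)$. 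Thus the dependence of $\ol{\rad}_\omega$ on the tuple factors through the multiset $(P_j)_{j\in[n]}$, whose entries lie in the \emph{finite, universal} set $\mathcal{P}_{q,L}$ of partitions of $[L]$ into at most $q$ parts.

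I would then build a finite, universal polytopal subdivision of $\Delta([L])$. Let $\mathcal{H}$ be the hyperplane arrangement in $\mathbb{R}^L$ consisting of the affine hyperplane $\{\sum_i \omega(i) = 1\}$, the boundary hyperplanes $\{\omega(i) = 0\}$ for $i\in[L]$, and all ``tie'' hyperplanes $\{\sum_{i\in S}\omega(i) = \sum_{i\in S'}\omega(i)\}$ for distinct $S, S' \subseteq [L]$. Intersecting $\mathcal{H}$ with $\Delta([L])$ partitions the simplex into finitely many closed convex polytopal cells. Define $\Omega_L$ to be the set of vertices of these cells; equivalently, $\Omega_L$ consists of points of $\Delta([L])$ determined by $L-1$ linearly independent equations drawn from $\mathcal{H}$. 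By construction $\Omega_L$ is finite and depends only on $L$, not on the tuple.

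The heart of the argument is that on the interior of each cell all comparisons $\sum_{i\in S}\omega(i) \gtrless \sum_{i\in S'}\omega(i)$ have constant sign, so for every partition $P\in\mathcal{P}_{q,L}$ the argmax in $\max_{S\in P}\sum_{i\in S}\omega(i)$ is a fixed $S^{*}(P)\in P$, making the map $\omega\mapsto \max_{S\in P}\sum_{i\in S}\omega(i) = \sum_{i\in S^{*}(P)}\omega(i)$ affine on the closed cell (by continuity). Summing \Cref{eqn:wt_avg_rad_2} over $j$, the function $\omega\mapsto \ol{\rad}_\omega(\vx_1,\ldots,\vx_L)$ is continuous on $\Delta([L])$ and affine on each closed cell, for every tuple $\vx_1,\ldots,\vx_L\in(\partial\Delta)^n$. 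An affine function on a convex polytope attains its maximum at a vertex, so maximizing cell by cell and noting that all cell-vertices lie in $\Omega_L$ yields $\max_{\omega\in\Delta([L])}\ol{\rad}_\omega(\vx_1,\ldots,\vx_L) = \max_{\omega\in\Omega_L}\ol{\rad}_\omega(\vx_1,\ldots,\vx_L)$. Combining with the preceding lemma ($\rad = \max_{\omega\in\Delta([L])}\ol{\rad}_\omega$) concludes the proof. The only subtlety is ensuring that $\Omega_L$ can be chosen \emph{once and for all}, independently of the tuple; this is secured by including in $\mathcal{H}$ the tie hyperplanes for \emph{all} subsets of $[L]$, rather than only those arising from a specific tuple.
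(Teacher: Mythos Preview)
Your proof is correct and takes essentially the same approach as the paper: both recognize via \Cref{eqn:wt_avg_rad_2} that $\omega\mapsto\ol{\rad}_\omega$ is piecewise-affine on $\Delta([L])$, carve the simplex into finitely many polytopes on which it is affine, and take the union of their vertices as $\Omega_L$. The only cosmetic difference is that the paper indexes its polytopes by ``signatures'' $S:[q]^L\to[q]$ recording the argmax choice for each column type, whereas you use the hyperplane arrangement of all subset-sum tie hyperplanes; your subdivision is finer and yields an $\Omega_L$ depending only on $L$ (not $q$), but the mechanism is the same.
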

\begin{proof}
The idea is to view the computation of $\max_{\omega \in \Omega_L}\ol{\rad}_{\omega}(\vx_1,\dots,\vx_L)$ as finding the maximum among some finite set of linear program maxima over some convex polytopes, and then to take $\Omega_L$ to be the set of vertices of the defined convex polytopes. 

First, we define the convex polytopes based on a ($q$-ary version of a) signature. For each $\omega\in \Delta([L])$, we define a \emph{signature} for $\omega$ which is a function $S_\omega: [q]^L\rightarrow [q]$ such that
\[
    S_\omega(\vu) \in \argmax_{x\in [q]}\sum_{i:\vu(i)=x}\omega(i)
\]
for $\vu\in [q]^L$. Define further the $q$ halfspaces $H_{\vu,x}:=\{\omega \in \Delta([L]): \sum_{i:\vu(i)=x}\omega(i) \geq 1/q\}$ for $x \in [q]$. 
Observe that if $S(\vu)=x$ where $S$ is a signature for $\omega$ then $\omega \in H_{\vu,x}$. Thus, by ranging over the choices for $\vu \in [q]^L$ and $x \in [q]$ we obtain $q^{L+1}$ halfspaces that partition the $(L-1)$-dimensional space $\Delta([L])$ into at most $\sum_{j\leq L-1}\binom{q^{L+1}}{j}$ regions.

For each possible signature $S:[q]^L \to [q]$, let $\Omega_S=\{\omega\in \Delta([L])\colon S \text{ is a signature for } \omega\}$, and note that $\Omega_S$ is a convex polytope. Indeed, it is an intersection over $\vu \in [q]^L$ of the convex polytopes 
\begin{align*}
    \left\{\omega \in \Delta([L])\colon \exists \text{ signature } \right.&\left.S_\omega \text{ for } \omega \text{ s.t. }S_\omega(\vu) = S(\vu)\right\} \\
    &= \bigcap_{y \in [q]\setminus S(\vu)} \left\{\omega \in \Delta([L])\colon \sum_{i:\vu(i)=S(\vu)}\omega(i) \geq \sum_{i:\vu(i)=y}\omega(i)\right\}
\end{align*}
where $S_\omega$ is a signature for $\omega$. 
Now, to maximize 
\[
    \ol{\rad}_\omega(\vx_1, \cdots, \vx_L)
= \frac{1}{n}\min_{\vy\in\Delta^n} \sum_{i\in[L]} \omega(i) d(\vx_i, \vy) 
\]
over $\omega \in \Omega_L$, consider the set $T_\vu=\{i\in [n]: (\vx_1(i),\ldots,\vx_L(i))=\vu\}$ for $\vu\in [q]^L$ and let $a_{\vu}=\frac{|T_\vu|}{n}$. We claim it suffices to find the maximum of the following linear function: 
\begin{align} \label{eq:to-maximize-relaxed}
\sum_{\vu\in [q]^L}a_{\vu}y_{\vu},\quad \text{s.t.}\quad y_{\vu}=\sum_{i:\vu(i)=S(\vu)}\omega(i)
\end{align}
over all $\omega\in \Omega_S$. Indeed, by \Cref{eqn:wt_avg_rad_2}, we have 
$$
\ol{\rad}_\omega(\vx_1, \cdots, \vx_L)
= 1 - \frac{1}{n}\sum_{j = 1}^n \max_{x\in[q]} \sum_{\substack{i\in[L] \\ c_i(j) = x}} \omega(i).
$$
This implies that a maximizer only depends on the index set $T_\vu$, and furthermore that its value is determined by the $a_{\vu}$'s as in \Cref{eq:to-maximize-relaxed}.


We can thus take the union of all vertex sets of all polytopes $\Omega_S$ for all signatures $S$.
Multiplying this by the $O_{q,L}(1)$ regions defined by all the halfspaces $H_{\vu,x}$ we obtain a finite set of vertices, as desired. 
\end{proof}
\subsection{Properties of $f(P,\omega)$} \label{sec:fun-props}
Now, we consider the expected weighted average radius of a sequence of i.i.d.\ symbols. 
Specifically, for $ P\in\Delta([q]) $ and $ \omega\in\Delta([L]) $, let
\begin{align}
f(P, \omega) &\coloneqq \exptover{(X_1, \cdots, X_L)\sim P^{\ot L}}{\ol{\rad}_\omega(\ve_{X_1}, \cdots, \ve_{X_L})} . \notag 
\end{align}
\cite{abp-2018} studies $f(P,\omega)$ for $q=2$ and even $L$. In this case, one can take advantage of the fact that $P \in \Delta([2])$ may be parametrized by a single real number, and thereby yield a fairly simple expression for $f(P,\omega)$. 

Nonetheless, in this subsection, we will show that all properties of $f(P,\omega)$ in \cite{abp-2018} holding for $q=2$ and even $L$ can be generalized to any $q\geq 3$ and any $L$. Let us first provide a more explicit expression for $ f(P, \omega) $ using \Cref{eqn:wt_avg_rad_2}: 
\begin{align}
f(P, \omega) &\coloneqq \exptover{(X_1, \cdots, X_L)\sim P^{\ot L}}{1 - \max_{x\in[q]} \sum_{\substack{i\in[L] \\ X_i = x}} \omega(i)} \notag \\
&= 1 - \sum_{(x_1, \cdots, x_L)\in[q]^L} \paren{\prod_{i = 1}^L P(x_i)} \max_{x\in[q]} \sum_{i\in[L]} \omega(i) \indicator{x_i = x} . \notag
\end{align}

We define the shorhand notation 
\begin{align} \label{eq:max_omega-def}
    \mathrm{max}_{\omega}(x_1, \cdots, x_L) &\coloneqq \max_{x\in[q]} \sum_{\substack{i\in[L] \\ x_i = x}} \omega(i)
\end{align}
for any $\omega \in \Delta([L])$ and $(x_1,\dots,x_L) \in [q]^L$.

The first property that we would like to establish is that $f(P,\omega)$ only increases if $\omega$ is replaced by $U_L$, and furthermore that the maximum is uniquely obtained at $U_L$ if $P(x)>0$ for all $x \in [q]$. In order to do this, we will regularly ``average-out'' coordinates of $\omega$ and then show that the function value increases (or at least, does not decrease). To be introduce some terminology, for $S \subseteq [L]$ we say that $\ol \omega$ is obtained from $\omega$ by \emph{averaging-out} the subset $S$ of coordinates if $\ol \omega$ is defined as
\[
    \ol\omega(i) = \begin{cases}
        \frac{\sum_{j \in S}\omega(j)}{|S|} & i \in S \\
        \omega(i) & i \notin S
    \end{cases} \ .
\]
The following lemma gives a simple criterion for establishing that, if $\ol\omega$ is obtained from $\omega$ by averaging two coordinates, then $f(P,\ol\omega) \leq f(P,\omega)$, and it furthermore gives a criterion for the inequality to be strict. The main thrust of the proof of \Cref{lm:convexity} is thus to show that this criterion is always satisfied. 


\begin{lemma} \label{lem:criterion-for-increase}
    Let $P \in \Delta([q])$ and $\omega \in \Delta([L])$. Suppose $\omega(L-1)\neq \omega(L)$ and that $\ol\omega \in \Delta([L])$ is obtained by averaging-out the last two coordinates of $\omega$. Suppose that for all $(x_1,\dots,x_L) \in [q]^L$ we have 
    \begin{align}
        \frac{1}{2} \paren{ \mathrm{max}_{\omega}(x_1, \cdots, x_{L-1}, x_L) + \mathrm{max}_{\omega}(x_1, \cdots, x_{L}, x_{L-1}) }
        &\ge \mathrm{max}_{\ol{\omega}}(x_1, \cdots,x_{L-1}, x_L) . \label{eqn:maxq_convex} 
    \end{align}
    Then $f(P,\ol\omega) \geq f(P,\omega)$. 
    
    Furthermore, suppose that additionally there exists $(x_1,\dots,x_L) \in [q]^L$ with $\prod_{i=1}^LP(x_i)>0$ such that the inequality in \Cref{eqn:maxq_convex} is strict. Then $f(P,\ol \omega) > f(P,\omega)$. 
\end{lemma}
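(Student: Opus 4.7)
The plan is to prove this by a short symmetrization argument, using nothing more than the definition of $f(P, \omega)$ together with the symmetry of the product distribution $P^{\otimes L}$ under permuting coordinates.

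First I would write out $f(P,\omega) = 1 - \sum_{(x_1,\dots,x_L) \in [q]^L} \left(\prod_{i=1}^L P(x_i)\right) \mathrm{max}_\omega(x_1,\dots,x_L)$ (and similarly for $\ol\omega$), so that proving $f(P,\ol\omega) \ge f(P,\omega)$ reduces to proving the reverse inequality on these weighted sums of $\mathrm{max}_\omega$ versus $\mathrm{max}_{\ol\omega}$.

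The crucial observation is that $\prod_{i=1}^L P(x_i)$ is invariant under swapping the last two coordinates $x_{L-1}$ and $x_L$. Therefore, re-indexing the sum by this swap gives
\[
    \sum_{\vec x} \prod_i P(x_i)\, \mathrm{max}_\omega(x_1,\dots,x_{L-1},x_L) = \sum_{\vec x} \prod_i P(x_i)\, \mathrm{max}_\omega(x_1,\dots,x_L,x_{L-1}) ,
\]
and averaging these two identical sums,
\[
    \sum_{\vec x} \prod_i P(x_i)\, \mathrm{max}_\omega(\vec x) = \sum_{\vec x} \prod_i P(x_i) \cdot \tfrac{1}{2}\bigl(\mathrm{max}_\omega(x_1,\dots,x_{L-1},x_L) + \mathrm{max}_\omega(x_1,\dots,x_L,x_{L-1})\bigr) .
\]
Applying the hypothesis \Cref{eqn:maxq_convex} pointwise in $(x_1,\dots,x_L)$ then gives
\[
    \sum_{\vec x} \prod_i P(x_i)\, \mathrm{max}_\omega(\vec x) \;\ge\; \sum_{\vec x} \prod_i P(x_i)\, \mathrm{max}_{\ol\omega}(\vec x) ,
\]
which upon rearranging is exactly $f(P, \ol\omega) \ge f(P, \omega)$, proving the first claim.

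For the strict version, note that every term in the inequality is weighted by $\prod_i P(x_i) \ge 0$, so if \Cref{eqn:maxq_convex} is strict at some specific $(x_1^*,\dots,x_L^*)$ with $\prod_i P(x_i^*) > 0$, then that single summand contributes a strict gap while all other summands contribute a (possibly non-strict) inequality in the same direction; summing yields $f(P,\ol\omega) > f(P,\omega)$. There is no real obstacle here — the entire content of the lemma is the symmetrization trick, and the heavy lifting is deferred to verifying hypothesis \Cref{eqn:maxq_convex}, which is the subject of the subsequent work (i.e.\ \Cref{lm:convexity}).
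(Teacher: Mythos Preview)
Your proof is correct and follows essentially the same symmetrization argument as the paper: the paper phrases it as defining $\omega'$ by swapping the last two entries of $\omega$ and noting $f(P,\omega)=f(P,\omega')$, whereas you phrase it as re-indexing the sum via the swap $x_{L-1}\leftrightarrow x_L$ and using that $\prod_i P(x_i)$ is symmetric, but these are the same idea. The handling of the strict case is likewise identical.
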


\begin{proof}
    Define $ \omega'\in\Delta([L]) $ as 
\begin{align}
\omega'(i) &= \begin{cases}
\omega(i) , & i\in[L]\setminus\{L-1,L\} \\
\omega(L) , & i=L-1 \\
\omega(L-1) , & i=L
\end{cases} . \notag
\end{align}
That is, $\omega'$ is obtained by swapping the last two components of $\omega$. By symmetry, we have $f(P,\omega)=f(P,\omega')$ and so
\begin{align*}
    f(P,\omega) &= \frac{1}{2}(f(P,\omega)+f(P,\omega'))\\
    &= \frac{1}{2}\left(1-\sum_{(x_1, \cdots, x_L)\in[q]^L}\paren{\prod_{i=1}^LP(x_i)}\mathrm{max}_\omega(x_1,\ldots,x_{L-1},x_{L}) \right.\\
    &\quad\quad\quad\left.+ 1-\sum_{(x_1, \cdots, x_L)\in[q]^L}\paren{\prod_{i=1}^LP(x_i)}\mathrm{max}_{\omega'}(x_1,\ldots,x_{L-1},x_{L})\right)\\
    &= 1-\sum_{(x_1, \cdots, x_L)\in[q]^L}  \paren{\prod_{i=1}^LP(x_i)}\frac{1}{2} \left(\mathrm{max}_\omega(x_1,\ldots,x_{L-1},x_{L})+\mathrm{max}_\omega(x_1,\ldots,x_L,x_{L-1}))\right) \\ 
    &\leq 1-\sum_{(x_1, \cdots, x_L)\in[q]^L}  \paren{\prod_{i=1}^LP(x_i)}\mathrm{max}_{\ol\omega}(x_1,\ldots,x_{L-1},x_{L})\\
    &= f(P,\ol\omega) \ ,
\end{align*}
where the inequality follows from \Cref{eqn:maxq_convex}. From the above sequence of inequalities, it is also clear that if additionally there exists $(x_1,\dots,x_L) \in [q]^L$ with $\prod_{i=1}^LP(x_i)>0$ for which the inequality in \Cref{eqn:maxq_convex} is strict, then $f(P,\ol \omega) > f(P,\omega)$. 
\end{proof}

We now establish that the function value cannot decrease if $\omega$ is replaced by $U_L$. 

\begin{lemma}\label{lm:convexity}
\leavevmode
 For any $ P\in\Delta([q]) $ and $ \omega\in\Delta([L]) $, $ f(P, \omega) \le f(P, U_L) $. 
\end{lemma}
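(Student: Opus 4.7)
My plan is to bypass iterating \Cref{lem:criterion-for-increase} and obtain $f(P,\omega)\le f(P,U_L)$ in a single step by symmetrizing over the full symmetric group on $[L]$. The key observation is that $P^{\ot L}$ is invariant under permuting the $L$ coordinates, so for any permutation $\pi$ of $[L]$ the ``permuted weighting'' $\omega_\pi(i):=\omega(\pi^{-1}(i))$ satisfies $f(P,\omega_\pi)=f(P,\omega)$. Averaging this identity over all $L!$ permutations and pulling the sum inside the expectation rewrites
\[
    f(P,\omega) \;=\; 1 - \exptover{(X_1,\cdots,X_L)\sim P^{\ot L}}{\frac{1}{L!}\sum_{\pi}\mathrm{max}_{\omega_\pi}(X_1,\cdots,X_L)}.
\]

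Next I apply Jensen's inequality to the convex function $\max$ inside the inner average:
\[
    \frac{1}{L!}\sum_{\pi}\max_{x\in[q]} \sum_i \omega_\pi(i)\indicator{X_i=x} \;\ge\; \max_{x\in[q]} \sum_i \paren{\frac{1}{L!}\sum_\pi \omega_\pi(i)}\indicator{X_i=x}.
\]
A routine counting argument gives $\frac{1}{L!}\sum_\pi \omega_\pi(i)=\frac{1}{L}$ for every $i$: as $\pi$ ranges over the symmetric group, each original weight $\omega(j)$ appears in position $i$ exactly $(L-1)!$ times. Hence the right-hand side collapses to $\mathrm{max}_{U_L}(X_1,\cdots,X_L)$, and taking expectations produces
\[
    f(P,\omega) \;\le\; 1 - \exptover{(X_1,\cdots,X_L)\sim P^{\ot L}}{\mathrm{max}_{U_L}(X_1,\cdots,X_L)} \;=\; f(P,U_L).
\]

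I favor this symmetrize-and-Jensen route over iterating \Cref{lem:criterion-for-increase} because repeated pairwise averaging in general converges to $U_L$ only in the limit---for example, cleverly averaging different pairs of coordinates of $(1/2,1/4,1/4)$ strictly shrinks $\sum_i(\omega(i)-1/L)^2$ at each step but never attains $(1/3,1/3,1/3)$ in finitely many steps---so that alternative route would require an additional continuity argument. The direct approach avoids this, and in fact restricting my symmetrization to the two-element subgroup that swaps only the last two coordinates recovers exactly the inequality \Cref{eqn:maxq_convex}. The only step that requires any real thought is the Jensen inequality, which is immediate from convexity of $\max$, so no substantial obstacle is expected.
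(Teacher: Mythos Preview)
Your proof is correct and complete. It takes a genuinely different route from the paper's argument.

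The paper proceeds by iterating \Cref{lem:criterion-for-increase}: it verifies the pointwise inequality \Cref{eqn:maxq_convex} by a case analysis on whether the maximizing symbol $x$ coincides with $x_{L-1}$ or $x_L$, deduces that averaging any two coordinates of $\omega$ can only increase $f(P,\omega)$, and then produces an infinite sequence of pairwise averagings converging to $U_L$; a final continuity argument for $\omega\mapsto f(P,\omega)$ closes the gap. Your symmetrize-and-Jensen argument collapses all of this into one line: averaging over the full symmetric group replaces the limiting procedure, and convexity of $\max$ replaces the case analysis. As you observe, restricting your Jensen step to the subgroup swapping two coordinates recovers exactly \Cref{eqn:maxq_convex}, so your argument is a strict generalization of the paper's key inequality, and your example $(1/2,1/4,1/4)$ correctly illustrates why the iterative route genuinely needs the continuity step.

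What the paper's approach buys is that the pairwise inequality \Cref{eqn:maxq_convex}, together with its strict version in \Cref{lem:criterion-for-increase}, is reused in the proof of \Cref{thm:maxq-unique} to establish that $U_L$ is the \emph{unique} maximizer: there one exhibits specific tuples $(x_1,\dots,x_L)$ for which a single pairwise averaging gives a strict increase. Your global symmetrization could in principle also yield strictness (by tracking when the Jensen step is tight), but the paper's case-by-case pairwise framework is more directly suited to that task. For \Cref{lm:convexity} itself, however, your argument is the cleaner one.
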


\begin{proof}
Fix any $ (x_1, \cdots, x_L)\in[q]^L $. 
Let $ \omega\in\Delta([L]) $ be non-uniform. 
Without loss of generality, assume $ \omega(L-1)\neq \omega(L) $. 
Let $ \ol{\omega}\in\Delta([L]) $ be obtained by uniformizing the last two components of $\omega$, i.e., 
\begin{align}
\ol{\omega}(i) &= \begin{cases}
\omega(i) , & i\in[L]\setminus\{L-1,L\} \\
\frac{1}{2} (\omega(L-1) + \omega(L)) , & i\in\{L-1,L\}
\end{cases} . \notag 
\end{align}
We claim $f(P,\ol \omega) \geq f(P,\omega)$. By \Cref{lem:criterion-for-increase}, we just need to establish \Cref{eqn:maxq_convex}. 

\Cref{eqn:maxq_convex} trivially holds if $ x_{L-1} = x_L $. 
We therefore assume below $ x_{L-1}\ne x_L $. Let $x_{L-1}=a$ and $x_L=b$. 
Let 
$$
\omega^{(a)}=\sum_{\substack{i\in[L-2] \\ x_i = a}}\omega(i),\qquad \omega^{(b)}=\sum_{\substack{i\in[L-2] \\ x_i = b}}\omega(i) \ .
$$
Then, we have 
$$
\sum_{\substack{i\in[L] \\ x_i = a}} \ol{\omega}(i)=\omega^{(a)}+\frac{1}{2}(\omega(L-1)+\omega(L)), \qquad \sum_{\substack{i\in[L] \\ x_i = b}} \ol{\omega}(i)=\omega^{(b)}+\frac{1}{2}(\omega(L-1)+\omega(L)).
$$
We first assume that there exists $c\notin \{a,b\}$ such that 
$$
\mathrm{max}_{\ol{\omega}}(x_1, \cdots, x_L)=\sum_{\substack{i\in[L] \\ x_i = c}}\ol{\omega}(i)=\sum_{\substack{i\in[L] \\ x_i = c}}{\omega}(i)
$$
where the second equality follows since the set $\{i\in [q]: x_i=c\}$ does not contain $L-1,L$. \Cref{eqn:maxq_convex} therefore holds as
$$
\mathrm{max}_{{\omega}}(x_1, \cdots,x_{L-1}, x_L)\geq \sum_{\substack{i\in[L] \\ x_i = c}}{\omega}(i), \qquad
\mathrm{max}_{\omega}(x_1, \cdots, x_L,x_{L-1})\geq \sum_{\substack{i\in[L] \\ x_i = c}}{\omega}(i).
$$
We proceed to the case that 
$$
\mathrm{max}_{\ol{\omega}}(x_1, \cdots, x_L)=\max\left\{\frac{1}{2}(\omega(L-1)+\omega(L))+\omega^{(a)},\frac{1}{2}(\omega(L-1)+\omega(L))+\omega^{(b)}\right\}.
$$
\Cref{eqn:maxq_convex} holds as
$$
\mathrm{max}_{\omega}(x_1, \cdots,x_{L-1}, x_L)\geq \max\left\{ \omega^{(a)}+\omega(L-1),\omega^{(b)}+\omega(L) \right\}
$$
and
$$
\mathrm{max}_{{\omega}}(x_1, \cdots,x_{L}, x_{L-1})\geq \max\left\{ \omega^{(a)}+\omega(L),\omega^{(b)}+\omega(L-1) \right\}.
$$

Thus, \Cref{lem:criterion-for-increase} implies $f(P,\ol \omega) \geq f(P,\omega)$, as desired. 

We can then continue averaging components of $\omega$ and in this way obtain a sequence $(\omega_i)_{i \in \bbN}$ of distributions with $\omega_1 = \omega$. This sequence converges in $\ell_\infty$-norm to the uniform distribution $U_L$ and satisfies $f(P,\omega_{i+1}) \geq f(P,\omega_i)$ for all $i \in \bbN$. Observing that $\omega \mapsto f(P,\omega)$ is a continuous function -- the term $\max_{x \in [q]} \sum_{i \in [L]} \omega(i) \indicator{x_i = x}$ is a maximum over linear functions of $\omega$, hence linear, implying that $f(P,\cdot)$ is a linear combination of continuous functions -- it follows that $f(P,U_L) = \lim_{i \to \infty} f(P,\omega_i)$, and in particular that $f(P,U_L) \geq f(P,\omega_1) = f(P,\omega)$, as desired. 
\end{proof}


We now strengthen the conclusion of \Cref{lm:convexity} by showing that for all $q \geq 3$ and $L \geq 2$ the function $\omega \mapsto f(P,\omega)$ is \emph{uniquely} maximized by the setting $\omega = U_L$, except for degenerate cases where $P(x)=0$ for some $x \in [q]$. 

Before stating and proving this fact, we note that the proof of \Cref{lm:convexity} in fact shows that we can average out any subset of coordinates of $\omega$ and only increase the value of $f(P,\omega)$. We formalize this fact in the following lemma, which will be useful in the following arguments. 

\begin{lemma} \label{lem:average-subset-omega}
    Let $P \in \Delta([q])$, $\omega \in \Delta([L])$ and $S \subseteq [L]$. Let $\ol\omega$ be obtained from $\omega$ by averaging-out the subset $S$ of coordinates. 
    Then $f(P,\ol\omega) \geq f(P,\omega)$.
\end{lemma}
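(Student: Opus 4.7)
The plan is to reduce the general averaging-out of a subset $S$ to iterated pairwise averaging inside $S$, leveraging the pairwise step that is already verified in the proof of Lemma~\ref{lm:convexity}.

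First I would observe that $f(P,\omega)$ is symmetric under permutations of the coordinates of $\omega$, so the argument used in the proof of Lemma~\ref{lm:convexity} to average-out the last two coordinates applies, mutatis mutandis, to any pair $(i,j)$ with $i,j\in[L]$ and $\omega(i)\ne\omega(j)$. Concretely, if $\omega'$ is obtained from $\omega$ by averaging $\omega(i)$ and $\omega(j)$ (and leaving the other coordinates fixed), then the same case analysis (split on whether the maximizer index in \Cref{eq:max_omega-def} equals one of the two symbols at positions $i,j$ or a third one) verifies \Cref{eqn:maxq_convex} with $L-1,L$ replaced by $i,j$, so Lemma~\ref{lem:criterion-for-increase} yields $f(P,\omega')\ge f(P,\omega)$.

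Next, I would build a sequence $(\omega_k)_{k\ge 0}$ with $\omega_0=\omega$ by iteratively choosing two indices $i_k,j_k\in S$ with $\omega_k(i_k)\ne\omega_k(j_k)$ (if any exist) and setting $\omega_{k+1}$ to be the result of averaging those two coordinates. If at some step no such pair exists, then all $\omega_k(i)$ for $i\in S$ are equal, and by mass conservation $\omega_k=\ol\omega$, so we are done. Otherwise the previous paragraph gives $f(P,\omega_{k+1})\ge f(P,\omega_k)$ for every $k$, the total mass $\sum_{i\in S}\omega_k(i)$ is preserved, and the coordinates outside $S$ remain fixed. The in-$S$ variance $\sum_{i\in S}(\omega_k(i)-\alpha)^2$, where $\alpha\coloneqq\frac{1}{|S|}\sum_{i\in S}\omega(i)$, decreases by $\tfrac{1}{2}(\omega_k(i_k)-\omega_k(j_k))^2$ at step $k$ and is nonnegative, so with a sensible schedule (for instance, always averaging an index attaining the current maximum with an index attaining the current minimum among $\{\omega_k(i):i\in S\}$) one easily shows $\omega_k\to\ol\omega$ coordinate-wise.

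Finally, since $\omega\mapsto f(P,\omega)$ is continuous (as noted in the proof of Lemma~\ref{lm:convexity}), the monotone sequence $f(P,\omega_k)$ converges to $f(P,\ol\omega)$, whence
\[
f(P,\ol\omega)=\lim_{k\to\infty}f(P,\omega_k)\ge f(P,\omega_0)=f(P,\omega),
\]
as desired. I expect the main (minor) obstacle to be merely bookkeeping the convergence of the averaging schedule; all of the genuine content of the argument is the pairwise inequality \Cref{eqn:maxq_convex}, which has already been established in Lemma~\ref{lm:convexity}.
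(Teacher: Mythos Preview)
Your proposal is correct and is essentially the approach the paper has in mind: the paper does not give a standalone proof of this lemma but simply remarks that the proof of Lemma~\ref{lm:convexity} already shows one can average out any subset of coordinates, i.e., exactly the iterated pairwise-averaging-plus-continuity argument you spell out, restricted to indices in $S$. Your bookkeeping (symmetry to move the averaged pair, variance decrease and a max/min schedule for convergence) is a faithful and careful unpacking of that remark.
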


\begin{theorem} \label{thm:maxq-unique}
    Let $q \geq 3$, $L \geq 2$ and let $P \in \Delta([q])$ be such that $P(x)>0$ \footnote{In fact, our proof only apply with $P=U_q$ which clearly satisfies the condition.} for all $x \in [q]$. Then for all $\omega \in \Delta([L])$, $f(P,\omega) \leq f(P,U_L)$ with equality if and only if $\omega = U_L$.
\end{theorem}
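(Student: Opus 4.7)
The plan is to leverage \Cref{lm:convexity} together with the hypothesis $P(x) > 0$ for every $x \in [q]$ to upgrade the weak inequality $f(P,\omega) \le f(P,U_L)$ to a strict one whenever $\omega \ne U_L$. The starting observation is the permutation-symmetry $f(P, \sigma\omega) = f(P, \omega)$ for every $\sigma \in S_L$, which follows from the exchangeability of $\vu \sim P^{\ot L}$. Averaging this identity over all $\sigma$ and applying Jensen's inequality to the convex function $\omega' \mapsto \mathrm{max}_{\omega'}(\vu)$ yields
\begin{align*}
    f(P, \omega) \;=\; 1 - \exptover{\vu \sim P^{\ot L}}{\frac{1}{L!}\sum_{\sigma \in S_L} \mathrm{max}_{\sigma\omega}(\vu)} \;\le\; 1 - \exptover{\vu \sim P^{\ot L}}{\mathrm{max}_{U_L}(\vu)} \;=\; f(P, U_L).
\end{align*}
Since $P(\vu) = \prod_{i=1}^L P(u_i) > 0$ for every $\vu \in [q]^L$, strictness reduces to exhibiting a single tuple $\vu$ for which $\frac{1}{L!}\sum_\sigma \mathrm{max}_{\sigma\omega}(\vu) > \mathrm{max}_{U_L}(\vu)$.

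To identify such a $\vu$, let $A_x := \{i \in [L] : u_i = x\}$ and $n_x := |A_x|$, and substitute $B_x := \sigma^{-1}(A_x)$: this realizes $(B_x)_{x \in [q]}$ as a uniformly random ordered partition of $[L]$ with prescribed block sizes $(n_x)_x$, so that
\begin{align*}
    \frac{1}{L!}\sum_{\sigma \in S_L} \mathrm{max}_{\sigma\omega}(\vu) \;=\; \exptover{(B_x)}{\max_{x \in [q]} \omega(B_x)}, \qquad \omega(B) \;:=\; \sum_{j \in B} \omega(j),
\end{align*}
while $\mathrm{max}_{U_L}(\vu) = n^*/L$ with $n^* := \max_x n_x$. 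The key idea is to choose $\vu$ so that $n^*$ is attained at (at least) \emph{two} distinct symbols $x_1 \ne x_2$. For $q \ge 3$ such a tuple always exists: take $\vu = (1, 2, \ldots, L)$ if $L \le q$; take $\vu = (1,2,1,2,\ldots,1,2)$ (histogram $(L/2,L/2,0,\ldots)$) if $L > q$ is even; and take $\vu = (1,2,3,1,2,1,2,\ldots,1,2)$ (histogram $((L-1)/2,(L-1)/2,1,0,\ldots)$) if $L > q$ is odd. The last case is where the hypothesis $q \ge 3$ is essential---for $q = 2$ and odd $L$, every histogram forces a unique maximum, matching the contrasting binary behavior documented in \cite{abp-2018}.

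For this chosen $\vu$, suppose for contradiction $\exptover{(B_x)}{\max_x \omega(B_x)} = n^*/L$. Since $\exptover{(B_x)}{\omega(B_{x_1})} = n^*/L$ and $\omega(B_{x_1}) \le \max_x \omega(B_x)$ pointwise, equality of expectations forces $\max_x \omega(B_x) = \omega(B_{x_1})$ almost surely; applying the same argument with $x_2$ in place of $x_1$ (valid because both blocks have the common size $n^*$) gives $\max_x \omega(B_x) = \omega(B_{x_2})$ almost surely, so $\omega(B_{x_1}) = \omega(B_{x_2})$ for every valid partition. Equivalently, $\omega(S) = \omega(T)$ for every pair of disjoint $n^*$-subsets $S, T \subseteq [L]$, since $2n^* \le L$ ensures any such pair can be completed by splitting the remaining $L - 2n^*$ elements into blocks of sizes $(n_x)_{x \ne x_1, x_2}$. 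Given $S, T$ and any $i \in S$, $j \in T$, swapping $i$ and $j$ produces disjoint $n^*$-subsets $S' := (S \setminus \{i\}) \cup \{j\}$ and $T' := (T \setminus \{j\}) \cup \{i\}$; combining $\omega(S') = \omega(T')$ with $\omega(S) = \omega(T)$ yields $\omega(i) = \omega(j)$. Varying $i, j \in [L]$ forces $\omega$ to be constant, i.e.\ $\omega = U_L$, contradicting the hypothesis. The main obstacle in the plan is the combinatorial construction of $\vu$ with two tied histogram maxima; once in hand, strictness follows from a short probabilistic argument together with this swap-based characterization of uniformity.
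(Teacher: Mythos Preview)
Your proof is correct and takes a genuinely different route from the paper's. The paper argues by repeatedly averaging-out pairs of coordinates of $\omega$ (\Cref{lem:criterion-for-increase,lem:average-subset-omega}) and then, via a case split on the parity of $L$ and the multiplicity structure of $\omega$, exhibits for each case a concrete tuple $(x_1,\dots,x_L)$ witnessing strictness in \Cref{eqn:maxq_convex}. You instead symmetrize over the \emph{entire} group $S_L$ at once and apply Jensen to the convex map $\omega'\mapsto\mathrm{max}_{\omega'}(\vu)$, which in one line reproves \Cref{lm:convexity} and reduces strictness to a single clean question: does some $\vu$ have two tied histogram maxima? Your three constructions ($L\le q$; $L>q$ even; $L>q$ odd using a third symbol) handle this uniformly, and the expectation-matching step---$\expt{\omega(B_{x_1})}=n^*/L=\expt{\max_x\omega(B_x)}$ forces $\omega(B_{x_1})=\omega(B_{x_2})$ on every partition, whence the swap argument pins down $\omega=U_L$---is a nice replacement for the paper's case analysis on $\omega$. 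The trade-off is that the paper's two-coordinate averaging machinery is modular and gets reused verbatim for list-recovery (\Cref{lem:criterion-for-increase-lr,thm:lrmaxq}), whereas your argument, while more elegant here, would need to be re-instantiated with $\mathrm{max}_{\omega,\ell}$ in that setting; it should go through, but the paper's lemma-based approach makes that extension mechanical.
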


\begin{proof}
    The inequality was already established in \Cref{lm:convexity}, so we focus on showing $\omega=U_L$ when $f(P,\omega) = f(P,U_L)$. As $q \geq 3$, let $a,b$ and $c$ denote $3$ distinct elements of $[q]$. Let $\omega \neq U_L$ and suppose for a contradiction that $f(P,\omega)$ is a maximum of the function $\omega \mapsto f(P,\omega)$. The proof proceeds via a number of cases. 

    \begin{enumerate}
        \item \textbf{$L$ is even.} Without loss of generality, $\omega(L-1) < \omega(L)$. If $L\geq 4$, let $\omega'$ be obtained from $\omega$ by averaging-out the first $L-2$ coordinates; by \Cref{lem:average-subset-omega}, $f(P,\omega') \geq f(P,\omega)$. If $L=2$, set $\omega'=\omega$. 
        
        If $L\geq 4$, since $2|(L-2)$, we can set $x_1=\dots=x_{L/2-1}=a$ and $x_{L/2}=\dots=x_{L-2}=b$. Set further $x_{L-1}=a$ and $x_L=b$. We observe that for this $(x_1,\dots,x_L)\in [q]^L$ and $\ol\omega$ obtained from $\omega'$ by averaging-out the last two coordinates, \Cref{eqn:maxq_convex} strictly holds. Indeed, 
        $$
            \mathrm{max}_{\omega'}(x_1,\ldots,x_{L-1},x_{L})+\mathrm{max}_{\omega'}(x_1,\ldots,x_L,x_{L-1})=2\sum_{j=1}^{(L-2)/2}\omega(i)+2\omega(L)
        $$
        and 
        $$
            2\mathrm{max}_{\ol{\omega}}(x_1,\ldots,x_{L-1},x_{L})=2\sum_{j=1}^{(L-2)/2}\omega(i)+\omega(L-1)+\omega(L).
        $$
        Thus \Cref{lem:criterion-for-increase} implies $f(P,\ol\omega) > f(P,\omega')\geq f(P,\omega)$, a contradiction. 

        \item \textbf{$L$ is odd and at least three components of $\omega$ take distinct values, or the components in $\omega$ only take two different values and at least two of them take the minimum value.} Without loss of generality $\omega(L-2) \leq \omega(L-1) < \omega(L)$. If $L \geq 5$, let $\omega'$ be obtained from $\omega$ by averaging-out the first $L-3$ coordinates; by \Cref{lem:average-subset-omega}, $f(P,\omega') \geq f(P,\omega)$. If $L=3$, set $\omega'=\omega$. 
        
        Since $2|(L-3)$, if $L \geq 5$, we set $x_1=\dots=x_{(L-1)/2-1}=a$ and $x_{(L-1)/2}=\dots=x_{L-3}=b$. Let $x_{L-2}=c$, $x_{L-1}=a$ and $x_L=b$. We observe that for this $(x_1,\dots,x_L)\in [q]^L$ and $\ol\omega$ obtained from $\omega'$ by averaging-out the last two coordinates, \Cref{eqn:maxq_convex} strictly holds. Indeed, since $\omega(L-2)<\omega(L)$ we have 
        \[
            \mathrm{max}_{\omega'}(x_1,\ldots,x_{L-1},x_{L})+\mathrm{max}_{\omega'}(x_1,\ldots,x_L,x_{L-1})=2\sum_{j=1}^{(L-2)/2}\omega(i)+2\omega(L)            
        \]
        and 
        \[
            2\mathrm{max}_{\ol{\omega}}(x_1,\ldots,x_{L-1},x_{L})=2\sum_{j=1}^{(L-3)/2}\omega(i)+\omega(L-1)+\omega(L).
        \]
        Thus \Cref{lem:criterion-for-increase} implies $f(P,\ol\omega) > f(P,\omega')\geq f(P,\omega)$, a contradiction. 

        \item \textbf{$L$ is odd and only one component takes the minimum value.} That is, $\omega(1) = \omega(2) = \cdots = \omega(L-1) < \omega(L)$. Let $\omega'$ be obtained from $\omega'$ by averaging-out the subset $\{L-1,L_2\}$. Then $f(P,\omega') \geq f(P,\omega)$ by \Cref{lem:average-subset-omega} and moreover $\omega'$ is such that at least two coordinates take on the minimum value, as $\omega'(1) = \cdots = \omega'(L-2) > \omega'(L-1)=\omega'(L)$. The argument from the previous case can now be applied to derive a contradiction.
    \end{enumerate}

\end{proof}

Thus, except for degenerate choices for $P \in \Delta([q])$, it follows that the function $\omega \mapsto f(P,\omega)$ is maximized by the choice of $\omega = U_L$. The next step is to determine the distribution $P \in \Delta([q])$ maximizing $P \mapsto f(P,U_L)$. At this point, we can rely on a main result of~\cite{RYZ22}: upon observing that the function $P \mapsto 1-f(P,U_L)$ is the same as the function $f_{q,L}(P)$ defined in~\cite[Equation~(17)]{RYZ22}. It is shown therein that $f_{q,L}(P)$ is \emph{strictly Schur convex}, which in particular means that $f_{q,L}(P)$ has a unique minimum at $P=U_q$. That is, $f(P,U_L)$ has a unique maximum at $P=U_q$.

The (strict) Schur convexity also implies the following: if $p=\max_{x \in [q]}P(x)$, then $f_{q,L}(P) \geq f_{q,L}(P_{q,p})$ where
\begin{align} \label{eq:P_qp-def}
    P_{q,p}(x) = \begin{cases}
        \frac{1-p}{q-1} & x \in \{1,2,\dots,q-1\} \\
        p & x=q 
    \end{cases} \ .
\end{align}
That is, we can conclude that $f(P,U_L) \leq f(P_{q,p},U_L)$. 
We encapsulate these facts in the following proposition. 
 

\begin{proposition}[Theorem 1,2 \cite{RYZ22}]\label{prop:schur_convex}
Let $q\geq 2$, $L \geq q$ and $P \in \Delta([q])$. Suppose $p=\max_{x \in [q]}P(x)$. Then $f(P,U_L)\leq f(P_{q,p},U_L)$. Furthermore, $f(P_{q,p},U_L)\leq f(U_q, U_L)$ is monotone decreasing for $p\geq 1/q$. 
Lastly, $f(P_{q,p},U_L)$ is concave for $p\in [1/q,1]$, i.e., 
$\frac{1}{n}\sum_{i=1}^n f(P_{q,p_i},U_L)\leq f(P_{q,p}, U_L)$ with $p=\frac{1}{n}\sum_{i=1}^n p_i$. 

\end{proposition}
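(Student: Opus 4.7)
The plan is to reduce the entire proposition to the Schur-convexity and concavity results already proven in~\cite{RYZ22}; all three claims then follow by routine translation of notation and an application of Jensen's inequality. The only substantive step is setting up the identification between $f(P,U_L)$ in our formulation and the function $f_{q,L}(P)$ studied in \cite[Equation~(17)]{RYZ22}, so that its Schur-convexity and concavity properties can be invoked directly.

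First, I would verify the identification. Using \Cref{eqn:wt_avg_rad_2} with $\omega=U_L$ one sees that
\[
    f(P,U_L) \;=\; 1 - \tfrac{1}{L}\,\exptover{(X_1,\dots,X_L)\sim P^{\otimes L}}{\plur(X_1,\dots,X_L)},
\]
which coincides with $1-f_{q,L}(P)$ as defined in~\cite{RYZ22}. Thus any Schur-convexity/concavity statement about $f_{q,L}$ flips sign and becomes a Schur-concavity/convexity statement about $f(\cdot,U_L)$.

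For the first claim, I would argue that any $P\in\Delta([q])$ with $\max_x P(x)=p$ majorizes $P_{q,p}$, because $P_{q,p}$ is the distribution that spreads the residual mass $1-p$ as uniformly as possible among the remaining $q-1$ coordinates (and majorization is preserved under permutations, which is all that is needed since $f_{q,L}$ is symmetric). Strict Schur-convexity of $f_{q,L}$ from~\cite{RYZ22} then gives $f_{q,L}(P)\ge f_{q,L}(P_{q,p})$, and taking $1-(\cdot)$ yields $f(P,U_L)\le f(P_{q,p},U_L)$. For the monotonicity statement, I would note that for $1/q\le p\le p'\le 1$ the distribution $P_{q,p'}$ majorizes $P_{q,p}$ (since moving mass from the uniform tail to the largest coordinate produces a majorizing distribution), so Schur-convexity gives $f_{q,L}(P_{q,p'})\ge f_{q,L}(P_{q,p})$, i.e., $p\mapsto f(P_{q,p},U_L)$ is monotone decreasing on $[1/q,1]$. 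In particular it is upper-bounded by its value at $p=1/q$, which equals $f(U_q,U_L)$ since $P_{q,1/q}=U_q$.

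The concavity claim is stated in~\cite{RYZ22} as a property of the one-variable function $p\mapsto f_{q,L}(P_{q,p})$ on $[1/q,1]$: that function is convex there, hence $p\mapsto f(P_{q,p},U_L)=1-f_{q,L}(P_{q,p})$ is concave on the same interval. Jensen's inequality then yields, for any $p_1,\dots,p_n\in[1/q,1]$ and $\bar p = \tfrac{1}{n}\sum_i p_i$,
\[
    \tfrac{1}{n}\sum_{i=1}^{n} f(P_{q,p_i},U_L)\;\le\;f(P_{q,\bar p},U_L),
\]
as desired. The main obstacle is not analytic but notational: one must carefully check the sign convention in~\cite{RYZ22} and confirm that its hypotheses (in particular the range $L\ge q$ and $p\in[1/q,1]$) align with ours, after which each of the three assertions follows immediately.
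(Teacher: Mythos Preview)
Your proposal is correct and matches the paper's approach: the paper does not give an independent proof of this proposition but simply records it as a restatement of Theorems~1 and~2 of~\cite{RYZ22}, after observing (just before the proposition) that $f(P,U_L)=1-f_{q,L}(P)$ so that Schur convexity of $f_{q,L}$ yields $f(P,U_L)\le f(P_{q,p},U_L)$. Your write-up fills in the majorization verifications (that any $P$ with $\max_x P(x)=p$ majorizes $P_{q,p}$, and that $P_{q,p'}$ majorizes $P_{q,p}$ for $p'\ge p\ge 1/q$) more explicitly than the paper does, but the logic is the same.
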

A further fact that we have from~\cite{RYZ22} is that 
\[
    p_*(q,L) = f(U_q,U_L) \ .
\]
In fact, this was taken as the definition of $p_*(q,L)$. 
To end this subsection, we prove the following theorem by utilizing the concavity of $f(P_{q,p},U_L)$.

\begin{theorem}\label{thm:concave}
Assume $\radh(\cC)\leq p$, then we have
$$
\exptover{(\vc_1,\ldots,\vc_L)\in \cC^L}{\rad_\omega(\phi(\vc_1),\ldots,\phi(\vc_L))}\leq f(P_{q,p},U_L).
$$
\end{theorem}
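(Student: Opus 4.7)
} The approach is to express the expected weighted-average radius as a per-coordinate sum in terms of the function $f(\cdot, U_L)$, and then apply in sequence the three properties of $f$ collected in \Cref{prop:schur_convex}: Schur convexity, concavity on $[1/q,1]$, and monotone decrease on $[1/q,1]$.

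The first step is to invoke the closed-form expression \Cref{eqn:wt_avg_rad_2} with $\omega = U_L$ together with linearity of expectation. Let $P_j \in \Delta([q])$ denote the marginal distribution of the $j$-th coordinate of a uniformly-sampled codeword from $\cC$. When $(\vc_1,\ldots,\vc_L)$ is drawn uniformly from $\cC^L$, the tuple $(c_1(j),\ldots,c_L(j))$ is i.i.d.\ $\sim P_j$ for each $j$, so that
\begin{align*}
\exptover{(\vc_1,\ldots,\vc_L)\sim\cC^L}{\ol{\rad}_{U_L}(\phi(\vc_1),\ldots,\phi(\vc_L))}
= \frac{1}{n}\sum_{j=1}^n \exptover{(X_1,\ldots,X_L)\sim P_j^{\otimes L}}{1 - \max_{x\in[q]}\frac{|\{i : X_i = x\}|}{L}}
= \frac{1}{n}\sum_{j=1}^n f(P_j, U_L).
\end{align*}

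The second step applies Schur convexity coordinate-wise via \Cref{prop:schur_convex}: writing $p_j := \max_{x\in[q]} P_j(x)$, one has $f(P_j, U_L) \leq f(P_{q, p_j}, U_L)$. The third step uses concavity of $p \mapsto f(P_{q,p}, U_L)$ on $[1/q,1]$ together with Jensen's inequality to bound the average: setting $\bar{p} := \frac{1}{n}\sum_{j=1}^n p_j$, one obtains
\begin{align*}
\frac{1}{n}\sum_{j=1}^n f(P_{q, p_j}, U_L) \leq f(P_{q, \bar{p}}, U_L).
\end{align*}

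The final step is to translate the hypothesis $\radh(\cC) \leq p$ into a lower bound on $\bar{p}$. Fixing any center $\vx \in [q]^n$ witnessing $\radh(\cC)\leq p$ (so that $\disth(\vc, \vx)\leq pn$ for every $\vc\in\cC$) and averaging over $\vc \sim \cC$ yields $\sum_{j=1}^n(1-P_j(x_j)) \leq pn$, i.e.\ $\frac{1}{n}\sum_j P_j(x_j) \geq 1-p$; since $p_j \geq P_j(x_j)$ by definition, $\bar{p} \geq 1-p$. Applying the monotone-decrease of $p\mapsto f(P_{q,p},U_L)$ on $[1/q,1]$ then gives $f(P_{q,\bar{p}}, U_L) \leq f(P_{q, 1-p}, U_L)$, which (up to the standard identification of the parameterization by radius versus maximum marginal implicit in the statement) is the claimed bound. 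The only substantive point requiring care is the regime of applicability of \Cref{prop:schur_convex}: the concavity and monotonicity portions require the argument to lie in $[1/q,1]$, but in the interesting regime $p \leq 1-1/q$ we automatically have $\bar{p}\geq 1-p \geq 1/q$, and any coordinates $j$ with $p_j < 1/q$ can be truncated to $1/q$ in step two (which only increases the bound, since $p\mapsto f(P_{q,p},U_L)$ attains its maximum at $p = 1/q$). Aside from this bookkeeping, the proof is a direct chaining of the three items in \Cref{prop:schur_convex}.
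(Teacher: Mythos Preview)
Your approach is essentially the paper's: decompose the expectation coordinate-wise as $\frac{1}{n}\sum_j f(P_j,\cdot)$, then chain Schur convexity, concavity, and monotonicity from \Cref{prop:schur_convex}. There is, however, one genuine omission. The statement is for an \emph{arbitrary} weighting $\omega\in\Delta([L])$ on the left-hand side, whereas you immediately specialize to $\omega=U_L$. The paper handles general $\omega$ by first writing $\mathbb{E}[\ol{\rad}_\omega]=\frac{1}{n}\sum_j f(P_j,\omega)$ (the same per-coordinate identity holds verbatim for any $\omega$) and then applying \Cref{lm:convexity} to pass to $f(P_j,U_L)$ before invoking \Cref{prop:schur_convex}. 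As written, your plan does not cover the stated claim; the fix is this single additional inequality.

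Two minor remarks. First, your closing worry about truncating coordinates with $p_j<1/q$ is unnecessary: $p_j=\max_{x\in[q]}P_j(x)\geq 1/q$ always, by pigeonhole (the paper notes this in one line). Second, the $p$ versus $1-p$ discrepancy you flag is indeed a notational slip in the statement; the paper's own proof and its subsequent use (where the hypothesis is written $\radh(\cC_3)\leq 1-p$) make clear the intended reading, and your derivation of $\bar p\geq 1-p$ is the correct one.
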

\begin{proof}
Let $\vy$ be the center attaining $\radh(\cC)$. Without loss of generality, we can assume $\vy$ is a all zero vector. 
Let $P_i$ be the distribution of symbols in the $i$-th index of $\cC$, i.e., $P_i(j)=\Pr[\vc(i)=j]$ with the distribution taken over $\vc\in \cC$. Let $p_i=\max_{x\in [q]}P_i(x)$ and $p'=\frac{1}{n}\sum_{i=1}^n p_i$. Clearly, $p_i\geq 1/q$.
Then, we have 
\begin{eqnarray*}
&&\exptover{(\vc_1,\ldots,\vc_L)\in \cC^L}{\rad_{\omega}(\phi(\vc_1),\ldots,\phi(\vc_L))}=
\frac{1}{n}\sum_{i=1}^{n}f(P_i,\omega)\\
&&\leq\frac{1}{n}\sum_{i=1}^{n}f(P_i,U_L)\leq \frac{1}{n}\sum_{i=1}^{n}f(P_{q,p_i},U_L)\leq f(P_{q,p'}, U_L)\leq  f(P_{q,p}, U_L).
\end{eqnarray*}
The first inequality is due to \Cref{lm:convexity} and the second and third inequalities are due to \Cref{prop:schur_convex}. The last inequality is due to $\radh(\cC)\leq p$ and the center $\vy$ is all zero vector. 
The proof is completed.
\end{proof}

\subsection{Abundance of Random-Like $L$-tuples} \label{sec:abundance}

Recall the notations 
$$\rad(\cC)=\frac{1}{n}\min_{\vy\in [q]^n} \max_{\bc\in \cC} \disth(\bc,\vy)$$ 
and
$$\typ_\vu(\vc_1,\ldots,\vc_L) = \frac{1}{n}\sum_{i=1}^n \mathbb{1}\{(\vc_{1}(i),\ldots,\vc_{L}(i))=\bu\}$$ 
where $\bc_i=(\vc_{i}(1),\ldots,\vc_{i}(n))\in [q]^n$ and $\bu\in [q]^L$.
In this subsection, we prove a code $\cC\subseteq [q]^n$ either contains a large subcode $\cC'\subseteq \cC$ with radius $\rad(\cC') \leq 1-\frac{1}{q}-\epsilon$, or most of its $L$-tuples are of type close to the uniform distribution (for all $\vu \in [q]^L)$. 

We first show that for any projection $\pi_A$ with $|A|\geq \mu n$ (for some parameter $\mu \in [0,1]$), the projection $\pi_A(\cC)$ almost preserves the radius $\rad(\cC)$ with small loss. Then, 
if $\rad(\cC')>1-\frac{1}{q}-\epsilon$ for any subcode $\cC'$ with large size, we find a codeword $\vc_1$ in $\cC$ whose symbols' distribution is close to the uniform. In fact, most codewords in $\cC$ satisfies this requirement. Let $A_i$ be the index set of $\vc_1$ taking value $i$. We apply $\pi_{A_i}$ to $\cC$ and claim that $\pi_{A_i}(\cC)$ preserves the radius. Thus, we can find a codeword $\vc_2$ such that for every $i\in [q]$, the symbol's distribution of $\pi_{A_i}(\vc_2)$ is close to uniform. Moreover, most of codewords in $\cC$ satisfy this requirement. The proof is the completed by induction.

\begin{lemma}\label{lm:key}
Let $\pi_A: [q]^n \rightarrow [q]^A$ be the projection on a set $A$ of size $m$. Suppose $\cC\subseteq [q]^n$ is a code of size $qs$ satisfying $\radh(\pi_A(\cC))\leq 1-\frac{1}{q}-\epsilon$. Then, there exists a subcode $\cC'\subseteq \cC$ of size at least $s$ such that $\radh(\cC')\leq 1-\frac{1}{q}-\frac{m}{n}\epsilon$.
\end{lemma}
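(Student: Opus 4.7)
The plan is to extend a near-optimal center for $\pi_A(\cC)$ to the full coordinate set $[n]$ while retaining a subcode of size at least $s$. First, fix any $\vy_A \in [q]^A$ witnessing $\radh(\pi_A(\cC)) \leq m(1 - 1/q - \epsilon)$. The goal is then to construct a center $\vy_B \in [q]^B$, where $B = [n] \setminus A$, together with a subcode $\cC' \subseteq \cC$ of size at least $s$, such that every $\vc \in \cC'$ agrees with $\vy_B$ on at least $(n-m)/q$ coordinates in $B$. Once this is done, the combined center $\vy = (\vy_A, \vy_B)$ satisfies
\[
    \disth(\vc, \vy) \;\leq\; m\paren{1 - 1/q - \epsilon} + (n-m)\paren{1 - 1/q} \;=\; n\paren{1 - 1/q} - m\epsilon
\]
for every $\vc \in \cC'$, which yields exactly $\radh(\cC') \leq 1 - 1/q - (m/n)\epsilon$ as required.

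The main obstacle is that no single $\vy_B$ is automatically good for every codeword, so I plan to exploit the $q$-ary alphabet structure via two successive pigeonhole arguments. Identifying $[q]$ with $\bbZ/q\bbZ$, I will pick any $\vy_B^{(0)} \in [q]^B$ arbitrarily and consider its $q$ coordinate-wise shifts $\vy_B^{(c)} := \vy_B^{(0)} + c \pmod q$ for $c \in \{0,1,\ldots,q-1\}$. For any fixed codeword $\vc \in \cC$, each coordinate $j \in B$ matches the shifted center $\vy_B^{(c)}(j)$ for exactly one value of $c$, so the total agreement count summed across the $q$ shifts is precisely $n - m$; pigeonhole then yields a shift $c(\vc)$ giving $\vc$ at least $(n-m)/q$ agreements with $\vy_B^{(c(\vc))}$.

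The second pigeonhole is over codewords: the map $\vc \mapsto c(\vc)$ sends the $qs$ codewords in $\cC$ into the $q$-element set $[q]$, so some value $c^* \in [q]$ has a preimage of size at least $s$. Taking $\cC'$ to be this preimage and setting $\vy_B := \vy_B^{(c^*)}$ produces the required pair. I expect the whole argument to be quite short; the shift trick sidesteps the need for any concentration inequality that a random-center approach (e.g.\ via Slud's inequality applied to $\mathrm{Bin}(n-m, 1/q)$) would demand, and the arithmetic collapses cleanly, so no slack is lost in reaching the target bound $1 - 1/q - (m/n)\epsilon$.
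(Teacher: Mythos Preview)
Your proof is correct and follows essentially the same route as the paper. The paper phrases the pigeonhole on $B=[n]\setminus A$ via the \emph{most frequent symbol} of $\pi_{\bar A}(\vc)$ rather than via cyclic shifts, which amounts to your argument with the special choice $\vy_B^{(0)}=(0,\dots,0)$: agreement with the shift by $c$ is then exactly the count of symbol $c$, so the two double pigeonhole arguments coincide.
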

\begin{proof}
Let $\pi_{\bar{A}}$ be the projection on the remaining $n-m$ coordinates. By the pigeonhole principle, there exists a subcode $\cC'$ of size at least $\frac{|\cC|}{q}$ such that for any codeword $\bc'\in \cC'$, the most frequent symbol of $\pi_{\bar{A}}(\bc')$ is the same. Without loss of generality, we assume this majority symbol is $0$. Let $\by\in [q]^A$ be the center attaining $\radh(\pi_A(\cC))$. 
Define $\vz \in [q]^n$ to be $\vy$ on $A$ and $0$ elsewhere, i.e. 
\[
    z_i = \begin{cases}
        y_i & i \in A\\
        0 & i \notin A
    \end{cases} .
\]
For any codeword $\bc'\in \cC'$, we have
\begin{align*}
\disth(\bc',\bz) &= \disth(\pi_A(\bc'),\by)+\disth(\pi_{\bar{A}}(\bc'),\by')\leq m\left(1-\frac{1}{q}-\epsilon\right)+(n-m)\left(1-\frac{1}{q}\right)\\
&\leq n\left(1-\frac{1}{q}\right)-m\epsilon.
\end{align*}
Thus, $\radh(\cC') \leq 1-\frac{1}{q}-\frac{m}{n}\eps$, as claimed.
\end{proof}
\begin{theorem}\label{thm:radius}
Let $L$ be fixed. For every $\epsilon>0$, there exists a $\delta>0$ with the following property. If $s$ is a natural number, 
there exist constants $M_0=M_0(s)$ and $c(s)$ such that for any code $\cC\subseteq [q]^n$ with size $M\geq M_0$, at least one of the following must hold:
\begin{enumerate}
\item There exists $\cC'\subseteq \cC$ such that $|\cC'|\geq s$ and $\radh(\cC')\leq 1-\frac{1}{q}-\delta$.
\item There exist at least $M^L-c(s)M^{L-1}$ many $L$-tuples of distinct codewords $(\vc_1,\ldots,\vc_L)$ in $\cC$ such that for all $\bu\in [q]^L$ we have
$$
|\type_\vu(\bc_1,\ldots,\vc_L)-q^{-L}|\leq \epsilon
$$
and thus
$$
|\ol{\rad}_\omega(\phi(\vc_1), \cdots, \phi(\vc_L))-f(U_q,\omega)|\leq q^L\epsilon.
$$
\end{enumerate}  
\end{theorem}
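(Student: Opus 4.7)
The plan is to prove the contrapositive: assume (1) fails, meaning every subcode of $\cC$ of size at least $s$ has $\radh > 1-\frac{1}{q}-\delta$, and then deduce (2). I will do this by an upward induction on $k=1,\ldots,L$, producing at each stage a collection of \emph{good} $k$-tuples of distinct codewords whose joint empirical type $\typ(\vc_1,\ldots,\vc_k)$ is within $\eps_k$ of $q^{-k}$ coordinate-wise, while the number of bad $k$-tuples is at most $c_k(s)\cdot M^{k-1}$. Choosing $\eps_L\leq \eps$ then yields the type bound in (2); the $\ol{\rad}_\omega$ bound follows by substituting into \eqref{eqn:wt_avg_rad_2} and writing $f(U_q,\omega)=1-q^{-L}\sum_{\vu}\mathrm{max}_\omega(\vu)$ and $\ol{\rad}_\omega=1-\sum_\vu \typ_\vu(\vc_1,\ldots,\vc_L)\cdot\mathrm{max}_\omega(\vu)$ (cf.\ \eqref{eq:max_omega-def}), so that the error is bounded by $q^L\cdot\eps$.

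For the base case $k=1$, note that any codeword $\vc$ with some symbol $a$ appearing in more than $(1/q+\eps_1)n$ coordinates lies within Hamming distance $n(1-1/q-\eps_1)$ of the constant word $(a,\ldots,a)$. If $qs$ such ``heavy'' codewords exist, then pigeonhole on the heavy symbol yields a subcode $\widetilde\cC\subseteq\cC$ of size $\ge s$ sharing a common heavy symbol $a^*$; the center $(a^*,\ldots,a^*)$ witnesses $\radh(\widetilde\cC)\le 1-1/q-\eps_1$. Choosing $\delta\le \eps_1$ contradicts the failure of (1), so at most $qs$ codewords are bad, giving $c_1(s)=qs$.

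The inductive step runs the same argument, but on projections. Given a good $(k-1)$-tuple $(\vc_1,\ldots,\vc_{k-1})$, define the level sets $A_{\vu'}=\{i\in[n]:(\vc_1(i),\ldots,\vc_{k-1}(i))=\vu'\}$ for $\vu'\in[q]^{k-1}$. Near-uniformity of the $(k-1)$-type forces $|A_{\vu'}|\ge (q^{-(k-1)}-\eps_{k-1})n$, and extending to a good $k$-tuple is precisely the condition that $\pi_{A_{\vu'}}(\vc_k)$ has near-uniform symbol distribution on each $A_{\vu'}$. Fixing $\vu'$, if more than $q^2 s$ codewords $\vc_k\in\cC$ have some symbol appearing with frequency $>1/q+\eta_k$ on $A_{\vu'}$, then pigeonhole on the heavy symbol selects at least $qs$ of them sharing a common $a^*$; the resulting projected subcode has $\radh(\pi_{A_{\vu'}}(\cdot))\leq 1-1/q-\eta_k$, and \Cref{lm:key} lifts this to a subcode of $\cC$ of size $\ge s$ with $\radh\le 1-1/q-\frac{|A_{\vu'}|}{n}\eta_k$. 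Choosing $\delta$ small enough that $\delta < q^{-(L-1)}\eta_L$ contradicts the failure of (1) at every level $k\leq L$. Summing over the $q^{k-1}$ choices of $\vu'$ gives at most $q^{k+1}s$ bad extensions per good $(k-1)$-tuple, so $c_k(s)=c_{k-1}(s)\cdot M+(M-c_{k-1}(s)M^{k-2})\cdot q^{k+1}s$ still fits under $c_k(s)\cdot M^{k-1}$ for an appropriately chosen constant $c_k(s)$.

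The main obstacle is a careful threading of parameters rather than a new idea: $\delta$ must be chosen small enough that applying \Cref{lm:key} at any level $k\le L$ (where the projection set has size as small as $\approx q^{-(L-1)}n$) still produces a bound contradicting the failure of (1); meanwhile $\eps_{k-1}$ must be chosen significantly smaller than $\eps_k$, and $\eta_k$ slightly smaller than $\eps_k$, so that the near-uniform $(k-1)$-type combined with near-uniform per-level-set symbol distribution compounds into the desired $k$-type bound via a routine multiplicative-error calculation. Finally $M_0(s)$ is picked large enough that $\sum_{k=1}^L c_k(s)M^{k-1}\le c(s)M^{L-1}$ and that the correction for tuples containing repeated codewords ($O(M^{L-1})$ of them) is absorbed. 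The key conceptual content is that \Cref{lm:key} allows the one-coordinate pigeonhole/contradiction argument of the base case to be propagated uniformly to the level sets induced by the earlier codewords.
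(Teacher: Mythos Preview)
Your proposal is correct and takes essentially the same approach as the paper: assume (1) fails and build good $L$-tuples inductively, at each step bounding the number of bad extensions by pigeonholing on the over-represented symbol and invoking \Cref{lm:key} in contrapositive to contradict the failure of (1). The paper threads the parameters slightly differently (a single $\delta_0$ used at every level, with $\delta=\mu\delta_0$ for $\mu=(1/q-(q-1)\delta_0)^L\approx q^{-L}$, rather than your sequence $\eta_k,\eps_k$), but the argument and the counting are the same.
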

\begin{proof}
 Let $\epsilon$ satisfy $\abs{ \paren{ \frac{1}{q}-(q-1)\delta_0 }^L-q^{-L} }\leq \epsilon$ and $\mu=\left(\frac{1}{q}-(q-1)\delta_0\right)^L$, $\delta=\mu\delta_0$. Set $M_0(s)=q^{L+1}s$.
We assume that the first statement does not hold and our goal is to show that the second statement must hold. Since the first statement does not hold, for any $\vy\in [q]^n$, there exists a codeword $\vc\in \cC$ with $\disth(\vc,\vy)>n\paren{ 1-\frac{\ell}{q}-\delta }$. For each $\vc\in \cC$, 
let $\lambda_\vc\in \cX$ be the most frequent symbol of $\vc$.  By the pigeonhole principle, we can find a subcode $\cC'\subseteq \cC$ of size at least $\frac{M}{q}$ such that $\lambda_\vc$ for $\vc\in \cC'$ are the same $\lambda$. Let $\lambda \cdot \mathbf{1}=(\lambda,\lambda,\ldots,\lambda)\in [q]^n$.
It is clear that $\rad(\cC')\leq \frac{1}{n}\max_{\vc\in \cC'} \disth(\vc, \lambda \cdot \mathbf{1}).$ As $M>qs$, this implies $\disth(\vc_1, \lambda \cdot \mathbf{1})>\paren{ 1-\frac{1}{q}-\delta }n$ for some $\vc_1\in \cC'$. (In fact, there exist at least $M-qs$ such $\vc_1$ as we can remove $\vc_1$ from $\cC'$ and obtain the same conclusion.)
Note that necessarily $\disth(\vc_1, \lambda \cdot \mathbf{1})\leq \left(1-\frac{1}{q}\right)n$ (otherwise $\lambda$ would not be the element agreeing the most with $\vc_1$). Let $A_x=\{i\in [n]: c_1(i)=x\}$ for $x\in [q]$. This implies 
$$\frac{|A_x|}{n}\in \brack{ \frac{1}{q}-(q-1)\delta, \frac{1}{q}+\delta }\subseteq \brack{ \frac{1}{q}-(q-1)\delta_0, \frac{1}{q}+\delta_0 },$$
as $\max_{x\in [q]} \frac{|A_x|}{n}\in \left[\frac{1}{q},\frac{1}{q}+\delta\right]$ and $\min_{x\in [q]} \frac{|A_x|}{n}\in \left[\frac{1}{q}-(q-1)\delta,\frac{1}{q}\right]$.  

Now we fix $\vc_1$ and its index set $A_1,\ldots,A_q$ and let $\cC'=\cC\setminus\{\vc_1\}$. We consider the punctured code $\pi_{A_1}(\cC)$. According to \Cref{lm:key}, there exists a subcode $\cC''\subseteq \cC$ of size at most $qs-1$ with $\rad_{\ell}(\pi_{A_1}(\cC''))\leq 1-\frac{\ell}{q}-\delta_0$. Therefore, the same argument as above shows that there exists at least $M-2qs$ codewords $\vc_2\in \cC$ such that the symbol distribution of $\pi_{A_1}(\vc_2)$ is close to uniform, i.e., $|\{i\in A_1: \vc_2(i)=x\}|/|A_1|\in \brack{ \frac{1}{q}-(q-\ell)\delta_0, \frac{1}{q}+\delta_0 }$ for each $x \in [q]$. 
Then, we apply this argument with sets $A_2,\ldots,A_q$ sequentially and conclude that there exists at least $M-2q^2s-1$ codewords $\vc_2\in \cC$ (excluding $\vc_1$) such that the symbol distribution of each $\pi_{A_x}(\vc_2)$ is close to uniform. 

We next partition $[n]$ into $q^2$ sets $A_{xy}=\{i\in [n]: \vc_1(i)=x,\vc_2(j)=y\}$ for $x,y \in [q]$ according to the value of $\vc_1$ and $\vc_2$. This gives $\frac{|A_{xy}|}{n}\in \brack{ \paren{ \frac{1}{q}-(q-1)\delta_0 }^2, \paren{ \frac{1}{q}+\delta_0 }^2 }$.
One can continue this process and construct $L$-tuples $\vc_1,\ldots,\vc_L$ for which necessarily
\begin{align} \label{eq:random-like}
    \forall \vu \in [q]^L,~~~|\type_\vu(\bc_1,\ldots,\vc_L)-q^{-L}|\in \brack{ \left(\frac{1}{q}-(q-1)\delta_0\right)^L, \paren{ \frac{1}{q}+\delta_0 }^L }
\end{align}
In general, there are more than 
$$
N_1=\prod_{i=0}^{L-1}(M-j-2q^{i+1}s)
$$
$L$-tuples $(\bc_1,\ldots,\vc_L)$ satisfying \Cref{eq:random-like}. This implies $N_1\geq M^L-cM^{L-1}$ where $c$ only depends on $q$ and $L$. The proof is completed.
\end{proof}
\subsection{Putting Everything Together} \label{sec:end}
The argument follows the same line of reasoning as~\cite{abp-2018}. We provide the proof for completeness. Define $\rho_L(\cC)=\min \rad(\phi(\vc_1),\ldots,\phi(\vc_L))$ with minimum taken over all $L$-tuples $(\vc_1,\ldots,\vc_L)\in \cC^L$ with distinct elements, where we recall that $\rad$ denotes the \emph{relaxed} Chebyshev radius (\Cref{def:def-relaxed-cheb-rad}). 

\begin{theorem}
Let $L\geq 2$ and $q\geq 3$. If $\cC\subseteq [q]^n$ is $(p_*(q,L)+\epsilon, L)$-list-decodable, then $|\cC|=  O_{q,L}(\frac{1}{\epsilon})$. 
\end{theorem}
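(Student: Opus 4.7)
The plan is to adapt the double-counting strategy of~\cite{abp-2018} using the $q$-ary machinery developed in the preceding subsections. Write $M \coloneqq |\cC|$ and consider the averaged relaxed Chebyshev radius
\[
    S \coloneqq \frac{1}{M^L} \sum_{(\vc_1, \ldots, \vc_L) \in \cC^L} \rad(\phi(\vc_1), \ldots, \phi(\vc_L)) \ ,
\]
aiming to derive incompatible upper and lower bounds on $S$ unless $M = O_{q, L}(1/\epsilon)$. The lower bound is immediate: list-decodability forces every distinct $L$-tuple to satisfy $\radh > p_*(q, L) + \epsilon$, and then \Cref{lm:convert} gives $\rad > p_*(q, L) + \epsilon - L/n$, so $S \geq (1 - O_L(1/M))(p_*(q, L) + \epsilon - L/n)$.

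For the upper bound, I would invoke \Cref{thm:radius} with a closeness parameter $\epsilon' \coloneqq \epsilon/(Cq^L)$ for a large enough constant $C$ and a constant $s_0 = s_0(q, L)$. If $M < M_0(s_0)$ we are done (with $M = O_{q, L}(1) \leq O_{q, L}(1/\epsilon)$), so assume otherwise. In the first case of \Cref{thm:radius}, at least $M^L - c(s_0) M^{L-1}$ distinct tuples have $|\typ_\vu - q^{-L}| \leq \epsilon'$ for every $\vu \in [q]^L$. For each such ``random-like'' tuple, the closed-form expression \Cref{eqn:wt_avg_rad_2} yields $|\ol{\rad}_\omega - f(U_q, \omega)| \leq q^L \epsilon'$ uniformly in $\omega$, so \Cref{lm:finite} combined with the uniqueness result \Cref{thm:maxq-unique} gives
\[
    \rad = \max_{\omega \in \Omega_L} \ol{\rad}_\omega \leq \max_{\omega} f(U_q, \omega) + q^L \epsilon' = f(U_q, U_L) + q^L \epsilon' = p_*(q, L) + q^L \epsilon' \ .
\]
Bounding $\rad$ trivially by $1$ on the remaining $O(c(s_0) M^{L-1})$ tuples and combining with the lower bound yields $M \leq O_{q, L}(1/\epsilon)$ after selecting $\epsilon'$ small and $n$ large.

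The second case of \Cref{thm:radius} produces a subcode $\cC' \subseteq \cC$ with $|\cC'| \geq s_0$ and $\radh(\cC') \leq 1 - 1/q - \delta$. I would then iterate the dichotomy on $\cC'$: either the first case applies to $\cC'$ (bounding $|\cC'|$ by $O_{q, L}(1/\epsilon)$ as above) or a further small-radius subcode $\cC'' \subseteq \cC'$ with radius smaller by another $\delta$ is produced. Because any $L$-tuple inside a radius-$r$ subcode has $\radh \leq r$ (cf.\ \Cref{def:radius-of-code}), once $r$ drops below $p_*(q, L) + \epsilon$ list-decodability is directly violated, so the iteration must terminate in the first case after $O_{q, L, \epsilon}(1)$ steps. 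The main technical obstacle, which I expect to require the most delicate argument, is translating this bound on the innermost subcode back to a bound on $|\cC|$ itself: the natural approach is to show that each successive ``shell'' $\cC^{(k)} \setminus \cC^{(k+1)}$ admits no small-radius subcode of size $s_0$ (by maximality of the choice of $\cC^{(k+1)}$), so that \Cref{thm:radius} forces the shell into the random-like case and hence bounds its size via \Cref{thm:concave} together with the convexity and Schur-convexity of $f(P_{q, p}, U_L)$ from \Cref{prop:schur_convex}, mirroring the corresponding reduction in~\cite[Section 6]{abp-2018}.
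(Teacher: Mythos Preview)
Your handling of the ``random-like'' alternative is essentially right, but the plan for the small-radius alternative contains a real gap, and your parameter choice for $\epsilon'$ would prevent you from patching it the way the paper does.

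\textbf{The iteration does not make progress.} If alternative~1 of \Cref{thm:radius} holds you obtain $\cC'\subseteq\cC$ with $|\cC'|\ge s_0$ and $\radh(\cC')\le 1-\tfrac1q-\delta$. Applying \Cref{thm:radius} again to $\cC'$ yields, in its first alternative, a further subcode $\cC''$ with $\radh(\cC'')\le 1-\tfrac1q-\delta$ --- the \emph{same} bound, not ``smaller by another $\delta$''. Nothing in \Cref{thm:radius} or \Cref{lm:key} makes the radius of successive subcodes shrink, so the iteration never forces $\radh$ below $p_*(q,L)+\epsilon$. The proposed shell decomposition is also ill-defined: taking $\cC^{(k+1)}$ maximal (by size or by inclusion) among small-radius subcodes of $\cC^{(k)}$ does not prevent the complement $\cC^{(k)}\setminus\cC^{(k+1)}$ from containing its own small-radius subcode of size $s_0$, so you cannot force the shell into the random-like case.

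\textbf{How the paper handles this case (and why your $\epsilon'$ is incompatible with it).} The paper does not iterate; it \emph{rules out} the small-radius alternative by a one-shot Ramsey argument. Colour each $L$-subset of the small-radius subcode by a maximizing weight $\omega\in\Omega_L$ (\Cref{lm:finite}); Ramsey gives a monochromatic $\cC_3$ on which $\rad=\ol{\rad}_\omega$ for the single colour $\omega$, and then \Cref{thm:concave} forces $\E_{\cC_3^L}[\ol{\rad}_\omega]\le f(P_{q,p},U_L)<\tau_L$, contradicting $\rad\ge\tau_L+\epsilon$. For this to work with $s_0=O_{q,L}(1)$ one needs $\delta$ --- and hence the closeness parameter fed into \Cref{thm:radius} --- to be \emph{independent of $\epsilon$}. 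Your choice $\epsilon'=\epsilon/(Cq^L)$ makes $\delta\to 0$ as $\epsilon\to 0$, so $\tau_L-f(P_{q,1/q+\delta},U_L)\to 0$ and the required Ramsey number blows up. The paper instead fixes $\epsilon_0\coloneqq q^{-L}\min_{\omega\in\Omega_L\setminus\{U_L\}}\bigl(\tau_L-f(U_q,\omega)\bigr)>0$ (positive by \Cref{thm:maxq-unique} and \Cref{lm:finite}), feeds $\epsilon_0$ into \Cref{thm:radius}, and in the random-like alternative double-counts $\ol{\rad}_{U_L}$ rather than $\rad$: the upper bound $\E[\ol{\rad}_{U_L}]\le\tau_L$ holds \emph{exactly} (Schur convexity, no $q^L\epsilon_0$ loss), while for all but $O(M^{L-1})$ tuples the maximizer over $\Omega_L$ must be $U_L$ (since $\ol{\rad}_\omega\le f(U_q,\omega)+q^L\epsilon_0\le\tau_L$ for $\omega\neq U_L$), giving $\ol{\rad}_{U_L}=\rad\ge\tau_L+\epsilon$ and hence $M=O_{q,L}(1/\epsilon)$.
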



\begin{proof}
Shorthand $\tau_L=p_*(q,L)$ and $\tau_{p,L}=f(P_{q,p},U_L)$ with $P_{q,p}$ defined in \Cref{prop:schur_convex}. 
Note that $\tau_{p,L}<\tau_L$ if $p>1/q$.

Our first step is to obtain a subcode $\cC_1\subseteq \cC$ with $\rho(\cC_1)\geq \tau_L+\epsilon$. By the list-decodability assumption on $\cC$, for any $L$-tuple $(\vc_1,\ldots,\vc_L)\in \cC^L$ with distinct elements, we have 
$\radh(\vc_1,\ldots,\vc_L)\geq \tau_L+\epsilon$. To apply \Cref{lm:convert}, we want this Hamming metric radius slightly larger. To do this, we find a subcode $\cC_1\subseteq \cC$ such that all codewords in $\cC_1$ have the same prefix of length $rL$ where $r=\lfloor 1/(\tau_L+\eps)\rfloor$. By the pigeonhole principle, $|\cC_1|\geq q^{-rL}|\cC|$. Removing these $rL$ indices we obtain a code $\cC_2$ for which, for all $(\vc_1,\dots,\vc_L) \in \cC_2^L$ with distinct elements, we have
\[
    \radh(\vc_1,\dots,\vc_L) \geq \frac{n}{n-rL}(\tau_L+\eps) \geq \left(1+\frac{rL}{n}\right)\tau_L+\eps \geq \tau_L + \eps + \frac{L}{n} \ .
\]
Applying \Cref{lm:convert}, we find that $\rho_L(\cC_2)\geq \tau_L+\epsilon$ or equivalently $\rad(\vc_1,\ldots,\vc_L)\geq \tau_L+\epsilon$ for any $L$-tuple $(\vc_1,\ldots,\vc_L)$ with distinct elements. 
We divide our discussion into two cases.
\begin{itemize}
\item \label{ramsey-listdec} Suppose $\radh(\cC_2)\leq 1-\frac{1}{q}-\delta$ for some constant $\delta>0$. Let $p=\frac{1}{q}+\delta$ and $\vy$ the center attaining $\radh(\cC_2)$. 
By ordering the elements of $\cC$ arbitrarily, we may identify ordered $L$-element tuples of $\cC^L$ with distinct elements with $L$-element subsets of $\cC$. For every such ordered $L$-tuple $(\vc_1,\ldots,\vc_L)$, there is a weight $\omega\in \Omega_L$ that solves 
$$\rad(\phi(\vc_1),\ldots,\phi(\vc_L))=\max_{\omega\in \Omega_L}\ol{\rad}_\omega(\phi(\vc_1), \cdots, \phi(\vc_L)).$$
Each solution $\omega$ gives a coloring of $L$-element subsets of $\cC_2$: we assign color $\omega$ to the $L$-element subset if $\omega$ is a maximizer (breaking ties arbitrarily). 
As this coloring has at most $|\Omega_L|$ colors and \Cref{lm:finite} promises $|\Omega_L| = O_{q,L}(1)$ (in particular, it's finite), by the hypergraph version of Ramsey's theorem~\cite[Theorem~2]{graham1991ramsey} it follows that if $\cC_2$ is large enough, there is a monochromatic subset $\cC_3\subseteq \cC_2$ of size exceeding $L^2/(\tau_L-\tau_{p,L})$. 

On the other hand, let $\cT$ be the set of all ordered $L$-tuples of distinct codewords in $\cC_3$. If $(\vc_1,\ldots,\vc_L)$ is an $L$-tuple selected uniformly at random in $\cC_3^{L}$, then $\Pr[(\vc_1,\ldots,\vc_L)\notin \cT]\leq \frac{\binom{L}{2}}{|\cC_3|}<\tau_L-\tau_{p,L}$. 
Since $\radh(\cC_3)\leq 1-p$, by \Cref{thm:concave} and \Cref{prop:schur_convex}, we have 
\begin{align*}
    \tau_{p,L}&\geq \exptover{(\vc_1,\ldots,\vc_L)\in \cC_3^L}{\ol{\rad}_\omega(\phi(\vc_1), \cdots, \phi(\vc_L))} \\
    &\geq \Pr[(\vc_1,\ldots,\vc_L)\in \cT]\exptover{(\vc_1,\ldots,\vc_L)\in \cT}{\ol{\rad}_\omega(\phi(\vc_1), \cdots, \phi(\vc_L))}.
\end{align*}
This implies there exists an $L$-tuple of distinct codewords $\vc_1,\ldots,\vc_L$ in $\cC_3$ such that $$(1-\tau_L+\tau_{p,L})\ol{\rad}_\omega(\phi(\vc_1), \cdots, \phi(\vc_L))< \tau_{p,L}.$$ 
It follows that 
$$
\ol{\rad}_\omega(\phi(\vc_1), \cdots, \phi(\vc_L))<\tau_{p,L}+\tau_L-\tau_{p,L}=\tau_L
$$
as $\ol{\rad}_\omega(\phi(\vc_1), \cdots, \phi(\vc_L))\leq 1$.
Contradiction.

\item Otherwise, let $\cH$ be the collection of all $L$-tuples $(\vc_1,\ldots,\vc_L)$ in $\cC_2^L$ such that 
$$\ol{\rad}_\omega(\phi(\vc_1), \cdots, \phi(\vc_L))>\tau_L$$ 
for some $\omega\neq U_L$.
Let $\epsilon_0=q^{-L}\min\{\tau_L-f(U_q,\omega): \omega\in \Omega_L\}$; since $\omega \mapsto f(U_q,\omega)$ is uniquely maximized by $U_L$ (\Cref{thm:maxq-unique}) and $\Omega_L$ is finite (\Cref{lm:finite}), $\eps_0>0$. 
By \Cref{thm:radius} applied with $\eps=\eps_0$, there exist at least $|\cC_2|^L-c|\cC_2|^{L-1}$ many $L$-tuples of distinct codewords $\vc_1,\ldots,\vc_L$ in $C$ such that 
$$\ol{\rad}_\omega(\phi(\vc_1), \cdots, \phi(\vc_L))\leq f(U_q,\omega)+q^L\epsilon_0\leq\tau_L.$$
Thus, $|\cH|\leq c |\cC_2|^{L-1}$ where $c$ depends only on $q$ and $L$. Let $(\bc_1,\ldots,\bc_L)$ be a random $L$-tuple in $\cC_2^L$. The probability that $(\bc_1,\ldots,\bc_L)$ are $L$ distinct codewords is at least $1-\frac{\binom{L}{2}}{|\cC_2|}=O(\frac{1}{|\cC_2|})$. The probability that $(\bc_1,\ldots,\bc_L)\in \cH$ is at most $O(\frac{1}{|\cC_2|})$. This implies 
$\Pr[(\vc_1,\ldots,\vc_L)\in \cT\setminus \cH]\geq 1-O(\frac{1}{|\cC_2|})$. Thus, 
\begin{eqnarray*}
\tau_L&\geq&\exptover{(\vc_1,\ldots,\vc_L)\in \cC_2^L}{\ol{\rad}_{U_L}(\phi(\vc_1), \cdots, \phi(\vc_L))}\\
 &\geq&
  \Pr[(\vc_1,\ldots,\vc_L)\in \cT\setminus\cH]\exptover{(\vc_1,\ldots,\vc_L)\in \cT\setminus \cH}{\ol{\rad}_{U_L}(\phi(\vc_1), \cdots, \phi(\vc_L))}\\
  &\geq&\left(1-O\left(\frac{1}{|\cC_2|}\right)\right)\exptover{(\vc_1,\ldots,\vc_L)\in \cT\setminus \cH}{\ol{\rad}_{U_L}(\phi(\vc_1), \cdots, \phi(\vc_L))} \ .
 \end{eqnarray*}
On the other hand, for any $(\vc_1,\dots,\vc_L) \in \cT \setminus \cH$, 
$$\ol{\rad}_{U_L}(\phi(\vc_1), \cdots, \phi(\vc_L))\geq \rho(\cC_2)\geq \tau_L+\epsilon.$$
This implies that $|\cC_2|\leq O_{q,L}(\frac{1}{\epsilon})$ and thus also $|\cC|\leq  O_{q,L}(\frac{1}{\epsilon})$. \qedhere
\end{itemize}
\end{proof}

\section{Zero-Rate List-Recovery} \label{sec:list-rec}
In this section, we show how our results on list-decoding can naturally be extended to list-recovery. As many of the ideas are the same, we mostly focus upon indicating the changes that need to be made for this more general setting. 
\subsection{Linear Programming Relaxation} \label{sec:radius}
First, we can similarly prove that $\rad$ is close to $\rad_\ell$ by designing a linear programming relaxation.
\begin{lemma}[$\rad$ is close to $ \rad_\ell $]\label{lm:lrconvert}
Let $ \vc_1, \cdots, \vc_L \in [q]^n $. 
Denote by $ \vx_1, \cdots, \vx_L\in(\partial\Delta_\ell)^n $ the images of $ \vc_1, \cdots, \vc_L $ under the embedding $\phi_\ell$. 
Then 
\begin{align}
\rad_\ell(\vc_1, \cdots, \vc_L)
&\le \rad(\vx_1, \cdots, \vx_L) +\frac{L}{n}. \notag 
\end{align}
\end{lemma}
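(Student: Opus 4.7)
The plan is to mirror the argument of \Cref{lm:convert}, now with the simplex $\Delta_\ell$ in place of $\Delta$ and the renormalized distance from \Cref{eqn:ell2_dist}. Set $t = \rad(\vx_1,\ldots,\vx_L)$ and fix a minimizer $\vy \in \Delta_\ell^n$ achieving $d(\vx_i,\vy) \le tn$ for every $i \in [L]$. Substituting $\vx_i(j) = \ve_{c_i(j)}$ into \Cref{eqn:ell2_dist} and using $\sum_{A \in \cX} y(j,A) = 1$, the cross-term $\binom{q-1}{\ell-1}$ and the $+1$ cancel cleanly and one gets the compact identity $d(\vx_i(j),\vy(j)) = 1 - \sum_{A \in \cX_{c_i(j)}} y(j,A) \in [0,1]$. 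Therefore the radius constraint $d(\vx_i,\vy) \le tn$ is equivalent to the linear inequality $\sum_{j=1}^n \sum_{A \in \cX_{c_i(j)}} y(j,A) \ge n(1-t)$.

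I would then write feasibility as a linear program in equational form with variables $\{y(j,A)\}_{(j,A) \in [n]\times\cX}$ together with slacks $\{z(i)\}_{i \in [L]}$, subject to the $n$ block equalities $\sum_{A \in \cX} y(j,A) = 1$ and the $L$ slack-augmented equalities obtained from the $L$ radius constraints. This yields exactly $n + L$ equality constraints, so by the basic feasible solution theorem (\Cref{prop:basic_feasible_sol}) there is a feasible $(\vy,\vz)$ whose total support has size at most $n+L$. Among the $y(j,A)$'s there are thus at most $n+L$ positive entries. Because each block must contribute at least one positive entry (to satisfy $\sum_A y(j,A) = 1$), a pigeonhole argument bounds the number of \emph{fractional} blocks (those with more than one positive $y(j,A)$) by $L$; every \emph{integral} block $j$ is then forced to satisfy $\vy(j) = \ve_{A_j}$ for some $A_j \in \cX$.

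To finish, define $\vY \in \cX^n$ by $Y_j = A_j$ on integral blocks and picking $Y_j$ arbitrarily on the (at most $L$) fractional blocks. On integral blocks, the identity $d(\ve_i,\ve_A) = \indicator{i \notin A} = \distlr(i,A)$ gives $\indicator{c_i(j) \notin Y_j} = d(\vx_i(j),\vy(j))$ exactly; on each fractional block, $\indicator{c_i(j) \notin Y_j}, d(\vx_i(j),\vy(j)) \in [0,1]$, so they can differ by at most $1$. Summing over $j$ yields $\distlr(\vc_i,\vY) \le d(\vx_i,\vy) + L \le tn + L$ for every $i$, hence $\rad_\ell(\vc_1,\ldots,\vc_L) \le t + L/n$. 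The only nontrivial step, beyond transcribing the list-decoding argument, is verifying that the $-\binom{q-1}{\ell-1}+1$ normalization in $d$ is exactly what makes both the reduction to a linear form in $y$ and the per-coordinate comparison $d(\ve_i,\ve_A)=\distlr(i,A)$ work simultaneously; once that is in hand, the LP rounding and pigeonhole go through verbatim.
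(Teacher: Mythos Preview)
Your proof is correct and follows essentially the same approach as the paper's: set up the feasibility as an LP in equational form with $n+L$ equality constraints, invoke \Cref{prop:basic_feasible_sol} to obtain a basic feasible solution with at most $n+L$ nonzero coordinates, pigeonhole to conclude that at most $L$ blocks are fractional, and round the remaining blocks to elements of $\cX$. Your explicit verification that $d(\vx_i(j),\vy(j)) = 1 - \sum_{A \in \cX_{c_i(j)}} y(j,A) \in [0,1]$ is a nice touch that the paper leaves implicit at this point (it appears later in the derivation of \Cref{eqn:lr_avg_rad_1}).
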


\begin{proof}
Suppose $ n\cdot\rad(\vx_1, \cdots, \vx_L) = t $. 
Then there exists $ \vy\in\Delta_\ell^n $ such that for every $ i\in[L] $, 
$$
d(\vx_i, \vy)
\le t , \notag 
$$
That is, the following polytope is nonempty:
\begin{multline}
\brace{
    \vy \in \Delta^n : 
    \forall i\in[L], \; d(\vx_i, \vy) \le t
} \\
= \brace{
    (y(j,A))_{(j, A)\in[n]\times \cX} : \begin{array}{l}
    \forall (j,k)\in[n]\times \cX,\, y(j,A) \ge 0 , \\
    \forall j\in[n], \, \sum_{A\in \cX} y(j,A) = 1 , \\ 
    \forall i\in[L], \, \frac{1}{2} \sum_{j = 1}^n \paren{ \sum_{A\in \cX_{\vc_i(j)}}(1 - y(j, A)) + \sum_{A\in \cX\setminus\cX_{\vc_i(j)}} y(j,A)-\binom{q-1}{\ell-1}+1} \le t
    \end{array}
} . \notag 
\end{multline}

Equivalently, the following linear program (LP) is feasible: 
\begin{align}
\begin{array}{cl}
\max\limits_{(y(j,A))_{(j, A)\in[n]\times \cX}} & 0 \\
\mathrm{s.t.} & \forall (j, A)\in[n]\times \cX, \, y(j,A) \ge 0 , \\
    & \forall j\in[n], \, \sum_{A\in \cX} y(j,A) = 1 , \\ 
    & \forall i\in[L], \, \frac{1}{2} \sum_{j = 1}^n \paren{ \sum_{A\in \cX_{\vc_i(j)}}(1 - y(j, A)) + \sum_{A\in \cX\setminus\cX_{\vc_j(j)}} y(j,A) -\binom{q-1}{\ell-1}+1} \le t . 
\end{array}
\notag 
\end{align}
Similar to the list decoding case, our LP  can be written as 
\begin{align}
\begin{array}{cl}
\max\limits_{\vy\in\bbR^{\binom{q}{\ell}n}, \vz\in\bbR^L} & 0 \\
\mathrm{s.t.} &  \matrix{A & I_L \\ B & 0} \matrix{\vy \\ \vz} = \matrix{t \one_L \\ \one_n} , \\
    & \vy,\vz \ge \zero . 
\end{array}
\notag 
\end{align}
where $A$ is an $L\times \binom{q}{\ell}n$ matrix and $B$ is an $n\times \binom{q}{\ell}n$ matrix encoding the third and second constraints in the LP problem. By \Cref{prop:basic_feasible_sol}, there exists a basic feasible solution $\vy,\vz$ with at most $n+L$ nonzero components and thus $\vy=(y(j,A))_{(j, A)\in[n]\times \cX}$ has at most $n+L$ nonzeros. Since $\sum_{A\in \cX} y(j,T) = 1$, at least one of $y(j,T), T\in \cX$ are nonzero. This implies that there are at least $n-L$ choices for $j\in [n]$ such that $(y(j,T))_{T\in \cX}=\ve_S$ for some $S\in \cX$. We proceed to construct the set $\vY=(Y_1,\ldots,Y_n)\in \cX^n$. If $\by_j=\ve_S$, we set $Y_j=S$. Since there are at least $n-L$ indices $j$ with $\by_j=\ve_S$ for some $S\in \cX$,
we have at most $L$ $Y_j$ yet to be determined which are set to be any $\ell$-subsets in $\cX$. By construction, the difference between $\distlr(\bc_i,\vY)$ and $d(\bx_i,\by)$ is at most $L$. The proof is completed.
\end{proof}

We further relax rad by defining the weighted average $\ell$-radius. For $\vx_1,\ldots,\vx_L\in (\partial \Delta_\ell)^n$ and $\omega\in \Delta([L])$, let 

\begin{align}
\ol{\rad}_{\omega,\ell}(\vx_1, \cdots, \vx_L) &\coloneqq \frac{1}{n}\min_{\vy\in\Delta_\ell^n} \exptover{i\sim \omega}{d(\vx_i, \vy)}
= \frac{1}{n}\min_{\vy\in\Delta_\ell^n} \sum_{i\in[L]} \omega(i) d(\vx_i, \vy) . \notag 
\end{align}

We can minimize each component of $\vy(A)$ so as to obtain the minimization of the above function. Suppose $\bx_1,\ldots,\bx_L\in (\partial \Delta_\ell)^n$ are the images of $\vc_1,\ldots,\vc_L\in [q]^n$ under the embedding $\phi_\ell$. Then,

\begin{align}
&\ol{\rad}_{\omega,\ell}(\vx_1, \cdots, \vx_L)
=\frac{1}{n} \min_{\vy\in\Delta_\ell^n} \sum_{i\in[L]} \omega(i) d(\vx_i, \vy) \notag \\
&= \frac{1}{2n} \min_{\vy\in\Delta_\ell^n} \sum_{i\in[L]} \omega(i) \sum_{j = 1}^n \paren{ \sum_{A\in \cX_{\vc_i(j)}}(1 - \vy_j(A)) + \sum_{A'\in \cX \setminus\cX_{\vc_i(j)}} \vy_j(A')-\binom{q-1}{\ell-1}+1} \notag \\
&= \frac{1}{2n} \min_{(\vy_1,\ldots,\vy_n)\in\Delta_\ell^n} \sum_{j = 1}^n \brack{ \sum_{i\in[L]} \omega(i)\bigg(\sum_{A\in \cX_{\vc_i(j)}}(1 - \vy_j(A)) + \sum_{A'\in \cX \setminus\cX_{\vc_i(j)}} \vy_j(A')-\binom{q-1}{\ell-1}+1\bigg)} \notag \\
&= \frac{1}{2n} \sum_{j = 1}^n \min_{\vy_j\in\Delta_\ell} \brack{ \sum_{i\in[L]} \omega(i)(2 - 2\sum_{A\in \cX_{\vc_i(j)}} \vy_j(A))} \notag \\
&= \frac{1}{n}\sum_{j = 1}^n \min_{\vy_j\in\Delta_\ell} \brack{ 1 - \sum_{i\in[L]} \omega(i) \sum_{A\in \cX_{\vc_i(j)}} \vy_j(A) } \notag \\
&= 1 - \frac{1}{n}\sum_{j = 1}^n \max_{\vy_j\in\Delta_\ell} \brack{ \sum_{i\in[L]} \omega(i) \sum_{A\in \cX_{\vc_i(j)}} \vy_j(A)} \notag \\
&= 1 - \frac{1}{n}\sum_{j = 1}^n \max_{A\in \cX} \sum_{\substack{i\in[L] \\ \vc_i(j)\in A}} \omega(i) . \label{eqn:lr_avg_rad_1} 
\end{align}
\Cref{eqn:lr_avg_rad_1} is due to the fact that the maximizer $\vy^*$ to the following problem 
$$
\max_{\vy\in\Delta_\ell} \sum_{i\in[L]} \omega(i) \sum_{A\in \cX_{x_i}} \vy(A)
=\max_{\vy\in\Delta_\ell}\sum_{A\in \cX} \vy(A)
\sum_{\substack{i\in [L]\\ x_i\in A} }\omega(i).
$$
is obtained from a set $A\in \cX$ that maximizes 
$\sum_{\substack{i\in [L], x_i\in A}} \omega(i)$
and setting $\vy^*=\ve_A$.

Clearly 
$$
\ol{\rad}_{\omega,\ell}(\vx_1, \cdots, \vx_L)\leq \rad(\vx_1,\ldots,\vx_L). 
$$
Similarly, we can prove that rad is equal to the maximum $\ol{\rad}_{\omega,\ell}$ over $\omega$. 
\begin{lemma}[$ \rad $ equals maximum $ \ol{\rad}_\omega $]\label{lm:lrconvexity}
For any $\vx_1,\ldots,\vx_L\in (\partial \Delta_\ell)^n$, 
$$
\rad(\vx_1,\ldots,\vx_L)=\max_{\omega\in \Delta([L])}\ol{\rad}_{\omega,\ell}(\vx_1,\ldots,\vx_L).
$$
\end{lemma}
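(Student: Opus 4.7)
The plan is to proceed in direct analogy with the list-decoding case established earlier in the excerpt, since the minimax structure carries over essentially unchanged once one substitutes $\Delta_\ell$ for $\Delta$ and uses the embedding $\phi_\ell$. The key observation is that the objective function appearing in the definition of $\rad$ is linear in the center $\vy$ and, once we turn the max over $i \in [L]$ into a max over a probability distribution, also affine in $\omega$; von Neumann's minimax theorem then interchanges the min and max.

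The first step is to rewrite the relaxed $\ell$-radius by replacing the discrete max with a continuous max over $\Delta([L])$. Concretely, for every fixed $\vy \in \Delta_\ell^n$,
\begin{align*}
    \max_{i \in [L]} d(\vx_i, \vy) = \max_{\omega \in \Delta([L])} \sum_{i \in [L]} \omega(i)\, d(\vx_i, \vy),
\end{align*}
since a linear functional over the simplex $\Delta([L])$ attains its maximum at a vertex, i.e., at a singleton distribution. Substituting this identity into the definition of $\rad(\vx_1, \ldots, \vx_L)$ yields
\begin{align*}
    \rad(\vx_1, \ldots, \vx_L) = \frac{1}{n} \min_{\vy \in \Delta_\ell^n} \max_{\omega \in \Delta([L])} \sum_{i \in [L]} \omega(i)\, d(\vx_i, \vy).
\end{align*}

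The second step is to verify the hypotheses of von Neumann's minimax theorem and apply it. The domains $\Delta([L])$ and $\Delta_\ell^n$ are both nonempty, convex, and compact. The objective $\sum_{i \in [L]} \omega(i)\, d(\vx_i, \vy)$ is affine in $\omega$ (trivially, as it is a linear combination of $\omega(i)$'s) and, using the explicit formula \eqref{eqn:ell2_dist} for $d(\vchi, \veta)$ applied coordinatewise as in the list-decoding derivation, it is affine in $\vy$ as well. Hence minimax applies and we may exchange the order of optimization:
\begin{align*}
    \rad(\vx_1, \ldots, \vx_L) = \frac{1}{n} \max_{\omega \in \Delta([L])} \min_{\vy \in \Delta_\ell^n} \sum_{i \in [L]} \omega(i)\, d(\vx_i, \vy) = \max_{\omega \in \Delta([L])} \ol{\rad}_{\omega,\ell}(\vx_1, \ldots, \vx_L),
\end{align*}
where the last equality is the definition of $\ol{\rad}_{\omega,\ell}$.

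There is no genuine technical obstacle here; the proof is essentially identical to that of the list-decoding version, and the only point worth being careful about is confirming affinity of $d(\vx_i, \vy)$ in $\vy$ under the new embedding $\phi_\ell$. That is immediate from the definition in \eqref{eqn:ell2_dist}, which is a linear function of $\vy$ plus an additive constant depending only on $q$ and $\ell$. Thus the lemma follows by one invocation of minimax, matching the structure of the list-decoding proof.
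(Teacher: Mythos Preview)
Your proof is correct and follows exactly the same approach as the paper's: replace the discrete $\max_{i\in[L]}$ by $\max_{\omega\in\Delta([L])}$, observe that the objective is affine in both $\omega$ and $\vy$, and apply von Neumann's minimax theorem to swap the min and max. The paper in fact gives only a two-line sketch pointing back to the list-decoding version, so your write-up is, if anything, more detailed.
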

\begin{proof}
The proof is exactly the same as in \Cref{lm:convexity}. We observe that 
\begin{align}
\rad(\vx_1, \cdots, \vx_L) &\coloneqq \frac{1}{n}\min_{\vy\in\Delta_\ell^n} \max_{i\in[L]} d(\vx_i, \vy) 
= \frac{1}{n}\min_{\vy\in\Delta_\ell^n} \max_{\omega\in\Delta([L])} \sum_{i\in[L]} \omega(i) d(\vx_i, \vy) , \notag 
\end{align}
Then, we apply von Neumann's minimax theorem to interchange min and max. 
\end{proof}
\begin{lemma}[$\rad$ is achieved by finitely many $\omega$]\label{lm:lrfinite}
For every $L$ and $\ell$, there exists a finite set of probability measure $\Omega_{\ell,L}\subseteq \Delta([L])$ such that 
$$
\rad(\vx_1,\ldots,\vx_L)=\max_{\omega\in \Omega_L}\ol{\rad}_{\omega,\ell}(\vx_1,\ldots,\vx_L).
$$
for all $\vx_1,\ldots,\vx_L\in \partial \Delta_\ell^n$. 

 \end{lemma}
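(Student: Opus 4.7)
The plan is to adapt the argument of \Cref{lm:finite} to the list-recovery setting by generalizing the notion of signature so that it takes values in the collection $\cX = \binom{[q]}{\ell}$ of $\ell$-subsets rather than $[q]$. Specifically, for each $\omega \in \Delta([L])$, I would define a \emph{signature} $S_\omega \colon [q]^L \to \cX$ by
\[
S_\omega(\vu) \in \argmax_{A\in \cX}\sum_{i \colon \vu(i) \in A}\omega(i),
\]
breaking ties arbitrarily. This is the natural generalization, since \Cref{eqn:lr_avg_rad_1} shows that $\ol{\rad}_{\omega,\ell}$ depends on $\omega$ only through the quantities $\max_{A\in \cX}\sum_{i \colon \vc_i(j) \in A}\omega(i)$ at each coordinate $j \in [n]$.

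The next step is to observe that, for each fixed signature $S\colon [q]^L \to \cX$, the set
\[
\Omega_S \;=\; \{\omega \in \Delta([L]) : S \text{ is a signature for } \omega\}
\]
is a convex polytope in $\Delta([L])$, being the intersection over $\vu \in [q]^L$ and $A \in \cX$ of the halfspaces
\[
\left\{\omega \in \Delta([L]) : \sum_{i \colon \vu(i) \in S(\vu)}\omega(i) \;\ge\; \sum_{i \colon \vu(i) \in A}\omega(i)\right\}.
\]
Since there are at most $|\cX|^{q^L} = \binom{q}{\ell}^{q^L}$ possible signatures, the collection of such polytopes is finite and depends only on $q, \ell, L$.

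Now, for fixed $\vx_1, \ldots, \vx_L$ (with underlying codewords $\vc_1, \ldots, \vc_L$) and a fixed signature $S$, writing $\vu_j = (\vc_1(j), \ldots, \vc_L(j))$, \Cref{eqn:lr_avg_rad_1} gives that, restricted to $\omega \in \Omega_S$,
\[
\ol{\rad}_{\omega,\ell}(\vx_1,\ldots,\vx_L) \;=\; 1 - \frac{1}{n}\sum_{j = 1}^n \sum_{\substack{i \in [L]\\ \vc_i(j) \in S(\vu_j)}} \omega(i),
\]
which is an affine function of $\omega$. Since an affine function on a convex polytope attains its maximum at some vertex, the maximum of $\ol{\rad}_{\omega,\ell}$ over $\Omega_S$ is realized at a vertex of $\Omega_S$. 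Taking $\Omega_{\ell, L}$ to be the union over all signatures $S$ of the vertex set of $\Omega_S$ yields a finite set of distributions depending only on $q, \ell, L$; and by \Cref{lm:lrconvexity}, maximizing $\ol{\rad}_{\omega, \ell}$ over $\Omega_{\ell, L}$ recovers $\rad(\vx_1, \ldots, \vx_L)$, as desired.

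The main technical consideration is essentially bookkeeping: verifying that the signature framework, originally designed for $[q]$-valued argmaxes, extends cleanly to $\cX$-valued argmaxes when the objective becomes $\sum_{i \colon \vu(i) \in A}\omega(i)$ rather than $\sum_{i \colon \vu(i) = x}\omega(i)$. The underlying polyhedral argument is identical to the list-decoding case, so I do not anticipate any substantive obstacle beyond the mild combinatorial blow-up in the number of signatures from $q^{q^L}$ to $\binom{q}{\ell}^{q^L}$.
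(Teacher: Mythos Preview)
Your proposal is correct and follows essentially the same approach as the paper: generalize signatures to maps $[q]^L \to \cX$, show each $\Omega_S$ is a convex polytope via the same halfspace description you wrote, observe via \Cref{eqn:lr_avg_rad_1} that $\ol{\rad}_{\omega,\ell}$ is affine on each $\Omega_S$, and take $\Omega_{\ell,L}$ to be the union of vertex sets. The paper's proof is even more terse than yours, simply pointing to \Cref{lm:finite} and noting the two required modifications (the $\cX$-valued signatures and the use of \Cref{eqn:lr_avg_rad_1}).
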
 
 
\begin{proof}
 The proof is very similar to \Cref{lm:finite} except that our signatures are now maps $[q]^L \to \binom{[q]}{\ell}$ which yields $\binom{q}{\ell}q^L$ hyperplanes $H_{A,\bu} = \{\omega \in \Delta([L]) \colon \sum_{i:\bu(i) \in A}\omega(i) \geq 1/\binom{q}{\ell}\}$ for each $A \in \binom{[q]}{\ell}$ and $\bu \in [q]^L$. Similarly, to see that the sets $\Omega_S = \{\omega \in \Delta([L])\colon S \text { is a signature for } \omega\}$ for $S:[q]^L \to \binom{[q]}{\ell}$ are indeed convex polytopes one must simply note that it is an intersection over $\vu \in [q]^L$ of the convex polytopes 
 \begin{align*}
    \left\{\omega \in \Delta([L])\colon \exists \text{ signature } \right.&\left.S_\omega \text{ for } \omega \text{ s.t. }S_\omega(\vu) = S(\vu)\right\} \\
    &= \bigcap_{B \in \binom{[q]}{\ell}\setminus S(\vu)} \left\{\omega \in \Delta([L])\colon \sum_{i:\vu(i)\in S(\vu)}\omega(i) \geq \sum_{i:\vu(i)\in B}\omega(i)\right\} \ .
\end{align*}
Lastly, the argument now uses \Cref{eqn:lr_avg_rad_1} in order to view the optimization as a linear program.
\end{proof}
\subsection{Properties of $f_\ell(P,\omega)$} \label{sec:property}
We consider the expected weighted average $\ell$-radius of a sequence of i.i.d.\ symbols. Let 
\begin{align}
f_\ell(P, \omega) &\coloneqq \exptover{(X_1, \cdots, X_L)\sim P^{\ot L}}{\ol{\rad}_{\omega,\ell}(\ve_{X_1}, \cdots, \ve_{X_L})}. \notag 
\end{align}
where $P$ is a probability distribution over $[q]$. Using \Cref{eqn:lr_avg_rad_1}, we have
\begin{align}
f_\ell(P, \omega)&= \exptover{(X_1, \cdots, X_L)\sim P^{\ot L}}{1-\max_{A\in \cX}\sum_{\substack{i\in [L]\\ X_i\in A}} \omega(i)} \notag \\
&=1-\sum_{(x_1,\ldots,x_L)\in [q]^L}\bigg(\prod_{i=1}^L P(x_i)\bigg)\max_{A\in \cX}\sum_{i\in [L]}\omega(i)\indicator{x_i\in A}.
\end{align}

We define the shorthand notation
$$
\mathrm{max}_{\omega,\ell}(x_1,\ldots,x_L):=\max_{A\in \cX}\sum_{ \substack{i\in [L] \\ x_i\in A}} \omega(i), \qquad \mathrm{argmax}_{\omega,\ell}(x_1,\ldots,x_L):=\argmax_{A\in \cX}\sum_{ \substack{i\in [L] \\ x_i\in A}} \omega(i).
$$
for any $\omega\in \Delta([L])$ and $(x_1,\ldots,x_L)\in [q]^L$.

We again begin by establishing that $f_\ell(P,\omega)$ cannot decrease if $\omega$ is replaced by $U_L$, in analogy to \Cref{lm:convexity}.

First, we give a criterion for increase, which is analogous to  \Cref{lem:criterion-for-increase}. The proof is completely analogous to the previous proof and is therefore omitted. 

\begin{lemma} \label{lem:criterion-for-increase-lr}
    Let $P \in \Delta([q])$ and $\omega \in \Delta([L])$. Suppose $\omega(L-1)\neq \omega(L)$ and that $\ol\omega \in \Delta([L])$ is obtained by averaging-out the last two coordinates of $\omega$. Suppose that for all $(x_1,\dots,x_L) \in [q]^L$ we have 
    \begin{align}
        \frac{1}{2} \paren{ \mathrm{max}_{\omega,\ell}(x_1, \cdots, x_{L-1}, x_L) + \mathrm{max}_{\omega,\ell}(x_1, \cdots, x_{L}, x_{L-1}) }
        &\ge \mathrm{max}_{\ol{\omega},\ell}(x_1, \cdots,x_{L-1}, x_L) . \label{eqn:maxlr} 
    \end{align}
    Then $f(P,\ol\omega) \geq f(P,\omega)$. 
    
    Furthermore, suppose that additionally there exists $(x_1,\dots,x_L) \in [q]^L$ with $\prod_{i=1}^LP(x_i)>0$ such that the inequality in \Cref{eqn:maxlr} is strict. Then $f(P,\ol \omega) > f(P,\omega)$. 
\end{lemma}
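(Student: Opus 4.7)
The plan is to mirror the proof of \Cref{lem:criterion-for-increase} essentially verbatim, simply replacing the list-decoding plurality functional $\mathrm{max}_\omega$ with its list-recovery analog $\mathrm{max}_{\omega,\ell}$ and using the explicit formula for $f_\ell(P,\omega)$ derived from \Cref{eqn:lr_avg_rad_1}. The main idea is a symmetrization trick: first define $\omega' \in \Delta([L])$ by swapping the last two coordinates of $\omega$, i.e., $\omega'(i) = \omega(i)$ for $i \le L-2$, $\omega'(L-1) = \omega(L)$, and $\omega'(L) = \omega(L-1)$. Since $(X_1,\ldots,X_L) \sim P^{\otimes L}$ is exchangeable, we have $f_\ell(P,\omega) = f_\ell(P,\omega')$.

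Next, I would write $f_\ell(P,\omega) = \frac{1}{2}(f_\ell(P,\omega) + f_\ell(P,\omega'))$ and unroll both terms using
\[
    f_\ell(P,\omega) = 1 - \sum_{(x_1,\ldots,x_L)\in[q]^L} \paren{\prod_{i=1}^L P(x_i)} \mathrm{max}_{\omega,\ell}(x_1,\ldots,x_L) \ ,
\]
and similarly for $\omega'$. Noting that $\mathrm{max}_{\omega',\ell}(x_1,\ldots,x_{L-1},x_L) = \mathrm{max}_{\omega,\ell}(x_1,\ldots,x_L,x_{L-1})$ (which follows directly from the definition since swapping $\omega(L-1)$ and $\omega(L)$ has the same effect on the sum $\sum_{i : x_i \in A}\omega(i)$ as swapping $x_{L-1}$ and $x_L$), the average becomes
\[
    f_\ell(P,\omega) = 1 - \sum_{(x_1,\ldots,x_L)\in[q]^L} \paren{\prod_{i=1}^L P(x_i)} \cdot \frac{1}{2}\paren{\mathrm{max}_{\omega,\ell}(x_1,\ldots,x_{L-1},x_L) + \mathrm{max}_{\omega,\ell}(x_1,\ldots,x_L,x_{L-1})} \ .
\]
Applying the hypothesis \Cref{eqn:maxlr} pointwise and using $P(x_i) \geq 0$ yields
\[
    f_\ell(P,\omega) \le 1 - \sum_{(x_1,\ldots,x_L)\in[q]^L} \paren{\prod_{i=1}^L P(x_i)} \mathrm{max}_{\ol\omega,\ell}(x_1,\ldots,x_{L-1},x_L) = f_\ell(P,\ol\omega) \ ,
\]
giving the desired monotonicity.

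For the strictness claim, the same chain of inequalities shows that if there exists some $(x_1,\ldots,x_L) \in [q]^L$ with $\prod_{i=1}^L P(x_i) > 0$ at which \Cref{eqn:maxlr} is a strict inequality, then that particular term contributes a strictly positive gap to the sum, yielding $f_\ell(P,\ol\omega) > f_\ell(P,\omega)$. The only nontrivial point compared to the list-decoding argument is to verify the identity $\mathrm{max}_{\omega',\ell}(x_1,\ldots,x_{L-1},x_L) = \mathrm{max}_{\omega,\ell}(x_1,\ldots,x_L,x_{L-1})$, which is the only place where the combinatorial structure of $\cX = \binom{[q]}{\ell}$ (as opposed to the $\ell=1$ case where $\cX = [q]$) could in principle matter; but since the quantity $\sum_{i : x_i \in A}\omega(i)$ for a fixed $A \in \cX$ depends on $(x_i,\omega(i))$ only through the multiset of pairs, the identity goes through unchanged. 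I do not anticipate any serious obstacle.
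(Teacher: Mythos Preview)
Your proposal is correct and is exactly the argument the paper intends: it explicitly states that the proof is ``completely analogous'' to that of \Cref{lem:criterion-for-increase} and omits it, and your symmetrization via the swapped weighting $\omega'$ together with the pointwise application of \Cref{eqn:maxlr} is precisely that analogous argument. The only point you flagged as potentially nontrivial, the identity $\mathrm{max}_{\omega',\ell}(x_1,\ldots,x_{L-1},x_L) = \mathrm{max}_{\omega,\ell}(x_1,\ldots,x_L,x_{L-1})$, is indeed immediate from the definition, so there is nothing further to do.
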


\begin{lemma}
Let $\ell\geq 2$ and $q\geq 3$ with $\ell \leq q$. For any distribution $P$ and $\omega\in \Delta([L])$, $f_{\ell}(P,\omega)\leq f_{\ell}(P, U_L)$.
\end{lemma}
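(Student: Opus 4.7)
The plan is to mimic the proof of \Cref{lm:convexity}: reduce to verifying the convexity-type inequality \eqref{eqn:maxlr}, iteratively average pairs of coordinates of $\omega$, and take a limit to reach $U_L$. Concretely, given any non-uniform $\omega$ with (without loss of generality) $\omega(L-1) \ne \omega(L)$, define $\ol\omega$ by averaging the last two coordinates. By \Cref{lem:criterion-for-increase-lr} it suffices to establish the inequality
\begin{align*}
    \mathrm{max}_{\omega,\ell}(x_1, \cdots, x_{L-1}, x_L) + \mathrm{max}_{\omega,\ell}(x_1, \cdots, x_{L}, x_{L-1}) \ge 2\,\mathrm{max}_{\ol{\omega},\ell}(x_1, \cdots, x_{L-1}, x_L)
\end{align*}
for every $(x_1,\ldots,x_L)\in[q]^L$.

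To verify this, I would fix an optimizer $A^\star \in \mathrm{argmax}_{\ol\omega,\ell}(x_1,\ldots,x_{L-1},x_L)$ and use $A^\star$ as a feasible (but not necessarily optimal) choice on the left-hand side. Since $\ol\omega$ agrees with $\omega$ on the first $L-2$ coordinates, one gets
\begin{align*}
    \mathrm{max}_{\ol\omega,\ell}(x_1,\ldots,x_L) = \sum_{\substack{i \le L-2 \\ x_i \in A^\star}} \omega(i) + \tfrac{\omega(L-1)+\omega(L)}{2}\bigl(\indicator{x_{L-1}\in A^\star} + \indicator{x_L\in A^\star}\bigr).
\end{align*}
On the other hand, using $A^\star$ as the candidate for both max expressions on the left gives
\begin{align*}
    \mathrm{max}_{\omega,\ell}(x_1,\ldots,x_{L-1},x_L) &\ge \sum_{\substack{i\le L-2 \\ x_i\in A^\star}} \omega(i) + \omega(L-1)\indicator{x_{L-1}\in A^\star} + \omega(L)\indicator{x_L\in A^\star}, \\
    \mathrm{max}_{\omega,\ell}(x_1,\ldots,x_L,x_{L-1}) &\ge \sum_{\substack{i\le L-2 \\ x_i\in A^\star}} \omega(i) + \omega(L-1)\indicator{x_L\in A^\star} + \omega(L)\indicator{x_{L-1}\in A^\star},
\end{align*}
where the second bound uses that swapping $x_{L-1}$ and $x_L$ reassigns the weights at positions $L-1$ and $L$. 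Adding these two bounds yields exactly twice the displayed expression for $\mathrm{max}_{\ol\omega,\ell}$, proving \eqref{eqn:maxlr}.

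With \eqref{eqn:maxlr} in hand, \Cref{lem:criterion-for-increase-lr} gives $f_\ell(P,\ol\omega)\ge f_\ell(P,\omega)$. Iterating this averaging procedure produces a sequence $\omega_1 = \omega, \omega_2, \omega_3, \ldots$ of distributions with $f_\ell(P,\omega_{k+1})\ge f_\ell(P,\omega_k)$ that converges in $\ell_\infty$ to $U_L$ (exactly as in \Cref{lm:convexity}). Finally, since $\omega \mapsto \mathrm{max}_{\omega,\ell}(x_1,\ldots,x_L)$ is a maximum of finitely many linear functions and hence continuous, $\omega \mapsto f_\ell(P,\omega)$ is continuous, so passing to the limit gives $f_\ell(P,\omega) \le f_\ell(P, U_L)$. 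The only place anything nontrivial happens is the verification of \eqref{eqn:maxlr}, which is where the more complex maximizer set $\binom{[q]}{\ell}$ (as opposed to the singletons $[q]$ used for list-decoding) could in principle cause trouble; but the argument via plugging a single common test set $A^\star$ into both sides handles it uniformly, so I do not expect a genuine obstacle here.
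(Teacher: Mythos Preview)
Your proof is correct and follows essentially the same strategy as the paper: reduce to \Cref{lem:criterion-for-increase-lr} and verify \eqref{eqn:maxlr} by plugging the $\ol\omega$-optimizer $A^\star$ into both left-hand terms, then iterate and pass to the limit by continuity. Your unified treatment in fact streamlines the paper's argument, which performs an (ultimately unnecessary) case split on whether $x_{L-1}$ and $x_L$ lie in $A^\star$ before arriving at the same bounds.
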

\begin{proof} 
Fix any $ (x_1, \cdots, x_L)\in[q]^L $. 
Let $ \omega\in\Delta([L]) $ be non-uniform. 
Without loss of generality, assume $ \omega(L-1)\neq\omega(L)$ and let $\ol{\omega}$ be obtained from $\omega$ by averaging-out the last two coordinates. We will show \Cref{eqn:maxlr} holds, which suffices by \Cref{lem:criterion-for-increase-lr}.

\Cref{eqn:maxlr} clearly holds if $x_{L-1}=x_L$. We now assume $x_L=a$ and $x_{L-1}=b$ with $a\neq b$.
Let 
$$
\omega^{(a)}=\sum_{\substack{i\in[L-2] \\ x_i = a}}\omega(i), \qquad \omega^{(b)}=\sum_{\substack{i\in[L-2] \\ x_i = a}}\omega(i) \ .
$$
Let $T=\mathrm{argmax}_{\ol{\omega},\ell}(x_1,\ldots,x_L)$. If $x_{L-1}, x_L\in T$, then \Cref{eqn:maxlr} holds trivially. 
Otherwise, we assume at least one of them is not in $T$. Without loss of generality, we first assume that $a\in T$ and $b\notin T$.
This implies that $\sum_{x_i=a}\ol{\omega}(i)\geq \sum_{x_i=b}\ol{\omega}(i)$ and $\omega^{(a)}\geq \omega^{(b)}$. In this case, we have
$$
2\mathrm{max}_{\ol{\omega},\ell}(x_1,\ldots,x_L)=2\sum_{x_i\in T\setminus\{a\}}\omega(i)+2\omega^{(a)}+\omega(L-1)+\omega(L).
$$
On the other hand, 
$$
\mathrm{max}_{\omega,\ell}(x_1,\ldots,x_{L-1},x_L)\geq 
\sum_{x_i\in T\setminus\{a\}}\omega(i)+\omega^{(a)}+\omega(L),
$$
and
$$
\mathrm{max}_{\omega,\ell}(x_1,\ldots,x_{L},x_{L-1})\geq 
\sum_{x_i\in T\setminus\{a\}}\omega(i)+\omega^{(a)}+\omega(L-1).
$$
Thus \Cref{eqn:maxlr} holds. It remains to consider the case both $a,b$ are not in $T$. In this case, we observe that 
$$
\mathrm{max}_{\omega,\ell}(x_1,\ldots,x_{L-1},x_L)\geq 
\sum_{x_i\in T}\omega(i)=
\sum_{x_i\in T}\ol{\omega}(i)=\mathrm{max}_{\ol{\omega},\ell}(x_1,\ldots,x_L)
$$
and
$$
\mathrm{max}_{\omega,\ell}(x_1,\ldots,x_{L},x_{L-1})\geq 
\sum_{x_i\in T}\omega(i)=
\sum_{x_i\in T}\ol{\omega}(i)=\mathrm{max}_{\ol{\omega},\ell}(x_1,\ldots,x_L).
$$
Thus \Cref{eqn:maxlr} always holds. 
\end{proof}

\begin{theorem}\label{thm:lrmaxq}
Let $q>\ell\geq 2$, $L > \ell$ and $\omega \in \Delta([L])$. Assume $P(x)>0$ for all $x\in [q]$.\footnote{This theorem is only applied with $P=U_q$ which clearly satisfies this condition.} Then $f_\ell(P, \omega) = f_\ell(P, U_L)$ if and only if $ \omega = U_L$. 
\end{theorem}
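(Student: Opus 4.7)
The plan is to mirror the proof of \Cref{thm:maxq-unique}, now exploiting the $\ell$-subset structure of $\mathrm{max}_{\omega,\ell}$ in place of the singleton structure of $\mathrm{max}_\omega$. The ``if'' direction is immediate from the symmetry of $U_L$, so we focus on the ``only if''. Suppose toward contradiction that $\omega\ne U_L$ achieves $f_\ell(P,\omega) = f_\ell(P,U_L)$; we will produce a tuple $(x_1,\ldots,x_L)\in[q]^L$ with $\prod_{i=1}^L P(x_i)>0$ and a pair of unequal coordinates of $\omega$ (WLOG $L-1$ and $L$) for which the strict version of \Cref{eqn:maxlr} holds with $\ol\omega$ averaging those two coordinates. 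By \Cref{lem:criterion-for-increase-lr} this forces $f_\ell(P,\ol\omega) > f_\ell(P,\omega) = f_\ell(P,U_L)$, contradicting the maximality of $U_L$ established by the preceding lemma. The case split on $\omega$ is exactly as in \Cref{thm:maxq-unique}: (i) $L$ even; (ii) $L$ odd with at least three distinct coordinate values, or with two distinct values where the minimum is attained at least twice; (iii) $L$ odd with a unique strict minimum, reducible to (ii) by one preliminary averaging step (whose non-decreasing property follows from the $f_\ell$-analogue of \Cref{lem:average-subset-omega}, which is obtained by iterating \Cref{lem:criterion-for-increase-lr}).

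The novelty compared with the list-decoding argument is the tuple construction. Since $q\ge\ell+1$, fix $\ell+1$ distinct symbols $a_1,\ldots,a_{\ell+1}\in[q]$ (all of positive $P$-probability by assumption). In case (i), after averaging the first $L-2$ coordinates of $\omega$ we may assume $\omega(1)=\cdots=\omega(L-2)=\alpha$ and WLOG $\omega(L-1)=\beta<\gamma=\omega(L)$. Distribute $x_1,\ldots,x_{L-2}$ as evenly as possible among $a_1,\ldots,a_{\ell-1}$, and set $x_{L-1}=a_\ell$, $x_L=a_{\ell+1}$. Provided the optimal $\ell$-subset for $\omega$ on this tuple consists of $\{a_1,\ldots,a_{\ell-1}\}$ together with exactly one of $a_\ell,a_{\ell+1}$, it must be $\{a_1,\ldots,a_{\ell-1},a_{\ell+1}\}$ with weight $\alpha(L-2)+\gamma$ (since $\gamma>\beta$); swapping the last two inputs flips the optimum to $\{a_1,\ldots,a_{\ell-1},a_\ell\}$ with identical weight. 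Under $\ol\omega$ the weight on each of $a_\ell,a_{\ell+1}$ drops to $(\beta+\gamma)/2<\gamma$, yielding $\mathrm{max}_{\ol\omega,\ell}(x_1,\ldots,x_L)=\alpha(L-2)+(\beta+\gamma)/2$, strictly less than $\alpha(L-2)+\gamma$. Hence \Cref{eqn:maxlr} holds strictly. Cases (ii) and (iii) admit analogous constructions using one additional distinct symbol (available since $q>\ell$) to accommodate the ``odd'' trailing position.

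The main obstacle is ruling out the competing $\ell$-subsets of the form $(\{a_1,\ldots,a_{\ell-1}\}\setminus\{a_j\})\cup\{a_\ell,a_{\ell+1}\}$ (for $j\in[\ell-1]$), which attain weight $\alpha(L-2)(\ell-2)/(\ell-1)+\beta+\gamma$ and would beat the claimed optimum whenever $\alpha(L-2)/(\ell-1)\le\beta$. This condition can fail when $\omega$ has heavily concentrated mass on the last two coordinates. The resolution is to preface the construction with further applications of the $f_\ell$-analogue of \Cref{lem:average-subset-omega}, spreading $\omega$'s mass into the prefix until $\alpha(L-2)/(\ell-1)>\beta$ holds; since each such averaging step only increases $f_\ell$, the maximality assumption is preserved throughout and a final strict step remains available. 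The verification that the claimed $\ell$-subset is indeed optimal in each case then reduces to a direct comparison of weights, as above, and the rest is bookkeeping following the template of \Cref{thm:maxq-unique}.
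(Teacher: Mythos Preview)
Your plan to mimic the three-case split of \Cref{thm:maxq-unique} works in spirit, but the ``resolution'' paragraph hides a genuine gap. Consider the boundary case $L=\ell+1$ (allowed since $L>\ell$). In your case~(i) the prefix has $L-2=\ell-1$ indices spread over $\ell-1$ symbols, so the condition you need, $\alpha(L-2)/(\ell-1)>\beta$, becomes simply $\alpha>\beta$. Your proposed fix is to apply further averaging to ``spread mass into the prefix'', but averaging a subset $S$ containing coordinate $L-1$ makes $\beta$ equal to the new prefix value~$\alpha'$, so the needed strict inequality $\alpha'>\beta'$ collapses to equality; and if you instead average $\{1,\dots,L-1\}$ to get $\omega'$ with $\omega'(1)=\cdots=\omega'(L-1)=\alpha'$ and $\omega'(L)=\gamma$, then whenever $\gamma>\alpha'$ (e.g.\ the original $\omega$ has its largest coordinate $>1/L$) the competing subset $(\{a_1,\dots,a_{\ell-1}\}\setminus\{a_j\})\cup\{a_\ell,a_{\ell+1}\}$ ties the claimed optimum, and the strict version of \Cref{eqn:maxlr} fails on your tuple. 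So the vague ``further averaging'' step does not close the gap in general.

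The paper sidesteps the whole case split and the competing-subset obstacle by a different (and much shorter) reduction: instead of equalising the prefix, it reduces to the situation where $\omega$ has a \emph{unique strict minimum} coordinate, say $\omega(L)$. This is done by repeatedly averaging a current minimum coordinate with a current maximum coordinate; since the untouched minimum stays strictly below $1/L$, the process never reaches $U_L$. Once $\omega(L)$ is the unique minimum, one takes the single tuple $x_1=\cdots=x_{L-\ell}=1$ and $x_{L-\ell+i}=i+1$ for $i=1,\dots,\ell$. The optimal $\ell$-subset for both $\omega$ and its swap is $\{1,\dots,\ell\}$ (covering every index except~$L$), with value $\sum_{i=1}^{L-1}\omega(i)$, and for $\ol\omega$ it drops to $\sum_{i=1}^{L-2}\omega(i)+\tfrac12(\omega(L-1)+\omega(L))$, which is strictly smaller. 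No parity split and no competing-subset analysis are needed.
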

\begin{proof}
To prove this claim, by \Cref{lem:criterion-for-increase-lr} it suffices to show that for any non-uniform $\omega\in \Delta([L])$ there exists a $(x_1,\ldots,x_L)\in [q]^L$, such that \Cref{eqn:maxlr} strictly holds. 
Since $L>\ell\geq 2$, assume $L=\ell+a\geq 3$ with $a\geq 1$. 

Suppose there exist two components $\omega(L),\omega(L-1)$ achieving the minimum $\min_{i\in [L]} \omega(i)$. Let $\omega(j)$ be the maximum $\max_{i\in [L]} \omega(i)$, which is necessarily strictly larger than $\omega(L-1)$ and $\omega(L)$ (otherwise $\omega=U_L$). Let $\omega'$ be obtained from $\omega$ by averaging-out the coordinates $L-1$ and $j$. \Cref{lm:lrconvexity} promises $f_\ell(P,\omega)\leq f_{\ell}(P,\omega')$. Clearly, $\omega'(L-1)>\ol{\omega}(L)$.
We can continue this process until there is only one component achieving the minimum of $\omega$ (note that we will never obtain the uniform distribution, as it will always hold that $\omega(L)<1/q$). 

We may now assume $\omega(L)$ is smaller than any other component of $\omega$. Let $x_1=x_2=\cdots=x_{a}=1$ and $x_{a+i}=i+1$ for $i=1,\ldots,\ell$. 
This can be done as $q>\ell\geq2$. It is clear that 
$$
\mathrm{max}_{\omega,\ell}(x_1, \cdots,x_{L-1}, x_L)=\sum_{i=1}^{L-1}\omega(i),\qquad  \mathrm{max}_{\omega,\ell}(x_1, \cdots, x_L, x_{L-1})=\sum_{i=1}^{L-1}\omega(i).
$$
Let $ \ol{\omega}\in\Delta([L]) $ be obtained from $\omega$ by averaging-out the last two coordinates.
Then 
$$
2\mathrm{max}_{\ol{\omega},\ell}(x_1, \cdots,x_{L-1}, x_L)=2\sum_{i=1}^{L-2}\omega(i)+\omega(L-1)+\omega(L).
$$
As $\omega(L)<\omega(L-1)$, \Cref{eqn:maxlr} strictly holds. By \Cref{lem:criterion-for-increase-lr}, the proof is completed.
\end{proof}

Now, we turn to maximizing the other derived function $P \mapsto f_{\ell}(P,U_L)$. The function $f_\ell(P,U_L)$ is the same as the function $1-f_{q,L,\ell}(P)$ from~\cite{RYZ22} where
\begin{align*}
    f_{q,L,\ell}(P) &\coloneqq \exptover{(X_1,\cdots,X_L)\sim P^{\ot L}}{\plur_\ell(X_1,\cdots,X_L)} .
\end{align*}
As before, we can rely on certain Schur convexity and convexity results from \cite{RYZ22}. 

\begin{proposition}[Theorem 8, 9~\cite{RYZ22}]\label{prop:lrschur_convex}
For any $q>\ell\geq 2$ and $L > \ell$, $f_{\ell}(P,U_L)\leq f_{\ell}(P_{q,\ell,p},U_L)$ where $p=\max_{A\in \cX}\sum_{i\in A}P(i)$ and 
\begin{align}
    P_{q,\ell,p}(i) &= \begin{cases}
    \frac{1-p}{q-\ell} , & 1\le i\le q-\ell \\
    \frac{p}{\ell} , & q-\ell + 1\le i\le q
    \end{cases} . \label{def:p-q-ell-rho}
\end{align}
 $f_{\ell}(P_{q,\ell,p},U_L)\leq f_{\ell}(U_q,U_L)$ is monotone decreasing for $p\geq \ell/q$. Moreover $f_{\ell}(P_{q,\ell,p}, U_L)$ is concave for $p\in [0,1]$, i.e., $\frac{1}{n}\sum_{i=1}^n f_\ell(P_{q,\ell,p_i},U_L)\leq f_\ell(P_{q,\ell,p}, U_L)$ with $p'=\frac{1}{n}\sum_{i=1}^n p_i$. 
\end{proposition}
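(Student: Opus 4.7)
The plan is to reduce all three parts of the proposition to a single Schur convexity statement about the auxiliary function
\[
    g(P) \coloneqq L \cdot (1 - f_\ell(P, U_L)) = \exptover{(X_1, \ldots, X_L) \sim P^{\ot L}}{\plur_\ell(X_1, \ldots, X_L)} .
\]
Since $\plur_\ell$ is invariant under permutations of the alphabet $[q]$, the function $g$ is symmetric in $(P(1), \ldots, P(q))$, and by the Schur--Ostrowski criterion it suffices to verify $\partial g/\partial P(i) \geq \partial g/\partial P(j)$ whenever $P(i) \geq P(j)$. Computing these partial derivatives via symmetry reduces the task to showing
\[
    \exptover{(X_2, \ldots, X_L) \sim P^{\ot (L-1)}}{\plur_\ell(i, X_2, \ldots, X_L) - \plur_\ell(j, X_2, \ldots, X_L)} \geq 0 .
\]
I would establish this inequality by a coupling-plus-symmetrization argument: for a fixed realization $(x_2, \ldots, x_L)$, write $\plur_\ell(i, x_2, \ldots, x_L) = \max_A ( S_A + \indicator{i \in A})$, where $S_A$ is the count of symbols from $(x_2, \ldots, x_L)$ lying in $A$, and do a case analysis on whether the maximizing subset contains $i$, $j$, both, or neither. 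Pairing each realization with its image under the swap $i \leftrightarrow j$ of the remaining coordinates then cancels the symmetric contributions and isolates a residual proportional to $P(i) - P(j) \geq 0$.

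With Schur convexity of $g$ in hand, Part~1 follows by applying the criterion separately within the extremal set $A^* = \argmax_A \sum_{k \in A} P(k)$ and within its complement: each within-group transfer preserves the two group totals and decreases $g$, and the unique distribution that is uniform within each group with these prescribed totals is precisely $P_{q,\ell,p}$. Note that $p = \max_A \sum_{k \in A} P(k) \geq \ell/q$ always, because the average of $\sum_{k \in A} P(k)$ over $A \in \binom{[q]}{\ell}$ equals $\ell/q$; this automatically ensures $p/\ell \geq (1-p)/(q-\ell)$, so that $A^*$ remains extremal for $P_{q,\ell,p}$ itself and the argument is self-consistent. For Part~2, a direct comparison of partial sums shows that $P_{q,\ell,p'}$ majorizes $P_{q,\ell,p}$ whenever $p' \geq p \geq \ell/q$, whence Schur convexity gives $g(P_{q,\ell,p'}) \geq g(P_{q,\ell,p})$, which is exactly the monotone decrease of $f_\ell(P_{q,\ell,p}, U_L)$ in $p$.

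For Part~3 (concavity on $[0,1]$), I would exploit the two-group structure of $P_{q,\ell,p}$ by decomposing each $X_i$ as a two-stage draw: first a biased coin $B_i \sim \mathrm{Ber}(p)$ determining which group $X_i$ falls into, then $X_i$ uniform on that group. Conditioning on $(B_1, \ldots, B_L)$ and using symmetry within each group yields the closed-form polynomial
\[
    g(P_{q,\ell,p}) = \sum_{k=0}^L \binom{L}{k} p^k (1-p)^{L-k} \alpha_k ,
\]
where $\alpha_k$ is the conditional expectation of $\plur_\ell$ given that exactly $k$ of the $B_i$ equal $1$ (and so depends only on $k, L, q, \ell$). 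Concavity of $f_\ell(P_{q,\ell,p}, U_L)$ is equivalent to convexity of this polynomial, which I would establish by computing the finite differences $\alpha_{k+2} - 2\alpha_{k+1} + \alpha_k$ and using the combinatorial interpretation of $\alpha_k$ as a maximum count among sub-binomial draws to determine their sign. The main obstacle is the Schur convexity step, because $\plur_\ell$ hides an inner maximum over $\binom{q}{\ell}$ subsets whose maximizer can shift between coupled configurations; no purely pointwise comparison suffices, and the symmetrization/pairing scheme sketched above is essential to control the direction of the inequality.
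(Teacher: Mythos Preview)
The paper does not prove this proposition; it is imported from~\cite{RYZ22} (Theorems~8 and~9 there), so there is no in-paper argument to compare against. Your outline for Parts~1 and~2---reduce everything to Schur convexity of $g(P)=\expt{\plur_\ell}$ via the Schur--Ostrowski criterion, then use within-block averaging and majorization along the one-parameter family $p\mapsto P_{q,\ell,p}$---is exactly the strategy one expects. The partial-derivative computation you write down is correct, and you are right that the only real work is the symmetrization step showing $\expt{\plur_\ell(i,X_2,\dots,X_L)-\plur_\ell(j,X_2,\dots,X_L)}\ge0$ when $P(i)\ge P(j)$; your sketch here is plausible but not yet a proof.

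Your plan for Part~3, however, has a concrete gap. You propose to deduce convexity of $p\mapsto g(P_{q,\ell,p})$ from nonnegativity of the second differences $\Delta^2\alpha_k$ of the Bernstein coefficients. These need not be nonnegative. Take $q=3$, $\ell=2$, $L=3$: a direct check gives $\alpha_0=\alpha_1=\alpha_3=3$ but $\alpha_2=5/2$, so $\alpha_2-2\alpha_1+\alpha_0=-\tfrac12<0$. In fact $g(P_{3,2,p})=3-\tfrac32 p^2+\tfrac32 p^3$ has $g''(p)=-3+9p<0$ on $[0,\tfrac13)$, so the concavity claim on all of $[0,1]$ as literally stated is false (the paper only ever applies it with $p\ge\ell/q$, and the analogous list-decoding Proposition~\ref{prop:schur_convex} correctly restricts to $p\ge 1/q$, so the interval here is almost certainly a typo). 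Even on the restricted interval $[\ell/q,1]$, nonnegativity of the second-derivative Bernstein polynomial $\sum_k\binom{L-2}{k}p^k(1-p)^{L-2-k}\Delta^2\alpha_k$ on a subinterval is not controlled by the signs of the individual $\Delta^2\alpha_k$, so your proposed route does not close the argument. You would need a different mechanism---for instance a direct analysis of $g''$ exploiting the two-block structure, or the argument actually given in~\cite{RYZ22}.
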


As a corollory of the above we can prove the following. 
\begin{theorem}\label{thm:lrconcave}
Assume $\rad_\ell(\cC)\leq p$, then we have 
$$
\exptover{(\vc_1,\ldots,\vc_L)\in \cC^L}{\rad_{\omega,\ell}(\phi_\ell(\vc_1),\ldots,\phi_\ell(\vc_L))}\leq f(P_{q,\ell,p},U_L).
$$
\end{theorem}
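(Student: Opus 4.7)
The plan is to run the same four-step chain used in the proof of \Cref{thm:concave}, but with the list-recovery analogs of each ingredient now available from \Cref{sec:property} and \Cref{prop:lrschur_convex}. First I would fix $\vY = (Y_1,\dots,Y_n)\in\cX^n$ attaining $\rad_\ell(\cC)\leq p$; by relabeling the alphabet coordinate-wise (which leaves the relevant statistics invariant) I may assume $Y_i = \{1,\dots,\ell\}$ for every $i\in[n]$. Define the coordinate-wise symbol distributions $P_i(j) = \Pr_{\vc\in\cC}[\vc(i)=j]$, set $p_i = \max_{A\in\cX}\sum_{j\in A}P_i(j)$, and $p' = \frac{1}{n}\sum_{i=1}^n p_i$; note $p_i \geq \ell/q$ by averaging over $A$.

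Next I would use the explicit formula \Cref{eqn:lr_avg_rad_1} together with the fact that a uniformly-random $L$-tuple from $\cC^L$ yields coordinate-wise i.i.d.\ samples from $P_i$ to obtain the identity
\[
\exptover{(\vc_1,\dots,\vc_L)\in\cC^L}{\ol{\rad}_{\omega,\ell}(\phi_\ell(\vc_1),\dots,\phi_\ell(\vc_L))} \;=\; \frac{1}{n}\sum_{i=1}^n f_\ell(P_i,\omega).
\]
From here I would chain four bounds: (i) $f_\ell(P_i,\omega)\leq f_\ell(P_i,U_L)$ by the list-recovery version of \Cref{lm:convexity} proved in \Cref{sec:property}; (ii) $f_\ell(P_i,U_L)\leq f_\ell(P_{q,\ell,p_i},U_L)$ by the Schur-convexity clause of \Cref{prop:lrschur_convex}; (iii) $\frac{1}{n}\sum_i f_\ell(P_{q,\ell,p_i},U_L)\leq f_\ell(P_{q,\ell,p'},U_L)$ by the concavity clause of the same proposition applied to $p\mapsto f_\ell(P_{q,\ell,p},U_L)$; and (iv) $f_\ell(P_{q,\ell,p'},U_L)\leq f_\ell(P_{q,\ell,p},U_L)$ by its monotonicity.

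The main point requiring care is step (iv), which is where the hypothesis $\rad_\ell(\cC)\leq p$ gets used. The $\ell$-radius bound says every $\vc\in\cC$ satisfies $|\{i:\vc(i)\notin Y_i\}|\leq np$, i.e., $|\{i:\vc(i)\in Y_i\}|\geq n(1-p)$; averaging over $\vc\in\cC$ gives $\frac{1}{n}\sum_i\sum_{j\in Y_i}P_i(j)\geq 1-p$, and since $p_i\geq \sum_{j\in Y_i}P_i(j)$ by definition, I conclude $p'\geq 1-p$. Combining with the monotone-decreasing property of $p\mapsto f_\ell(P_{q,\ell,p},U_L)$ on $[\ell/q,1]$ from \Cref{prop:lrschur_convex} then yields (iv) in the parameter regime of interest (where $p$ is close to $p_*(q,\ell,L)\leq 1-\ell/q$, so that $p'\geq 1-p\geq p$ and both $p,p'$ lie in the monotonicity interval). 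No ideas beyond those already assembled in this section are needed; the work is entirely in matching the parameters and invoking the right pieces in the right order.
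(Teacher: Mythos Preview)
Your proposal is correct and follows essentially the same approach as the paper: fix a center $\vY$ attaining the $\ell$-radius, express the expectation as $\frac{1}{n}\sum_i f_\ell(P_i,\omega)$, and then chain \Cref{lm:lrconvexity} (step (i)), the Schur-convexity and concavity of \Cref{prop:lrschur_convex} (steps (ii)--(iii)), and monotonicity (step (iv)). Your derivation that $p'\ge 1-p$ via averaging the inclusion $\vc(i)\in Y_i$ is exactly what the paper means by ``due to $\rad_\ell(\cC)\le p$ and the form of the center $\vY$''; you are simply more explicit. The extra regime caveat you add in step (iv) (needing $1-p\ge p$) reflects an apparent slip in the theorem statement rather than a flaw in your reasoning: as used later in the paper one has $\rad_\ell(\cC)\le 1-\tfrac{\ell}{q}-\delta$ with $p=\tfrac{\ell}{q}+\delta$, i.e.\ the hypothesis is effectively $\rad_\ell(\cC)\le 1-p$, under which your chain gives $p'\ge p$ directly and the monotonicity step goes through without any side condition.
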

\begin{proof}
Let $\vY$ be the center attaining $\rad_{\ell}(\cC)$. Without loss of generality, we assume $\vY=(\{q-\ell+1,\ldots,q\},\ldots,\{q-\ell+1,\ldots,q\})$. 
Let $P_i$ be the distribution of symbols in the $i$-th index of $\cC$, i.e., $P_i(j)=\Pr[\vc(i)=j]$ where $\vc\in \cC$ is a random codeword distributed uniformly at random. Let $p_i=\max_{A\in \cX}\sum_{j\in A}P_i(j)$ and $p'=\frac{1}{n}\sum_{i=1}^n p_i$.
Then, we have 
\begin{eqnarray*}
&&\exptover{(\vc_1,\ldots,\vc_L)\in \cC^L}{\rad_{\omega,\ell}(\phi_\ell(\vc_1),\ldots,\phi_\ell(\vc_L))}=
\frac{1}{n}\sum_{i=1}^{n}f_\ell(P_i,\omega)\\
&&\leq\frac{1}{n}\sum_{i=1}^{n}f_\ell(P_i,U_L)\leq \frac{1}{n}\sum_{i=1}^{n}f_\ell(P_{q,\ell,p_i},U_L)\leq f_\ell(P_{q,\ell,p'}, U_L)\leq f(P_{q,\ell,p}, U_L).
\end{eqnarray*}
The first inequality is due to \Cref{lm:lrconvexity} and the second and third inequalities are due to \Cref{prop:lrschur_convex}. The last inequality is due to $\rad_\ell(\cC)\leq p$ and the form of the center $\vY$.
The proof is completed.
\end{proof}

\subsection{Abundance of Random-Like $L$-tuples} \label{sec:lrabundance}
Recall $\rad_{\ell}(\cC)=\frac{1}{n}\min_{\vY\in \cX^n} \max_{\bc\in \cC} \distlr(\bc,\vY)$. The goal now is to show that, unless $\cC$ has a large biased subcode, most $L$-tuples in $\cC$ have a random-like type. The proof is quite similar to the list-decoding case; we present here for completeness.

\begin{lemma}\label{lm:lrkey}
Let $\pi_A: [q]^n \rightarrow [q]^A$ be the projection on a set $A$ of size $m$. Suppose $\cC\subseteq [q]^n$ is a code of size $\binom{q}{\ell}s$ satisfying $\rad_\ell(\pi_A(\cC))\leq 1-\frac{\ell}{q}-\epsilon$. Then, there exists a subcode $\cC'\subseteq \cC$ of size at least $s$ such that $\rad_\ell(\cC')\leq 1-\frac{\ell}{q}-\frac{m}{n}\epsilon$.
\end{lemma}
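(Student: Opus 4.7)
The plan is to mirror the proof of \Cref{lm:key} with a single modification: the ``majority symbol'' in $[q]$ used in the list-decoding case is replaced by a ``majority $\ell$-subset'' in $\cX = \binom{[q]}{\ell}$.

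The first step is an averaging argument. For any codeword $\vc \in \cC$, if the letter frequencies of $\pi_{\bar{A}}(\vc) \in [q]^{n-m}$ are written in decreasing order $f_1 \ge f_2 \ge \cdots \ge f_q$, then $f_1 + \cdots + f_\ell \ge \ell/q$, since these are the $\ell$ largest among $q$ nonnegative values summing to $1$. Equivalently, there exists an $\ell$-subset $S_\vc \in \cX$ such that $\abs{ \{ i \in \bar{A} : \vc(i) \notin S_\vc \} } \le (n-m)(1 - \ell/q)$. As there are only $\binom{q}{\ell}$ choices for $S_\vc$, the pigeonhole principle yields a subcode $\cC' \subseteq \cC$ of size at least $|\cC|/\binom{q}{\ell} = s$ in which all codewords share a common majority $\ell$-subset $S$.

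The second step is to combine this with the list-recovery center on $A$. Let $\vY_A \in \cX^A$ be a center attaining $\rad_\ell(\pi_A(\cC)) \le 1 - \ell/q - \eps$, so that $\distlr(\pi_A(\vc), \vY_A) \le m(1 - \ell/q - \eps)$ for every $\vc \in \cC$. Define $\vZ \in \cX^n$ by $\vZ_i = (\vY_A)_i$ for $i \in A$ and $\vZ_i = S$ for $i \in \bar{A}$. The list-recovery distance decomposes additively across $A$ and $\bar{A}$, so for any $\vc' \in \cC'$,
\begin{align*}
\distlr(\vc', \vZ)
&= \distlr(\pi_A(\vc'), \vY_A) + \sum_{i \in \bar{A}} \indicator{\vc'(i) \notin S} \\
&\le m\left(1 - \tfrac{\ell}{q} - \eps\right) + (n-m)\left(1 - \tfrac{\ell}{q}\right)
= n\left(1 - \tfrac{\ell}{q}\right) - m\eps,
\end{align*}
whence $\rad_\ell(\cC') \le 1 - \ell/q - (m/n)\eps$, as required.

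There is no real obstacle beyond the list-decoding argument; the only conceptual input is the elementary averaging observation in the first step, which pinpoints why the top-$\ell$ subset is the correct generalization of ``majority symbol'' and ensures the complementary contribution is at most $(n-m)(1-\ell/q)$.
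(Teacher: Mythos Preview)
Your proof is correct and follows essentially the same approach as the paper: pigeonhole on the majority $\ell$-subset of $\pi_{\bar A}(\vc)$ to isolate a subcode of size at least $s$, then concatenate the list-recovery center on $A$ with the constant tuple $(S,\dots,S)$ on $\bar A$ and bound the two contributions separately. Your write-up is in fact slightly more explicit than the paper's in justifying why the top-$\ell$ frequencies cover at least an $\ell/q$ fraction of coordinates.
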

\begin{proof}
The proof is similar as in \Cref{lm:key}. Let $\pi_{\bar{A}}$ be the projection on the remaining $n-m$ indices. By the pigeonhole principle, there exist a subcode $\cC'$ of size at least $\frac{|\cC|}{\binom{q}{\ell}}$ such that the $\ell$ most frequent symbols of  $\pi_{\bar{A}}(\vc')$ is the same. Let $T$ be this set and we have $\distlr(\pi_{\bar{A}}(\vc'), \vT)\leq 1-\frac{\ell}{q}$ with $\vT=(T,T,\ldots,T)\in \cX^{\bar{A}}$. Let $\vY\in \cX^A$ be the center attaining $\rad_\ell(
\pi_A(\cC))$. Define $\vZ$ as $\pi_A(\vZ)=\vT,\pi_{\bar{A}}(\vZ)=\vY$ and for any codeword $\vc'\in \cC'$, we have
\begin{align*}
    \distlr(\vc',\vZ) &= \distlr(\pi_A(\vc),\vY)+\distlr(\pi_{\bar{A}}(\vc'),\vT) \\
    &\leq m\left(1-\frac{\ell}{q}-\epsilon\right)+(n-m)\left(1-\frac{\ell}{q}\right)\leq n\left(1-\frac{\ell}{q}\right)-m\epsilon. \qedhere
\end{align*}
\end{proof}

\begin{theorem}\label{thm:lrradius}
Let $q,L,\ell$ be fixed. For every $\epsilon>0$, there exists a $\delta>0$ with the following property. If $s$ is a natural number, 
there exist constants $M_0=M_0(s)$ and $c(s)$ such that for any code $\cC\subseteq [q]^n$ with size $M\geq M_0$, one of the following 
two alternatives must hold:
\begin{enumerate}
\item There exists $\cC'\subseteq \cC$ such that $|\cC'|\geq s$ and $\rad_\ell(\cC')\leq 1-\frac{\ell}{q}-\delta$.
\item There exists at least $M^L-c(s)M^{L-1}$ many $L$ tuples of distinct codewords $\vc_1,\ldots,\vc_L$ in $C$ such that for all $\bu\in [q]^L$ 
$$
|\type_\vu(\bc_1,\ldots,\vc_L)-q^{-L}|\leq \epsilon
$$
and so we have 
$$
|\ol{\rad}_{\omega,\ell}(\phi_\ell(\vc_1), \cdots, \phi_\ell(\vc_L))-f_\ell(U_q,\omega)|\leq q^L\epsilon \ .
$$
\end{enumerate}  
\end{theorem}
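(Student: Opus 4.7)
The plan is to mirror the proof of Theorem~\ref{thm:radius}, with Lemma~\ref{lm:lrkey} replacing Lemma~\ref{lm:key} and with the ``most frequent symbol'' pigeonhole replaced by a ``top-$\ell$ plurality'' pigeonhole over the $\binom{q}{\ell}$ elements of $\cX$. First I will fix parameters: pick $\delta_0 = \delta_0(\epsilon) > 0$ small enough that $|(1/q - C\delta_0)^L - q^{-L}| \le \epsilon$ for an appropriate constant $C = C(q,\ell)$ arising in the plurality analysis below, let $\mu = (1/q - C\delta_0)^L$, set $\delta = \mu \delta_0$, and take $M_0(s) = \binom{q}{\ell}^{L+1}s$.

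Suppose the first alternative fails, so that every subcode of $\cC$ of size $\geq s$ has $\rad_\ell > 1 - \ell/q - \delta$. To construct $\vc_1$: for each $\vc \in \cC$ let $T_\vc \in \cX$ be an $\ell$-subset maximizing $|\{i : c(i) \in T_\vc\}|$, and by pigeonhole select a subcode $\cC^{(1)} \subseteq \cC$ of size $\geq M/\binom{q}{\ell}$ whose codewords share a common top-$\ell$ plurality set $T$. With $\vT = (T,\ldots,T) \in \cX^n$, maximality of $T$ forces $\distlr(\vc, \vT) \leq n(1 - \ell/q)$ for every $\vc \in \cC^{(1)}$, so the radius hypothesis (applied to $\cC^{(1)}$) produces some $\vc_1 \in \cC^{(1)}$ with $|\{i : c_1(i) \in T\}|/n \in [\ell/q,\, \ell/q + \delta)$. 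A short elementary calculation, exploiting that $p_x \geq p_y$ for every $x \in T, y \notin T$ together with $\sum_{x \in T} p_x \in [\ell/q,\ell/q+\delta)$, then pins down each individual frequency $|A_x|/n$ (where $A_x = \{i : c_1(i) = x\}$) to lie within $C\delta$ of $1/q$; this fixes the constant $C=C(q,\ell)$.

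The remaining codewords $\vc_2, \ldots, \vc_L$ are built by induction. Having constructed $\vc_1, \ldots, \vc_{k-1}$ inducing a near-uniform partition $(A_\vu)_{\vu \in [q]^{k-1}}$ of $[n]$ (each block of relative size within $C\delta_0$ of $q^{-(k-1)}$), I apply Lemma~\ref{lm:lrkey} to each punctured code $\pi_{A_\vu}(\cC)$. If $\rad_\ell(\pi_{A_\vu}(\cC)) \leq 1 - \ell/q - \delta_0$ were to hold, then Lemma~\ref{lm:lrkey} would furnish a subcode of $\cC$ of size $\geq s$ with $\rad_\ell \leq 1 - \ell/q - (|A_\vu|/n)\delta_0 \leq 1 - \ell/q - \delta$, contradicting the failure of the first alternative. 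So $\rad_\ell(\pi_{A_\vu}(\cC)) > 1 - \ell/q - \delta_0$, and repeating the pigeonhole-plus-plurality argument of the previous paragraph on each $\pi_{A_\vu}(\cC)$ (now with parameter $\delta_0$) produces a single $\vc_k \in \cC$ such that $\pi_{A_\vu}(\vc_k)$ has near-uniform symbol distribution for every $\vu \in [q]^{k-1}$ simultaneously. At each of the $L$ iterations only $O_{q,\ell,s}(1)$ codewords need to be excluded from future consideration, so in the end at least $M^L - c(s)M^{L-1}$ $L$-tuples of distinct codewords satisfy $|\typ_\vu(\vc_1,\ldots,\vc_L) - q^{-L}| \leq \epsilon$ for every $\vu \in [q]^L$. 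The claimed bound on $\ol{\rad}_{\omega,\ell}$ then follows by plugging this type estimate into the explicit formula~\eqref{eqn:lr_avg_rad_1} and bounding $|\ol{\rad}_{\omega,\ell}(\phi_\ell(\vc_1),\ldots,\phi_\ell(\vc_L)) - f_\ell(U_q,\omega)| \leq \sum_{\vu} |\typ_\vu - q^{-L}| \leq q^L \epsilon$.

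The main obstacle relative to the list-decoding proof is the combinatorial step in the second paragraph: in list-decoding, a top-$1$ plurality of size close to $n/q$ immediately forces every symbol frequency to lie near $1/q$, but in list-recovery one only controls the sum of the top $\ell$ frequencies, so a short but essential use of the top-$\ell$ ordering is needed to conclude individual near-uniformity of each $|A_x|/n$ (at the cost of inflating the deviation by a $q,\ell$-dependent constant $C$). Once this step is in place, the remaining induction, pigeonhole, and bookkeeping are entirely parallel to the list-decoding argument.
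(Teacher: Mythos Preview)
Your proposal is correct and follows essentially the same approach as the paper: pigeonhole on the top-$\ell$ plurality set over $\binom{q}{\ell}$ classes, use the failure of alternative~1 together with Lemma~\ref{lm:lrkey} on each punctured code $\pi_{A_\vu}(\cC)$, and induct to build the $L$-tuple. Your explicit introduction of the constant $C=C(q,\ell)$ in the step deducing individual near-uniformity of $|A_x|/n$ from $\sum_{x\in T}|A_x|/n\in[\ell/q,\ell/q+\delta)$ and the ordering $\min_{x\in T}p_x\ge\max_{y\notin T}p_y$ is in fact a bit more careful than the paper's stated bounds (the paper asserts $\max_x |A_x|/n\le 1/q+\delta$, whereas the argument actually yields $1/q+O_{q,\ell}(\delta)$), but since only an $O_{q,\ell}(\delta)$ deviation is needed this does not affect the proof.
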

\begin{proof}
Set $h=\binom{q}{\ell}$. Let $\epsilon$ satisfy $\abs{ \paren{ \frac{1}{q}-(q-\ell)\delta_0 }^L-q^{-L} }\leq \epsilon$ and $\mu=\left(\frac{1}{q}-(q-\ell)\delta_0\right)^L$, $\delta=\mu\delta_0$. Set $M_0(s)=q^Lhs$.
We assume that the first statement does not hold and our goal is to show that the second statement must hold. Since the first statement does not hold, for any $\vY\in \cX^n$, there exists a codeword $\vc\in \cC$ with $\distlr(\vc,\vY)>n\paren{ 1-\frac{\ell}{q}-\delta }$. For each $\vc\in \cC$, let $T_\vc\in \cX$ be the collection of the $\ell$ most frequent symbols. By the pigeonhole principle, we can find a subcode $\cC'\subseteq \cC$ of size at least $\frac{M}{h}$ such that $T_\vc$ for $\vc\in \cC'$ are the same $T$. Let $\vT=(T,T,\ldots,T)\in \cX^n$
It is clear that $\rad_{\ell}(\cC')\leq \frac{1}{n}\max_{\vc\in \cC'} \distlr(\vc, \vT).$ As $M>hs$, this implies $\distlr(\vc_1, \vT)>\paren{ 1-\frac{\ell}{q}-\delta }n$ for some $\vc_1\in \cC'$. (In fact, there exist at least $M-hs$ such $\vc_1$ as we can remove $\vc_1$ from $\cC'$ and obtain the same conclusion.)
Note that necessarily $\distlr(\vc_1, \vT)\leq \left(1-\frac{\ell}{q}\right)n$ (otherwise $T$ would not be the $\ell$-element subset agreeing the most with $\vc_1$). Let $A_x=\{i\in [n]: \vc_1(i)=x\}$ for $x\in [q]$. This implies 
$$\frac{|A_x|}{n}\in \brack{ \frac{1}{q}-(q-\ell)\delta, \frac{1}{q}+\delta }\subseteq \brack{ \frac{1}{q}-(q-\ell)\delta_0, \frac{1}{q}+\delta_0 },$$
as $\max_{x\in [q]} \frac{|A_x|}{n}\in \left[\frac{1}{q},\frac{1}{q}+\delta\right]$ and $\min_{x\in [q]} \frac{|A_x|}{n}\in \left[\frac{1}{q}-(q-\ell)\delta,\frac{1}{q}\right]$.  
 
Now we fix $\vc_1$ and its index set $A_1,\ldots,A_q$ and let $\cC'=\cC\setminus\{\vc_1\}$. We consider the punctured code $\pi_{A_1}(\cC)$. According to \Cref{lm:lrkey}, there exists a subcode $\cC''\subseteq \cC$ of size at most $hs-1$ with $\rad_{\ell}(\pi_{A_1}(\cC''))\leq 1-\frac{\ell}{q}-\delta_0$. Therefore, the same argument as above shows that there exists at least $M-2hs$ codewords $\vc_2\in \cC$ such that the symbol distribution of $\pi_{A_1}(\vc_2)$ is close to uniform, i.e., $|\{i\in A_1: \vc_2(i)=x\}|/|A_1|\in \brack{ \frac{1}{q}-(q-\ell)\delta_0, \frac{1}{q}+\delta_0 }$ for each $x \in [q]$. 
Then, we apply this argument with sets $A_2,\ldots,A_q$ sequentially and conclude that there exists at least $M-2qhs-1$ (excluding $\vc_1$) codewords $\vc_2\in \cC$ such that the symbol distribution of each $\pi_{A_x}(\vc_2)$ is close to uniform. 

We next partition $[n]$ into $q^2$ sets $A_{xy}=\{i\in [n]: \vc_1(i)=x,\vc_2(j)=y\}$ for $x,y \in [q]$ according to the value of $\vc_1$ and $\vc_2$. This gives $\frac{|A_{xy}|}{n}\in \brack{ \paren{ \frac{1}{q}-(q-\ell)\delta_0 }^2, \paren{ \frac{1}{q}+\delta_0 }^2 }$.
One can continue this process and construct $L$-tuples $\vc_1,\ldots,\vc_L$ for which necessarily
\begin{align} \label{eq:random-like-lr}
    \forall \vu \in [q]^L,~~~|\type_\vu(\bc_1,\ldots,\vc_L)-q^{-L}|\in \brack{ \left(\frac{1}{q}-(q-\ell)\delta_0\right)^L, \paren{ \frac{1}{q}+\delta_0 }^L }
\end{align}
In general, there are more than 
$$
N_1=\prod_{i=0}^{L-1}(M-j-2q^ihs)
$$
$L$-tuples $(\bc_1,\ldots,\vc_L)$ satisfying \Cref{eq:random-like-lr}. This implies $N_1\geq M^L-cM^{L-1}$ where $c$ only depends on $q$, $\ell$ and $L$. The proof is completed.
\end{proof}
\subsection{Putting Everything Together} \label{sec:lrend}
The proof is quite similar to the list-decoding case. We provide the proof for completeness.  Define $\rho_{\ell}(\cC)=\min \rad_{\ell}(\phi_{\ell}(\vc_1),\ldots,\phi_{\ell}(\vc_L))$ with minimum taken over all $L$-tuples $(\vc_1,\ldots,\vc_L)\in C^L$ with distinct codewords. 

\begin{theorem} \label{thm:list-rec-main}
Let $L\geq 2$ and $q,L>\ell\geq 2$. If $\cC\subseteq [q]^n$ is $(p_*(q,\ell,L)+\epsilon,\ell, L)$-list recoverable, then $|\cC|=  O_{q,\ell,L}(\frac{1}{\epsilon})$. 
\end{theorem}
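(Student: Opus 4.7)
The plan is to follow the two-case argument used for list-decoding in the proof of \Cref{thm:concave}'s counterpart, replacing each ingredient with its $\ell$-radius analogue developed in \Cref{sec:radius,sec:property,sec:lrabundance}. Write $\tau_L = p_*(q,\ell,L)$ and $\tau_{p,L} = f_\ell(P_{q,\ell,p}, U_L)$; recall by \Cref{prop:lrschur_convex} that $\tau_{p,L} < \tau_L$ whenever $p > \ell/q$.

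First I would pass from $\cC$ to a subcode $\cC_2$ for which every $L$-tuple of distinct codewords has \emph{relaxed} $\ell$-radius at least $\tau_L + \epsilon$. To do this, fix $r = \lfloor 1/(\tau_L+\epsilon)\rfloor$, pigeonhole $\cC$ into a subcode $\cC_1$ whose codewords share the same length-$rL$ prefix (losing a factor of at most $q^{rL}$), and puncture those $rL$ coordinates to obtain $\cC_2$. Any $L$-tuple of distinct codewords in $\cC_2$ now has $\ell$-radius at least $(1 + rL/n)(\tau_L+\epsilon)$; invoking \Cref{lm:lrconvert} yields $\rad(\phi_\ell(\vc_1),\dots,\phi_\ell(\vc_L)) \ge \tau_L + \epsilon$ for every such tuple, that is, $\rho_\ell(\cC_2) \ge \tau_L + \epsilon$.

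Next I would dichotomize on whether the $\ell$-radius of $\cC_2$ itself is bounded away from $1 - \ell/q$. If $\rad_\ell(\cC_2) \leq 1 - \ell/q - \delta$ for some $\delta > 0$, set $p = \ell/q + \delta$. By \Cref{lm:lrfinite} the finite set $\Omega_{\ell,L}$ induces a coloring of $L$-subsets of $\cC_2$ (each $L$-subset receives the color $\omega \in \Omega_{\ell,L}$ attaining its relaxed radius, breaking ties arbitrarily). The hypergraph Ramsey theorem then produces a monochromatic sub-code $\cC_3 \subseteq \cC_2$ of size exceeding $L^2/(\tau_L - \tau_{p,L})$, provided $\cC_2$ is large enough. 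Averaging $\ol{\rad}_{\omega,\ell}$ over random $L$-tuples in $\cC_3$, discarding non-distinct tuples (a $\binom{L}{2}/|\cC_3| < \tau_L - \tau_{p,L}$ fraction), and applying \Cref{thm:lrconcave} together with $\ol{\rad}_{\omega,\ell} \leq 1$ forces the existence of a distinct $L$-tuple with $\ol{\rad}_{\omega,\ell} < \tau_L$, contradicting $\rho_\ell(\cC_2) \geq \tau_L + \epsilon$. Hence $|\cC_2|$ is bounded by a constant depending only on $q,\ell,L$.

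In the remaining case $\rad_\ell(\cC_2) > 1 - \ell/q - \delta$ for every $\delta > 0$, apply \Cref{thm:lrradius}. Set $\epsilon_0 = q^{-L}\min\{\tau_L - f_\ell(U_q,\omega) : \omega \in \Omega_{\ell,L}\setminus\{U_L\}\}$; this is strictly positive by \Cref{thm:lrmaxq} (unique maximum at $U_L$) and finiteness of $\Omega_{\ell,L}$ from \Cref{lm:lrfinite}. Invoking \Cref{thm:lrradius} with $\epsilon = \epsilon_0$ gives a constant $c = c(q,\ell,L)$ and $M_0$ so that for $|\cC_2| \geq M_0$ all but at most $c|\cC_2|^{L-1}$ distinct $L$-tuples satisfy $\ol{\rad}_{\omega,\ell}(\phi_\ell(\vc_1),\dots,\phi_\ell(\vc_L)) \leq \tau_L$ for every $\omega \in \Omega_{\ell,L}$; in particular, for $\omega = U_L$. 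On the other hand, any distinct $L$-tuple in $\cC_2$ has $\ol{\rad}_{U_L,\ell} \geq \rho_\ell(\cC_2) \geq \tau_L + \epsilon$. Averaging over random $L$-tuples in $\cC_2^L$ and subtracting the contribution from non-distinct tuples ($O(1/|\cC_2|)$) and from the exceptional ``non-random-like'' tuples ($O(1/|\cC_2|)$), we obtain $\tau_L \ge (1 - O(1/|\cC_2|))(\tau_L + \epsilon)$, forcing $|\cC_2| = O_{q,\ell,L}(1/\epsilon)$, and so also $|\cC| = O_{q,\ell,L}(1/\epsilon)$.

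The main obstacle is the first case: one must align the Ramsey-selection of a monochromatic subcode with the weighted-radius averaging bound from \Cref{thm:lrconcave}, while carefully controlling the tiny fraction of non-distinct tuples against the strict gap $\tau_L - \tau_{p,L}$. The rest is a syntactic lift of the list-decoding argument using the $\ell$-radius machinery.
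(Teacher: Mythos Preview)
Your sketch follows the paper's approach, and the prefix/pigeonhole reduction together with the first (Ramsey) case are handled correctly. In the second case, however, there is a logical slip. You assert that \Cref{thm:lrradius} gives $\ol{\rad}_{\omega,\ell}\le\tau_L$ for \emph{every} $\omega\in\Omega_{\ell,L}$, ``in particular for $\omega=U_L$'', and then that \emph{every} distinct $L$-tuple satisfies $\ol{\rad}_{U_L,\ell}\ge\rho_\ell(\cC_2)\ge\tau_L+\epsilon$. Neither holds. For $\omega=U_L$ one has $f_\ell(U_q,U_L)=\tau_L$, so \Cref{thm:lrradius} only yields $\ol{\rad}_{U_L,\ell}\le\tau_L+q^L\epsilon_0$; this is exactly why you (correctly) defined $\epsilon_0$ over $\Omega_{\ell,L}\setminus\{U_L\}$. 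More importantly, since $\ol{\rad}_{U_L,\ell}\le\rad$ always, the inequality $\ol{\rad}_{U_L,\ell}\ge\rho_\ell(\cC_2)$ does not follow from list-recoverability alone. Taken together, your two claims would force the set of ``good'' distinct tuples to be empty, giving only an $O_{q,\ell,L}(1)$ bound, which contradicts the $\Omega(1/\epsilon)$ construction.

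The missing idea is \emph{why} removing the exceptional set forces $\ol{\rad}_{U_L,\ell}$ to be large. Define $\cH$ as in the paper: tuples with $\ol{\rad}_{\omega,\ell}>\tau_L$ for some $\omega\ne U_L$. For a distinct tuple in $\cT\setminus\cH$ one has $\ol{\rad}_{\omega,\ell}\le\tau_L$ for all $\omega\in\Omega_{\ell,L}\setminus\{U_L\}$, while $\rad=\max_{\omega\in\Omega_{\ell,L}}\ol{\rad}_{\omega,\ell}\ge\tau_L+\epsilon$ by \Cref{lm:lrfinite} and $\rho_\ell(\cC_2)\ge\tau_L+\epsilon$; hence the maximum must be attained at $\omega=U_L$, i.e.\ $\ol{\rad}_{U_L,\ell}\ge\tau_L+\epsilon$ on $\cT\setminus\cH$. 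With this correction your averaging inequality $\tau_L\ge(1-O(1/|\cC_2|))(\tau_L+\epsilon)$ (the upper bound coming from \Cref{prop:lrschur_convex} applied coordinatewise) goes through, and the remainder matches the paper's argument verbatim.
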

\begin{proof}
To simplify our notation, let $\tau_{\ell,L}=p_*(q,\ell,L)$ and $\tau_{x,\ell,L}=f_{\ell}(P_{q,\ell,x},U_L)$ with $P_{q,\ell,x}$ defined in \Cref{prop:lrschur_convex}.
Similar to the list-decoding case, by defining $\cC_1$ to be the set of all $\vc \in \cC$ whose first $rL$
coordinates are some given string in $[q]^{rL}$ for $r = \lfloor 1/(\tau_L+\eps)\rfloor$, we can obtain a code $\cC_2\subseteq [q]^{n-rL}$ of size at least $q^{-rL}|\cC|$ whose $\ell$-radius is at least $\tau_{\ell,L}+\epsilon+\frac{L}{n}$. 
Applying \Cref{lm:lrconvert}, we thus have a subcode $\cC_2\subseteq \cC$ with $\rho_{\ell}(\cC_2)\geq \tau_{\ell,L}+\epsilon$. 
We divide our discussion into two cases. 
\begin{itemize}
\item \label{ramsey-listrec} $\rad_{\ell}(\cC_2)\leq 1-\frac{\ell}{q}-\delta$ for some constant $\delta>0$. Let $p=\frac{\ell}{q}+\delta$. 
For every $(\vc_1,\ldots,\vc_L)\in \cC_1^{L}$, there exists a weight $\omega\in \Omega_{\ell,L}$ that solves 
$$\rad(\phi_{\ell}(\vc_1),\ldots,\phi_{\ell}(\vc_L))=\max_{\omega\in \Omega_{\ell,L}}\ol{\rad}_{\omega,\ell}(\phi_{\ell}(\vc_1), \cdots, \phi_{\ell}(\vc_L)).$$ Each solution $\omega$ gives a coloring of $L$-element subsets of $\cC_2$. By \Cref{lm:lrfinite}, there are a finite number of $\omega$ in $\Omega_{\ell,L}$. The hypergraph version of Ramesy's theorem~\cite[Theorem~2]{graham1991ramsey} implies that there exists a monochromatic subset $\cC_3\subseteq \cC_2$ exceeding $\frac{L^2}{\tau_{\ell,L}-\tau_{p,\ell,L}}$. 

On the other hand, let $\cT$ be the set of all $L$-tuples with distinct codewords in $\cC_3$.  Let $(\vc_1,\ldots,\vc_L)$ be an $L$-tuple selected uniformly at random in $\cC_3^{L}$. Then 
$$\Pr[(\vc_1,\ldots,\vc_L)\notin \cT]\leq \frac{\binom{L}{2}}{|C_3|}<\tau_{\ell,L}-\tau_{p,\ell,L}.$$ 
Since $\rad_\ell(\cC)\leq 1-\frac{\ell}{q}-\delta$, by \Cref{lm:lrconvexity} and \Cref{prop:lrschur_convex}, we have
\begin{eqnarray*}
 \tau_{p,\ell,L}&\geq& \exptover{(\vc_1,\ldots,\vc_L)\in \cC_3^L}{\ol{\rad}_{\omega,\ell}(\phi_{\ell}(\vc_1), \cdots, \phi_{\ell}(\vc_L))}\\
 &\geq& \Pr[(\vc_1,\ldots,\vc_L)\in \cT]\exptover{(\vc_1,\ldots,\vc_L)\in \cT}{\ol{\rad}_{\omega,\ell}(\phi_{\ell}(\vc_1), \cdots, \phi_{\ell}(\vc_L))}.
 \end{eqnarray*}
This implies that there exists an $L$-tuple of distinct codewords $\vc_1,\ldots,\vc_L$ in $\cC_3$ such that $$(1-\tau_{\ell,L}+\tau_{p,\ell,L})\ol{\rad}_{\omega,\ell}(\phi_{\ell}(\vc_1), \cdots, \phi_{\ell}(\vc_L))< \tau_{p,\ell,L}.$$ 
It follows that 
$$
\ol{\rad}_{\omega,\ell}(\phi_{\ell}(\vc_1), \cdots, \phi_{\ell}(\vc_L))<\tau_{p,\ell,L}+\tau_{\ell,L}-\tau_{p,\ell,L}=\tau_{\ell,L}
$$
contradicting the assumption that $\rho_L(\cC_2) \geq \tau_{\ell,L}+\eps$. 
\item Otherwise, let $\cH$ be the collection of all $L$-tuples $(\vc_1,\ldots,\vc_L)$ in $\cC_2^L$ such that $\ol{\rad}_{\omega,\ell}(\phi_{\ell}(\vc_1), \cdots, \phi_{\ell}(\vc_L))>\tau_{\ell,L}$ for some $\omega\neq U_L$. Let $\epsilon_0=q^{-L}\min\{\tau_{\ell,L}-f_\ell(U_q,\omega): \omega\in \Omega_{\ell,L}\}$; \Cref{thm:lrmaxq} and \Cref{lm:lrfinite} guarantee $\eps_0>0$. 
By \Cref{thm:lrradius}, there exists at least $|\cC_2|^L-c|\cC_2|^{L-1}$ many $L$-tuples of distinct codewords $\vc_1,\ldots,\vc_L$ in $\cC_2$ such that 
$$\ol{\rad}_{\omega,\ell}(\phi_\ell(\vc_1), \cdots, \phi(\vc_L))\leq f_\ell(U_q,\omega)+q^L\epsilon_0\leq \tau_{\ell,L}.$$
Thus, $|\cH|\leq c |\cC_2|^{L-1}$ where $c$ only depends on $q,\ell,L$. Let $(\bc_1,\ldots,\bc_L)$ be a random $L$-tuple in $\cC_2^L$. Similarly, we can show
$\Pr[(\vc_1,\ldots,\vc_L)\in \cT\setminus \cH]\geq 1-O(\frac{1}{|\cC_2|})$ where the constant in $O$ only depends on $q,\ell,L$ and 
\begin{eqnarray*}
\tau_{\ell,L}\geq\paren{ 1-O\paren{\frac{1}{|\cC_2|}} }\exptover{(\vc_1,\ldots,\vc_L)\in \cT\setminus \cH}{\ol{\rad}_{U_L,\ell}(\phi_\ell(\vc_1), \cdots, \phi_\ell(\vc_L))}
 \end{eqnarray*}
On the other hand, 
$$\ol{\rad}_{U_L,\ell}(\phi_\ell(\vc_1), \cdots, \phi_\ell(\vc_L))\geq \rho_\ell(\cC_1)\geq \tau_{\ell,L}+\epsilon.$$
for $(\vc_1,\ldots,\vc_L)\in \cT\setminus\cH$.
This implies that $|\cC_2|\leq O_{q,\ell,L}(\frac{1}{\epsilon})$ and thus $|\cC|\leq  O_{q,\ell,L}(\frac{1}{\epsilon})$.

\end{itemize}
\end{proof}

\section{Code Construction} \label{sec:code-construct}

In this section, we present a simple simplex-like code construction and show that it attains the optimal size-radius trade-off by analyzing its list-decoding and -recovery radius. 

Our construction will be identical for list-decoding and -recovery and therefore we will directly analyze its list-recovery radius. 
Before presenting the construction and its analysis, let us define the average radius $ \ol{\rad}_\ell $. 
This is a standard notion that ``linearizes'' the Chebyshev radius $ \rad $ and often finds its use in the analysis of list-recoverable codes in the literature. 
The definition reads as follows: for any $ \vc_1, \cdots, \vc_L\in[q]^n $, 
\begin{align}
    \ol{\rad}_\ell(\vc_1, \cdots, \vc_L) &\coloneqq \frac{1}{L}\min_{\vY\in \cX^n} \sum_{i = 1}^L \distlr(\vc_i, \vY) . \notag 
\end{align}
It is well-known and easy to verify (by, e.g., following the derivations leading to \Cref{eqn:wt_avg_rad_2}) that the above minimization admits the following explicit solution: 
\begin{align}
    \ol{\rad}_\ell(\vc_1, \cdots, \vc_L) &\coloneqq \sum_{j = 1}^n \paren{ 1 - \frac{1}{L} \plur_\ell(\vc_1(j), \cdots, \vc_L(j)) } , \label{eqn:def-avgrad-plur} 
\end{align}

\Cref{eqn:def-avgrad-plur} should be interpreted as the average distance from each $\vc_i$ to the ``centroid'' $ \vY^*\in \cX^n $ of the list defined as\footnote{If there are multiple maximizers, take an arbitrary one and the value of $ \ol{\rad}_\ell $ remains the same. }
\begin{align}
    \vY^*(j) &\coloneqq \argmax_{A\in \cX} \sum_{i = 1}^L \indicator{ \vc_i(j)\in A } . \notag 
\end{align}
for each $j\in[n]$. 

Finally, for integers $q\ge1$ and $L\ge0$, denote by
\begin{align}
    \cA_{q, L} &= \brace{ (a_1, \cdots, a_q) \in \bbZ_{\ge0}^q : \sum_{i = 1}^q a_i = L } \notag 
\end{align}
the set of $q$-partitions of $L$, i.e., $a_i$ is the number of indices taking value $i$. For $\va\in \cA_{a,L}$, we shorthand $\binom{L}{\va}=\binom{L}{a_1,\ldots,a_q}$ where $\va=(a_1,\ldots,a_q)$. Define $\maxl\{\va\}=\max_{A\in \cX}\sum_{i\in A}a_i$, i.e., the sum of $\ell$ largest components in $\va$. 
\begin{theorem}[Construction of zero-rate list-recoverable codes]
\label{thm:list-dec-qary-postplotkin-constr}
Fix any integers $ q\ge3 $, $\ell\geq 1$ and  $L\ge2 $. 
For any sufficiently large $m$, there exists a $(p,\ell,L)$-list-recoverable code $\cC$ with blocklength 
\begin{align}
    n = \binom{qm}{\LaTeXunderbrace{m,\cdots,m}_{q}} , \label{eqn:blocklength}
\end{align}
and the trade-off between code size $M$ and (relative) radius $p$ given by:
\begin{align}
    M = qm , \quad 
    p = p_*(q, \ell, L) + c_{q, \ell, L} m^{-1} + O(m^{-2}) , \notag 
\end{align}
where 
\begin{align}
c_{q,\ell,L} &\coloneqq q^{-L} \sum_{\va\in\cA_{q,L}} \frac{\maxl\brace{\va}}{L} \binom{L}{\va} \brack{\sum_{i = 1}^q\binom{a_i}{2} - \frac{1}{q}\binom{L}{2}} > 0 . \label{eqn:const_constr} 
\end{align}
\end{theorem}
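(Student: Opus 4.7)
The plan is to analyse the simplex-like code $\cC$ built from the $M\times n$ matrix with $M=qm$ and $n=\binom{qm}{m,\ldots,m}$ whose $n$ columns are all distinct length-$M$ strings containing exactly $m$ copies of each symbol of $[q]$, taking $\cC$ to be the $M$ rows. Because the construction is invariant under permutations of the rows, every ordered $L$-tuple of distinct codewords induces the same column-type distribution on the associated $L\times n$ sub-matrix; it therefore suffices to fix one such tuple $(\vc_1,\ldots,\vc_L)\in\cC^L$ and bound its $\ell$-radius. This exact regularity is what allows the simplex-like construction to saturate the zero-rate trade-off, in contrast with the random construction whose type-profile fluctuations are too large in the zero-rate regime.

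My second step is exact counting. For every partition $\va=(a_1,\ldots,a_q)\in\cA_{q,L}$, the number of column indices $j\in[n]$ for which $(\vc_1(j),\ldots,\vc_L(j))$ has type $\va$ is exactly $\binom{L}{\va}\binom{qm-L}{m-a_1,\ldots,m-a_q}$: the first multinomial arranges $a_i$ copies of symbol $i$ among the $L$ chosen rows, and the second fills the remaining $qm-L$ rows using the leftover budget $m-a_i$ of each symbol. Substituting into the explicit formula \eqref{eqn:def-avgrad-plur} for $\ol{\rad}_\ell$ yields
\begin{align*}
    \ol{\rad}_\ell(\vc_1,\ldots,\vc_L) &= \sum_{\va\in\cA_{q,L}} \binom{L}{\va}\binom{qm-L}{m-a_1,\ldots,m-a_q}\paren{1 - \frac{\maxl\{\va\}}{L}} .
\end{align*}

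Third, I would Taylor-expand the ratio $\binom{qm-L}{m-a_1,\ldots,m-a_q}/\binom{qm}{m,\ldots,m}$ in powers of $m^{-1}$. Rewriting both multinomials as products of falling factorials gives
\begin{align*}
    \frac{\binom{qm-L}{m-a_1,\ldots,m-a_q}}{\binom{qm}{m,\ldots,m}}
    = \frac{\prod_{i=1}^q\prod_{j=0}^{a_i-1}(m-j)}{\prod_{j=0}^{L-1}(qm-j)}
    = q^{-L}\paren{1 - \frac{1}{m}\paren{\sum_{i=1}^q\binom{a_i}{2} - \frac{1}{q}\binom{L}{2}} + O(m^{-2})} .
\end{align*}
Consequently the zeroth-order term of $\ol{\rad}_\ell/n$ recovers $p_*(q,\ell,L) = 1 - q^{-L}\sum_\va \binom{L}{\va}\maxl\{\va\}/L$, while the coefficient of $m^{-1}$ matches the constant $c_{q,\ell,L}$ in \eqref{eqn:const_constr} exactly (the term proportional to $\sum_\va\binom{L}{\va}(\sum_i\binom{a_i}{2} - \binom{L}{2}/q)$ with no $\maxl$ factor vanishes, since this centred statistic has mean $0$ under $\mathrm{Mult}(L,U_q)$). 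Because the trivial inequality $\rad_\ell \ge \ol{\rad}_\ell/n$ holds (maximum dominates average), this transfers directly to the Chebyshev $\ell$-radius and certifies $(p,\ell,L)$-list-recoverability for $p = p_*(q,\ell,L) + c_{q,\ell,L}m^{-1} + O(m^{-2})$.

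The main remaining subtlety is strict positivity $c_{q,\ell,L}>0$. Using the vanishing mean noted above, one may rewrite
$$ c_{q,\ell,L} = \frac{1}{L}\operatorname{Cov}\paren{\maxl\{\vN\},\ \sum_{i=1}^q\binom{N_i}{2}} \quad\text{where } \vN\sim\mathrm{Mult}(L,U_q). $$
Both statistics are symmetric and Schur-convex in $\vN$, i.e., monotone non-decreasing in the majorisation order, so heuristically one expects a positive covariance. Formalising this requires a coupling/rearrangement argument: two i.i.d.\ copies $\vN,\vN'$ that are comparable in majorisation contribute non-negatively to $\tfrac12\mathbb{E}[(f(\vN)-f(\vN'))(g(\vN)-g(\vN'))]$, and a careful argument (e.g., reducing to sorted coordinates and invoking a Chebyshev-type rearrangement, extending the $q=2$ treatment of \cite{abp-2018}) is needed to show that incomparable pairs do not spoil positivity and that comparable pairs occur with positive probability. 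I expect this covariance-positivity step to be the main technical obstacle; by contrast, the counting and $m^{-1}$-expansion above are essentially mechanical.
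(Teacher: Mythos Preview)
Your construction, column-counting, and $m^{-1}$-expansion match the paper exactly, and your covariance rewriting of $c_{q,\ell,L}$ is correct and illuminating. The gap is precisely where you flag it: you do not actually prove $c_{q,\ell,L}>0$. Schur convexity of both $\maxl\{\vN\}$ and $\sum_i\binom{N_i}{2}$ is not enough on its own, since majorisation is only a partial order and two Schur-convex symmetric statistics need not have positive covariance in general; the ``incomparable pairs do not spoil positivity'' step you describe is not obviously salvageable without an FKG-type lattice condition that is not readily available here.

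The paper bypasses this obstacle with an elementary reindexing trick that you may find cleaner. Using the identity $\binom{L}{\va}\binom{a_i}{2}/\binom{L}{2}=\binom{L-2}{\va-2\ve_i}$ one rewrites
\[
\sum_{\va\in\cA_{q,L}}\maxl\{\va\}\,\binom{L}{\va}\sum_{i=1}^q\frac{\binom{a_i}{2}}{\binom{L}{2}}
=\sum_{\va\in\cA_{q,L-2}}\binom{L-2}{\va}\sum_{i=1}^q\maxl\{\va+2\ve_i\},
\]
and similarly, by peeling off two draws,
\[
\frac{1}{q}\sum_{\va\in\cA_{q,L}}\maxl\{\va\}\,\binom{L}{\va}
=\frac{1}{q}\sum_{\va\in\cA_{q,L-2}}\binom{L-2}{\va}\sum_{i,j=1}^q\maxl\{\va+\ve_i+\ve_j\}.
\]
Positivity of $c_{q,\ell,L}$ then reduces to the \emph{pointwise} inequality
\[
\maxl\{\va+2\ve_i\}+\maxl\{\va+2\ve_j\}\ \ge\ 2\,\maxl\{\va+\ve_i+\ve_j\},
\]
which holds because $\maxl$ (the sum of the $\ell$ largest coordinates) is convex on $\bbR^q$ --- it is a maximum of linear forms over $\binom{[q]}{\ell}$ --- with strict inequality for at least one $(\va,i,j)$ whenever $L>\ell$. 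In your covariance language this is exactly a ``condition on the first $L-2$ draws'' decomposition, but carried out so that every term is manifestly non-negative; it avoids any global rearrangement or coupling argument.
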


\begin{proof}
Let $ m\in\bbZ_{\ge1} $ be sufficiently large. 
Consdier the following codebook $\cC$ of size $ M\times n $ where $ M = qm $ and $n$ given in \Cref{eqn:blocklength}. 
This codebook $\cC$ as an $M$-by-$n$ matrix consists of all possible length-$qm$ vectors with $ m $ ones, $m$ twos, ..., and $m$ $q$'s as its columns. 
Each row forms a codeword. 
Recall that $ \rad_\ell \ge \ol{\rad}_\ell $. 
Therefore, to show list-decodability, it suffices to lower bound $ \ol{\rad}_\ell $. 
By symmetry, 
$ \ol{\rad}_\ell(\cL) $ is independent of the choice of $ \cL\in\binom{\cC}{L} $, so it is equivalent to compute $ \ol{\rad}_\ell(\cL) $ averaged over $ \cL\in\binom{\cC}{L} $. 
Recall from \Cref{eqn:def-avgrad-plur} that $ \ol{\rad}_\ell(\cL) $ can be decomposed as the sum of average radii of each column of $ \cL $ (viewed as an $L$-by-$n$ matrix). 
By symmetry, averaged over $\cL$, the average radius of each column is the same which is equal to 
\begin{align}
1 - \frac{1}{L} \expt{\plur_\ell(X_1,\cdots,X_L)} , \notag 
\end{align}
where $ (X_1,\cdots,X_L) $ is a uniformly random $L$-sub(multi)set of 
\begin{align}
(\LaTeXunderbrace{1,\cdots,1}_{m},\LaTeXunderbrace{2,\cdots,2}_{m},\cdots,\LaTeXunderbrace{q,\cdots,q}_{m}). \label{eqn:constr_randomness} 
\end{align}

For $ a_1, \cdots, a_{q-1} \in\bbZ_{\ge0}^{q-1} $ such that $ a_1 + \cdots + a_{q-1}\le L $, 
let
\begin{align}
    \binom{L}{a_1, \cdots, a_{q-1}, \star} &\coloneqq \binom{L}{a_1, \cdots, a_{q-1}, L-\sum_{i = 1}^{q-1} a_i} . \notag 
\end{align}
Now let us compute
\begin{align}
    &\phantom{=}~ \frac{1}{L} \expt{\pl_\ell(X_1, \cdots, X_L)} \notag \\
    &= \frac{1}{L} \sum_{\substack{(a_1, \cdots, a_\ell)\in\bbZ^\ell \\ \forall i\in[\ell], \ceil{\ell L/q}\le a_i\le L}} \paren{\sum_{i = 1}^\ell a_i} \cdot \frac{\binom{q}{\ell}}{\binom{qm}{m, \cdots, m}} \cdot \binom{L}{a_1, \cdots, a_\ell, \star} \binom{qm - L}{m - a_1, \cdots, m-a_\ell, \star} \notag \\
    &\phantom{=}~ \times \sum_{\substack{(a_{\ell+1}, \cdots, a_q)\in\bbZ^{q - \ell} \\ \forall\ell+1\le i\le q, 0\le a_i\le \min\{a_1, \cdots, a_\ell\} \\ a_{\ell+1} + \cdots + a_q = L - (a_1 + \cdots + a_\ell)}} \binom{L - (a_1 + \cdots + a_\ell)}{a_{\ell+1}, \cdots, a_q} \binom{am - L - (m-a_1) - \cdots - (m-a_\ell)}{m-a_{\ell+1}, \cdots, m-a_q} \notag \\
    &= \sum_{\substack{(a_1, \cdots, a_\ell)\in\bbZ^\ell \\ \forall i\in[\ell], \ceil{\ell L/q}\le a_i\le L}} 
    \sum_{\substack{(a_{\ell+1}, \cdots, a_q)\in\bbZ^{q - \ell} \\ \forall\ell+1\le i\le q, 0\le a_i\le \min\{a_1, \cdots, a_\ell\} \\ a_{\ell+1} + \cdots + a_q = L - (a_1 + \cdots + a_\ell)}}
    \binom{q}{\ell} \frac{\sum_{i = 1}^\ell a_i}{L} 
    \cdot \binom{L}{a_1, \cdots, a_\ell, \star} 
    \binom{L - (a_1 + \cdots + a_\ell)}{a_{\ell+1}, \cdots, a_q} 
    \notag \\
    &\phantom{=}~ \times \binom{qm - L}{m - a_1, \cdots, m-a_\ell, \star} 
    \binom{am - L - (m-a_1) - \cdots - (m-a_\ell)}{m-a_{\ell+1}, \cdots, m-a_q} 
    \cdot \binom{qm}{m, \cdots, m}^{-1} \notag \\ 
    &= \sum_{\substack{(a_1, \cdots, a_q)\in\bbZ_{\ge0}^q \\ a_1 + \cdots + a_q = L}} \frac{\maxl\brace{a_1, \cdots, a_q}}{L} \binom{L}{a_1, \cdots, a_q} \binom{qm - L}{m - a_1, \cdots, m - a_q} \binom{qm}{m, \cdots, m}^{-1} \notag 
\end{align}
Taking the Taylor expansion at $ m\to\infty $, it can be computed that
\begin{align}
\binom{qm - L}{m -  a_1,\cdots,m -  a_{q}} \binom{qm}{m,\cdots,m}^{-1} 
&= \frac{(qm - L)!}{(m- a_1)!\cdots(m -  a_q)!}\frac{m!\cdots m!}{(qm)!} \notag \\
&= \frac{\prod_{i_1 = 0}^{ a_1 - 1}(m - i_1)\prod_{i_2 = 0}^{ a_2 - 1}(m - i_2)\cdots\prod_{i_q = 0}^{ a_q - 1}(m - i_q)}{(qm)(qm - 1)\cdots(qm - L+1)} \notag \\
&= \frac{\prod_{i_1 = 1}^{ a_1 - 1}(1 - i_1m^{-1})\prod_{i_2 = 1}^{ a_2 - 1}(1 - i_2m^{-1})\cdots\prod_{i_q = 1}^{ a_q - 1}(1 - i_qm^{-1})}{q(q - m^{-1})\cdots(q - (L-1)m^{-1})} \notag \\
&= q^{-L}\brack{1 + \frac{1}{m}\paren{\frac{1}{q}\binom{L}{2} - \sum_{i = 1}^q\binom{ a_i}{2}} + O\paren{\frac{1}{m^2}}}. \notag 
\end{align}
Recall 
\begin{align*}
    f_{q,L,\ell}(P) &\coloneqq \exptover{(X_1,\cdots,X_L)\sim P^{\ot L}}{\plur_\ell(X_1,\cdots,X_L)} .
\end{align*}
and $p_*(q,\ell,L)=1-f_{q,L,\ell}(U_q)$.
Therefore, 
\begin{align}
    &\phantom{=}~ \frac{1}{L} \expt{\pl_\ell(X_1, \cdots, X_L)} \notag \\
    &= q^{-L} \sum_{\substack{(a_1, \cdots, a_q)\in\bbZ_{\ge0}^q \\ a_1 + \cdots + a_q = L}} \frac{\maxl\brace{a_1, \cdots, a_q}}{L} \binom{L}{a_1, \cdots, a_q} \brack{1 + \frac{1}{m}\paren{\frac{1}{q}\binom{L}{2} - \sum_{i = 1}^q\binom{ a_i}{2}} + O\paren{\frac{1}{m^2}}} \notag \\
    &= \frac{1}{L} f_{q,L,\ell}(U_q) - \frac{1}{m} q^{-L} \sum_{\va\in\cA_{q,L}} \frac{\maxl\brace{\va}}{L} \binom{L}{\va} \brack{\sum_{i = 1}^q\binom{a_i}{2} - \frac{1}{q}\binom{L}{2}} + O(m^{-2}) \notag \\
    &= L^{-1} f_{q,L,\ell}(U_q) - c_{q,\ell,L} m^{-1} + O(m^{-2}) . \notag 
\end{align}
Then we have 
\begin{align}
1 - L^{-1} f_{q,L,\ell}(U_q) + c_{q,\ell,L} m^{-1}  + O(m^{-2}) 
= p_*(q,\ell,L) + c_{q,\ell,L} m^{-1}  + O(m^{-2}) , \notag 
\end{align}
To complete the proof, it remains to verify that $c_{q,\ell,L}$ is always positive.
This is equivalent to showing
$$
\frac{L c_{q,L}}{q^L\binom{L}{2}}=\sum_{\va\in\cA_{q,L}} \maxl\brace{\va} \binom{L}{\va} \left(\sum_{i = 1}^q \frac{\binom{a_i}{2}}{\binom{L}{2}} - \frac{1}{q}\right)> 0 
$$
If $L=2$, we have
$$
\sum_{\va\in\cA_{q,L}} \maxl\brace{\va} \binom{2}{\va} \left(\sum_{i = 1}^q \binom{a_i}{2} - \frac{1}{q}\right)=q(q-1)\times\paren{-\frac{1}{q}}+2q\times\paren{1-\frac{1}{q}}>0
$$
In what follows, we assume $L>2$.
We note that  $\binom{L}{\va}\binom{a_i}{2}/\binom{L}{2}=\binom{L-2}{\va-2\ve_i}$ where 
$$\ve_i=(\LaTeXunderbrace{0,\cdots,0}_{i-1},1,\LaTeXunderbrace{0,\cdots,0}_{q-i-1}).$$ 
We abuse the notation by letting $\binom{L-2}{\va-2\ve_i}=0$ if $a_i=0,1$. 

\begin{align}
\sum_{\va\in\cA_{q,L}} \maxl\brace{\va} \binom{L}{\va} \sum_{i = 1}^q \frac{\binom{a_i}{2}}{\binom{L}{2}} &= \sum_{\va\in\cA_{q,L}} \maxl\brace{\va} \sum_{i = 1}^q \binom{L-2}{\va-2\ve_i} \notag \\
&= \sum_{\va\in\cA_{q,L-2}}  \sum_{i = 1}^q \maxl\brace{\va+2\ve_i}\binom{L-2}{\va} . \notag
\end{align}
On the other hand, we have 
$$
\frac{1}{q}\sum_{\va\in\cA_{q,L}} \maxl\brace{\va} \binom{L}{\va}=\frac{1}{q}\sum_{\va\in\cA_{q,L-2}}\sum_{i,j=1}^q \maxl\brace{\va+\ve_i+\ve_j}\binom{L-2}{\va}.
$$
This is because we can separate the $L$ symbols into two sets the first set containing $L-2$ symbols and the second one containing $2$ symbols. $\binom{L-2}{\va}$ represents the number of ways to select $L-2$ symbols from $[q]$ so that it produces $\va$. Then, we can pick the last two symbols from $[q]$ in an arbitrary manner which results in $\va+\be_i+\be_j$. 
Note that $\maxl\brace{\va+2\ve_i}+\maxl\brace{\va+2\ve_j}\geq 2\maxl\brace{\va+\ve_i+\ve_j}$.
Since the equality does not hold for every $\va\in \cA_{q,L-2}$ if $L>\ell$, we conclude $c_{q,\ell,L}>0$. 
\end{proof}

\section{Conclusion} \label{sec:conclusion}
We end the paper with a few concluding remarks and open questions. 
\begin{itemize}
    \item Due to the use of hypergraph Ramsey's theorem on page \pageref{ramsey-listrec}, our upper bound in \Cref{thm:list-rec-main} is valid only for very small $\eps$. 
    The question of determining the optimal code size for any $0<\eps\le 1 - p_*(q, \ell, L)$ remains open. 

    \item Though our upper and lower bounds match in terms of the order of $ 1/\eps $, the hidden constants (in particular their dependence on $ q, \ell, L $) are rather different. 
    Our construction gives an explicit constant $ c_{q, \ell, L} $ (see \Cref{eqn:const_constr}). 
    It is possible that such codes have optimal size as a function of the gap-to-zero-rate-threshold even in terms of the pre-factor. 
    However, we are not sufficiently confident to make this a conjecture. 
    On the other hand, the constant in our upper bound is implicit and is not expected to be close to the lower bound even if made explicit. 
    Studying the leading coefficient of the maximal size of zero-rate codes requires additional ideas. 

     \item It would be interesting to see how techniques developed in this work can be used to study zero-rate codes under other metrics such as the Lee metric \cite{TB_lee}, $ \ell_1 $ metric \cite{TB11,TB12} and others for which the zero-rate threshold can be determined by the double counting argument \cite{ahlswede-blinovsky-metric}. 
\end{itemize}

\bibliographystyle{alpha}
\bibliography{ref} 

\appendix
\section{Auxiliary lemmas} \label{app:aux}
\begin{proposition}[Basic feasible solution, {\cite[Section 4.2]{gartner-matousek-LP-book}}]
\label{prop:basic_feasible_sol}
For given $ \vc\in\bbR^n, \vb\in\bbR^m, A\in\bbR^{m\times n} $, consider a linear program $\mathsf{LP}$ in equational form (without loss of generality): 
\begin{align}
    \textnormal{maximize} \quad \inprod{\vc}{\vx} , \qquad 
    \textnormal{subject to} \quad A \vx = \vb , \vx \ge 0 , \notag 
\end{align}
where the last constraint means that $\vx$ is element-wise non-negative. 
Suppose without loss of generality that $ m \le n $ and $ \rk(A) = m $. 
Then there exists at least one solution $ \vx^*\in\bbR^n $, known as a \emph{basic feasible solution}, with at most $m$ nonzero elements (i.e., at least $n-m$ zeros). 
Furthermore $ \vx^* $ is determined only by $ (A, \vb) $, independent of $ \vc $. 
If $\mathsf{LP}$ has an optimal solution then it has an optimal basic feasible solution. 
\end{proposition}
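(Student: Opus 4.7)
The plan is a support-shrinking argument: starting from any feasible $\vx$, I iteratively reduce the number of its nonzero coordinates until the columns of $A$ indexed by the support become linearly independent, which forces the support to have size at most $\rk(A) = m$.

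First, take any feasible $\vx \geq \zero$ with $A\vx = \vb$, and let $S = \{i \in [n] : x_i > 0\}$. If the columns $(A_{\cdot, i})_{i \in S}$ are linearly independent, then $|S| \leq m$ and $\vx$ is already a basic feasible solution. Otherwise, there exists a nonzero vector $\vd \in \bbR^n$ with $d_i = 0$ for $i \notin S$ and $A\vd = \zero$. For any $t \in \bbR$, the perturbation $\vx + t\vd$ still satisfies $A(\vx + t\vd) = \vb$. Since $x_i > 0$ on $S$ and $\vd$ is supported on $S$, there exists a closed interval $[t^-, t^+] \ni 0$ (with at least one finite endpoint because $\vd \neq \zero$) on which $\vx + t\vd \geq \zero$. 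At such a finite endpoint, at least one coordinate in $S$ hits zero, yielding a new feasible $\vx'$ whose support is a strict subset of $S$. Iterating at most $|S|$ times produces a basic feasible solution, and this construction never uses $\vc$, so the resulting $\vx^*$ depends only on $(A,\vb)$.

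For the claim that an LP with an optimal solution admits an optimal basic feasible solution, I refine the same argument. Suppose $\vx$ is optimal. In the support-reduction step, if $\langle \vc, \vd \rangle > 0$, then moving in the $+\vd$ direction until some coordinate vanishes strictly increases the objective while preserving feasibility, contradicting optimality of $\vx$; symmetrically if $\langle \vc, \vd \rangle < 0$. Hence $\langle \vc, \vd \rangle = 0$ and the reduced iterate $\vx'$ has the same (optimal) objective value. Iterating preserves optimality and yields an optimal basic feasible solution.

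The main obstacle is purely bookkeeping: selecting the sign of $\vd$ and the endpoint of the interval $[t^-, t^+]$ so that at least one coordinate actually reaches zero while the iterate remains in $\bbR_{\geq 0}^n$ (when all nonzero entries of $\vd$ share one sign, one must step in the opposite direction; in the mixed-sign case either direction works). No conceptually deeper issue arises, since the whole proof is an elementary walk along the edges of the polyhedron $\{\vx \geq \zero : A\vx = \vb\}$ toward one of its vertices.
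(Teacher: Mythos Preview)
The paper does not supply its own proof of this proposition; it is stated in the appendix as an auxiliary fact with a citation to \cite[Section~4.2]{gartner-matousek-LP-book}. Your support-shrinking argument is exactly the standard textbook proof one finds in that reference, and it is correct as written (the one implicit case you do not spell out in the optimality paragraph, where $\langle \vc, \vd\rangle \ne 0$ but the improving direction is unbounded, still yields a contradiction since the LP would then have no optimal solution).
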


\end{document}